\newcommand{\AlgReturn}[1]{ \textbf{return} #1}
\newcommand{\chan}{\mathcal{C}}
\DeclareMathOperator{\fout}{\mathsf{out}}
\DeclareMathOperator{\Fun}{\mathsf{Fun}}
\DeclareMathOperator{\Dist}{\mathsf{Dist}}
\DeclareMathOperator{\Eval}{\mathsf{Eval}}
\DeclareMathOperator{\fin}{\mathsf{in}}
\DeclareMathOperator{\RejSam}{\mathsf{RejSam}}
\DeclareMathOperator{\Gen}{\mathsf{Gen}}
\DeclareMathOperator{\AGen}{\mathsf{AGen}}
\DeclareMathOperator{\SGen}{\mathsf{SGen}}
\DeclareMathOperator{\SEnc}{\mathsf{SEnc}}
\DeclareMathOperator{\Steg}{\mathsf{S}}
\DeclareMathOperator{\negl}{\mathsf{negl}}
\DeclareMathOperator{\Ward}{\mathsf{Ward}}
\DeclareMathOperator{\Watch}{\mathsf{Watch}}
\DeclareMathOperator{\encwatch}{enc-watch}
\DeclareMathOperator{\ml}{\mathsf{ml}}
\DeclareMathOperator{\query}{\mathsf{query}}
\DeclareMathOperator{\sigl}{\mathsf{sl}}
\DeclareMathOperator{\outl}{\mathsf{ol}}
\DeclareMathOperator{\cl}{\mathsf{cl}}
\DeclareMathOperator{\CH}{CH}
\DeclareMathOperator{\ASAE}{\mathsf{ASA}}
\DeclareMathOperator{\ASA}{\mathsf{ASA}}
\DeclareMathOperator{\SES}{\mathsf{SES}}
\DeclareMathOperator{\SIG}{\mathsf{SIG}}
\newcommand{\combine}[2]{#1\!.\!#2}
\DeclareMathOperator{\RAND}{RAND}
\DeclareMathOperator{\SDec}{\mathsf{SDec}}
\newcommand{\nats}{\mathbb{N}}
\DeclareMathOperator{\unrel}{\mathbf{UnRel}}
\DeclareMathOperator{\insec}{\mathbf{InSec}}
\DeclareMathOperator{\InSec}{\mathbf{InSec}}
\DeclareMathOperator{\adv}{\mathbf{Adv}}
\DeclareMathOperator{\Adv}{\mathbf{Adv}}
\DeclareMathOperator{\cha}{cha}
\DeclareMathOperator{\cpa}{cpa}
\DeclareMathOperator{\sig}{sig}
\DeclareMathOperator{\supp}{supp}
\DeclareMathOperator{\Sign}{\mathsf{Sign}}
\DeclareMathOperator{\Att}{\mathsf{A}}
\DeclareMathOperator{\AExt}{\mathsf{AExt}}
\DeclareMathOperator{\Ext}{\mathsf{Ext}}
\DeclareMathOperator{\algf}{\mathsf{F}}
\DeclareMathOperator{\Vrfy}{\mathsf{Vrfy}}
\DeclareMathOperator{\Forg}{\mathsf{Fo}}
\DeclareMathOperator{\Enc}{\mathsf{Enc}}
\DeclareMathOperator{\AEnc}{\mathsf{AEnc}}
\DeclareMathOperator{\Dec}{\mathsf{Dec}}
\DeclareMathOperator{\Sig-Forge}{\mathsf{Sig-Forge}}
\DeclareMathOperator{\CPA-Dist}{\mathsf{CPA-Dist}}
\DeclareMathOperator{\SS-CHA-Dist}{\mathsf{SS-CHA-Dist}}
\DeclareMathOperator{\EncASA-Dist}{\mathsf{ASA-Dist}}
\DeclareMathOperator{\RASA-Dist}{\mathsf{RASA-Dist}}
\newcommand{\minent}{H_{\infty}}
\DeclareMathOperator{\R}{\mathsf{R}}
\DeclareMathOperator{\AR}{\mathsf{AR}}
\DeclareMathOperator{\prf}{prf}
\DeclareMathOperator{\Geni}{\mathsf{GenInput}}
\algrenewcommand\algorithmicrequire{\textbf{Parties:}} 
\algrenewcommand\algorithmicensure{\textbf{Input:}} 
\algrenewcommand\algorithmicindent{2em}%
\newcommand{\myAlgorithm}[5]{
  \begin{algorithm}[H]
  \caption{#1: #5}
  \begin{tcolorbox}[title={#1}, colframe=black!10, coltitle=black]
    \begin{algorithmic}[1]
      \Require #3
      \Ensure #2
      #4
    \end{algorithmic}
  \end{tcolorbox}
  \end{algorithm}
}
\newcommand{\inputAlgorithm}[4]{
  \begin{algorithm}[H]
  \caption{#4}
  \begin{tcolorbox}[title={#1}, colframe=black!10, coltitle=black]
    \begin{algorithmic}[1]
      \Ensure #2
      #3
    \end{algorithmic}
  \end{tcolorbox}
  \end{algorithm}
}
\newcommand{\mlAlgorithm}[4]{
  \begin{algorithm}[H]
  \caption{#4}
  \begin{tcolorbox}[title={#1}, colframe=black!10, coltitle=black]
    \begin{algorithmic}[1]
      \Require #2
      #3
    \end{algorithmic}
  \end{tcolorbox}
  \end{algorithm}
}
\newcommand{\pk}{\textit{pk}}
\newcommand{\sk}{\textit{sk}}
\newcommand{\ak}{\textit{ak}}
\newcommand{\am}{\textit{am}}
\newcommand{\ie}{i.\,e.\xspace}
\newcommand{\eg}{e.\,g.\xspace}
\newcommand{\doclength}{\combine{\chan\,}{\,n}}
\begin{document}

\title{Algorithm Substitution Attacks from a\\ Steganographic  Perspective}  
\author{Sebastian Berndt}
\affiliation{%
  \institution{University of L\"{u}beck}
  \streetaddress{Ratzeburger Allee 160}
  \city{L\"{u}beck} 
  \postcode{23562}
  \country{Germany}
}
\email{berndt@tcs.uni-luebeck.de}

\author{Maciej Li\'{s}kiewicz}
\affiliation{%
  \institution{University of L\"{u}beck}
  \streetaddress{Ratzeburger Allee 160}
  \city{L\"{u}beck} 
  \postcode{23562}
  \country{Germany}
}
\email{liskiewi@tcs.uni-luebeck.de}

\begin{abstract}
  The goal of an algorithm substitution attack (ASA), also called a
  subversion attack (SA), is to replace an honest implementation of a
  cryptographic tool by a subverted one which allows to leak private
  information while generating output indistinguishable from the honest
  output.  Bellare, Paterson, and Rogaway provided at CRYPTO '14 a
  formal security model to capture this kind of attacks and constructed
  practically implementable ASAs against a large class of \emph{symmetric
    encryption schemes}. At CCS'15, Ateniese, Magri, and Venturi extended
  this model to allow the attackers to work in a fully-adaptive and
  continuous fashion and proposed subversion attacks against
  \emph{digital signature schemes}. Both papers also showed
  the impossibility of ASAs in cases where the cryptographic tools are
  deterministic. Also at CCS'15, Bellare, Jaeger, and Kane strengthened
  the original model and proposed a universal ASA against sufficiently
  random encryption schemes.  In this paper we analyze ASAs from the
  perspective of steganography -- the well known concept of hiding the
  presence of secret messages in legal communications. While a close
  connection between ASAs and steganography is known, this lacks a
  rigorous treatment. We consider the common computational model for
  secret-key steganography and prove that successful ASAs correspond to
  secure stegosystems on certain channels and vice versa. This formal
  proof allows us to conclude that ASAs are stegosystems and to
  ``rediscover'' several results concerning ASAs known in the
  steganographic literature.
\end{abstract}



\keywords{algorithm substitution attack; subversion attack; steganography; symmetric encryption scheme; digital signature}

\maketitle

\section{Introduction}

The publication of secret internal documents of the NSA by Edward Snowden
(see \eg \cite{ball2013revealed, greenwald2014no,  perlroth2013nsa}) 
allowed the cryptographic community a unique insight into some well-kept
secrets of one of the world's largest security agency.
Two conclusions may be drawn from these reveals:

\begin{itemize}
\item On the one hand, even a large organization such as the NSA seems not to
  be able to break well established implementations of cryptographic
  primitives such as RSA~or~AES\@.
\item On the other hand, the documents clearly show that the NSA
  develops methods and techniques to \emph{circumvent} the well
  established security notions by \eg manipulating standardization
  processes (\eg issues surrounding the number generator \texttt{Dual\_EC\_DRBG} 
  \cite{checkoway2014dual, schneier2007did, shumow2007back})
    or reason about metadata. 
\end{itemize}
This confirms that the security guarantees provided by the cryptographic community 
are sound, but also indicates that some security definitions are too narrow to evade all 
possible attacks, including (non-)intentional improper handling of 
theoretically sound cryptographic protocols. 
A very realistic attack which goes beyond the common framework 
is a modification of an appropriate implementation
of a secure protocol.
The modified implementation should remain indistinguishable from
a truthful one and its aim is to allow leakage of secret information 
during subsequent runs of the subverted protocol.
Attacks of this kind are known in the literature 
\cite{young1997kleptography,young1996dark,bellare2015asa,bellare2014asa,ateniese2015sig,russell2016cliptography}
and an overview 
on this topic 
is given in the current survey
\cite{schneier2015survey} by \citeauthor{schneier2015survey}

A powerful class of such attacks that we will focus on -- coined
\emph{secretly embedded trapdoor with universal
          protection (SETUP) attacks} -- was presented over twenty years ago by
Young and Yung in the \emph{kleptographic} model framework
\cite{young1996dark,young1997kleptography}. The model is
meant to capture a situation where an adversary (or “big brother” as we
shall occasionally say) has the opportunity to implement (and, indeed,
“mis-implement” or subvert) a basic cryptographic tool. 
The difficulty in detecting such an attack is based on the 
hardness of program verification.
By using  \emph{closed source} software, the user 
must trust the developers
that their implementation of cryptographic primitives is truthful and does not
contain any backdoors. This is especially true for hardware-based
cryptography \cite{bellare2014asa}. But it is difficult to verify this property.
Even if the software is \emph{open source} -- the source code is publicly available -- the
sheer complexity of cryptographic implementations allows only  very
specialized experts to be able to judge these
implementations. Two of the most prominent bugs of the widely spread
cryptographic library
\texttt{OpenSSL}\footnote{\url{https://www.openssl.org/}} -- the
\emph{Heartbleed bug} and Debian's faulty implementation of the pseudorandom
number generator -- remained undiscovered for more than two years
\cite{schneier2015survey}.

Inspired by Snowden's reveals, the recent developments  
reignited the interest in these kind of attacks. 
\citeauthor{bellare2014asa} named them \emph{algorithm
substitution attacks (ASA)} and
showed several attacks on certain symmetric encryption schemes
\cite{bellare2014asa}. Note that they defined a very weak 
model, where the only goal of the attacker was to distinguish between
two ciphertexts, but mostly used a stronger scenario with the aim to
recover the encryption key. \citeauthor{degabriele2015asa} criticized the model
of~\cite{bellare2014asa} by pointing out the results 
crucially rely on the fact that a subverted encryption algorithm always needs to
produce valid ciphertexts (the \emph{decryptability assumption}) and
proposed a refined security notion \cite{degabriele2015asa}. The model
of algorithm substitution attacks introduced in 
\cite{bellare2014asa}
was extended to signature schemes
by \citeauthor{ateniese2015sig} in
\cite{ateniese2015sig}. Simultaneously, \citeauthor{bellare2015asa}
\cite{bellare2015asa} 
strengthened the result of \cite{bellare2014asa} by enforcing that the
attack needs to be stateless.

In this paper we thoroughly analyze (general) ASAs from the \emph{steganographic} 
point of view. 
The principle goal of steganography is to hide information in
unsuspicious communication such that no observer can
distinguish between normal documents and documents that carry additional information.
Modern steganography was first made popular due to the prisoners’
problem by Simmons \cite{simmons1984prisoners} but, interestingly,
the model was inspired by detecting the risk of ASAs 
during  development of the SALT2 treaty
between the Soviet Union
and the United States in the late seventies \cite{simmons1998history}.
This sheds some light on the inherent relationship between these two frameworks
which is well known in the literature
(see \eg \cite{young1996dark,young1997kleptography,russell2016destroying}). A related result showing that so called \emph{decoy password
  vaults} are very closely related to stegosystems
on a certain kind of channels was presented by \citeauthor{pasquini2017decoy} in
\cite{pasquini2017decoy}.

Our main achievement is providing a strict relationship between secure 
algorithm substitution attacks and the common computational model for secret-key steganography.
Particularly, we prove
that successful ASAs correspond to secure stegosystems on certain
channels and vice versa. This formal proof allows us to conclude
that ASAs are stegosystems and to ``rediscover'' results of 
\cite{bellare2014asa,bellare2015asa,ateniese2015sig}
concerning ASAs. 

The computational model for steganography used in this paper 
was first presented
by Hopper, Langford, and von Ahn
\cite{hopper2002provably,hopper2009provably} and independently 
proposed 
by Katzenbeisser and Petitcolas \cite{katzenbeisser2002defining}. A
\emph{stegosystem} consists of an encoder and a decoder sharing a
key. The encoder's goal is to \emph{embed} a secret message into a
sequence of documents which are send via a public \emph{communication
  channel} $\chan$ monitored by an adversary (often called the
\emph{warden} due to the prisoners problem of
Simmons~\cite{simmons1984prisoners}). The warden wants to distinguish
documents that carry no secret information from those sent by the
encoder. If all polynomial-time (in the security parameter $\kappa$)
wardens fail to distinguish these cases,
we say that the stegosystem is \emph{secure}. If the decoder is able to
reconstruct the secret message from the sequence send by the encoder,
the system is called \emph{reliable}.

\subsection*{Our Results}
We first investigate 
algorithm substitution attacks against symmetric encryption schemes
in the framework by \citeauthor{bellare2015asa} \cite{bellare2015asa}.
We model encryption schemes as steganographic channels in appropriate way which  
allows to relate algorithm substitution attacks with steganographic systems and vice versa.
This leads to the following result. 

\begin{theorem}[Informal]\label{thm:stego:asa:iff:on:ses}
Assume that $\SES$ is a symmetric encryption scheme. Then  
there exists an indistinguishable and reliable algorithm substitution
attack against $\SES$ if and only if there exists a secure and reliable 
stegosystem on the channel determined by $\SES$. 
\end{theorem}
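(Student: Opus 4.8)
The plan is to make the channel $\chan_{\SES}$ determined by $\SES$ the bridge and then establish each direction by a direct, simulation-style reduction. Recall that a document of $\chan_{\SES}$ is a ciphertext, that a fresh key $k\leftarrow\SGen$ is drawn at the start of a session and kept in the channel's state, and that, given the history of already-sent ciphertexts together with the plaintexts they encrypt, the next document is distributed exactly as $\SEnc(k,m)$ for the next plaintext $m$. The point that drives both reductions is that drawing a sample from $\chan_{\SES}$ is nothing but executing the honest scheme --- run $\SGen$ once and then $\SEnc(k,\cdot)$ --- so any party holding $k$ can perfectly emulate the channel oracle, and, conversely, anything undetectable relative to $\chan_{\SES}$ consists of genuine ciphertexts. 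For concreteness I take the stegosystem's message space to be the key space of $\SES$; a single hidden message then plays the role of the secret key that the attack leaks.

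\emph{Direction ``stegosystem $\Rightarrow$ ASA''.} Given a secure and reliable stegosystem on $\chan_{\SES}$, define the subverted algorithm $\AEnc$ on $(\ak,k,m)$ (and state) by running the stego-encoder with shared key $\ak$, hidden message $k$, history the transcript of plaintexts/ciphertexts so far, and with the encoder's channel queries answered by honest invocations of $\SEnc(k,\cdot)$; set $\AExt(\ak,\cdot)$ to be the stego-decoder applied to the transcript of ciphertexts. Decryptability is immediate because every output of the stego-encoder lies in the support of $\chan_{\SES}$, hence is a valid ciphertext of the intended plaintext. A distinguisher for the ASA (honest $\SEnc(k,\cdot)$ versus $\AEnc(\ak,k,\cdot)$) is literally a warden distinguishing $\chan_{\SES}$ from the stego-encoder, so ASA-indistinguishability follows from stego-security; and ASA key-recovery is stego-reliability for the hidden message $k$. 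If one starts from a bit-message stegosystem, the key is leaked bit by bit with the usual polynomial loss.

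\emph{Direction ``ASA $\Rightarrow$ stegosystem''.} Given an indistinguishable and reliable ASA $(\AGen,\AEnc,\AExt)$ against $\SES$, build a stegosystem on $\chan_{\SES}$ as follows: to embed $m^{*}$, the encoder starts a session, treats $m^{*}$ as the scheme's secret key, and outputs the ciphertexts produced by $\AEnc(\ak,m^{*},\cdot)$ on the requested plaintexts, using the ASA key $\ak$ as the stego key; the decoder runs $\AExt(\ak,\cdot)$ on the emitted ciphertexts. Reliability is ASA key-recovery with $m^{*}$ in place of the key. For security one compares the encoder's output with $\chan_{\SES}$ in two steps: $\AEnc(\ak,m^{*},\cdot)$ is indistinguishable from honest $\SEnc(m^{*},\cdot)$ by ASA-indistinguishability, and honest encryptions under $m^{*}$ are indistinguishable from honest encryptions under a fresh $k\leftarrow\SGen$ --- which holds when $\SGen$ produces (near-)uniform keys, and in general needs $\SES$ to hide its key. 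Again the identity ``channel sample $=$ honest encryption'' is what lets the encoder realise $\chan_{\SES}$-consistent behaviour.

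I expect the real work to be in the modelling rather than in any inequality: one must define $\chan_{\SES}$ so that it is simultaneously (i) efficiently sampleable by running $\SES$ honestly, (ii) faithful to the stateful, fully adaptive deployment in which a continuous ASA operates (same key across a session, adversarially chosen plaintexts), and (iii) compatible with the stegosystem's clean split into a public channel and a secret key --- the conceptual heart being the observation that the secret key that is leaked is exactly the message that is steganographically embedded. Around this, the technical care needed is routine but necessary: one should use support-respecting (\eg rejection-sampling-type) stego-encoders so that decryptability is preserved in the forward direction; one must propagate the security parameter, the message length, and the $\negl$-terms through the two reductions; and it is worth noting that the min-entropy of $\chan_{\SES}$, \ie how much randomness $\SEnc(k,\cdot)$ injects, is what controls existence on both sides, so the ``if and only if'' lines up with the known facts that channels of sufficient min-entropy admit secure stegosystems and that sufficiently random encryption schemes admit ASAs.
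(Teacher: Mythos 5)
Your forward direction (stegosystem $\Rightarrow$ ASA) matches the paper's construction in spirit, but your reverse direction contains a genuine gap. You build the stegoencoder by treating the hidden message $m^{*}$ as the encryption key of $\SES$ and then argue security via two hybrids: (i) $\AEnc(\ak,m^{*},\cdot)\approx\combine{\SES}{\Enc}(m^{*},\cdot)$ by ASA-indistinguishability, and (ii) $\combine{\SES}{\Enc}(m^{*},\cdot)\approx\combine{\SES}{\Enc}(k,\cdot)$ for a fresh $k\gets\combine{\SES}{\Gen}$. Step (ii) is not implied by anything in the hypotheses, and it is false in general: in the $\SS-CHA-Dist$ game the warden \emph{chooses and knows} $m^{*}$, so it can simply decrypt the received documents under $m^{*}$ and check consistency with the plaintexts in the history; stego output always passes this test while channel output (encrypted under an independent random key) typically does not. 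ASA-indistinguishability only relates $\AEnc(\ak,\am,k,\cdot)$ to $\combine{\SES}{\Enc}(k,\cdot)$ for the \emph{same} $k$; it says nothing about encryptions under two different keys, and no "key-hiding" property can rescue this when the candidate key is known to the distinguisher. The paper avoids the problem by a different modelling choice: the ASA's embedded payload is a separate attacker message $\am$ (setting $\am:=k$ is only a special case), and the channel $\chan^{\kappa}_{\SES}(\ell)$ is defined so that the encryption key and the plaintexts are themselves the first documents of the history. The stegoencoder then honestly samples $k\gets\combine{\SES}{\Gen}(1^{\kappa})$ and outputs it as its first document, outputs random plaintexts next, and only the subsequent ciphertext documents carry $\am$ via $\AEnc$; both worlds use an honestly generated key, so ASA-indistinguishability transfers directly with $\insec^{\cha}_{\Steg,\chan}\le\insec^{\encwatch}_{\ASAE,\SES}$ and no extra assumption on $\SES$. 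Your "key kept in the channel's state" channel is also not a legal object in the Hopper et al.\ formalism (channels are history-indexed distributions with no hidden state), and if you instead marginalize over keys consistent with the ciphertext history, your forward reduction breaks because the encoder's oracle answers $\Enc(k,\cdot)$ would no longer match the channel distribution; putting $k$ and the plaintexts into the history, as the paper does, is what makes both reductions sound.

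Two smaller points on the forward direction. First, reliability of the resulting ASA is not plain stego-reliability: the subverted encoder is invoked on adversarially varying plaintexts, so each ciphertext is produced under a different history (in the paper, $h=k\mid\mid m^{\ell}$ per call), which corresponds to restarting the stegoencoder; the paper therefore needs and gets $\unrel_{\ASAE,\SES}(\kappa)=\unrel^{\star}_{\Steg,\chan}(\kappa)$, i.e.\ \emph{reboot}-reliability of the stegosystem, which the rejection-sampling system supplies but which your sketch silently conflates with ordinary reliability. Second, your restriction of the hidden message to the key space is unnecessary in the forward direction and is precisely what forces the problematic hybrid in the reverse one; working with a general attacker message, as the paper does, makes the equivalence go through cleanly.
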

The proof of the theorem is constructive in the sense that  
we give an explicit construction of an algorithm substitution
attack against $\SES$ from a stegosystem and vice versa. 
As conclusion we provide a generic ASA against \emph{every} symmetric 
encryption scheme $\SES$ whose insecurity is negligible if, roughly speaking,  
$\SES$ has  sufficiently large min-entropy.
Our algorithm against $\SES$ achieves almost the same 
performance as the construction of \citeauthor{bellare2015asa} 
(see Theorem~4.1 and Theorem~4.2 in \cite{bellare2015asa} and also our discussion 
in Section~\ref{Sec:ASA:against:encrypted:as:stego}).

Next, we generalize our construction and show a generic algorithm 
substitution attack $\ASAE$ against any (polynomial-time) randomized algorithm $\R$
which, with hardwired secret $s$, takes inputs $x$ and generates outputs $y$.
Algorithm $\ASAE$, using a hidden hardwired random key $\ak$,
returns upon the secret $s$ the sequence $\tilde{y}_1,\tilde{y}_2,\ldots$
such that the output is indistinguishable from
$\R(s,x_1), \R(s,x_2),\ldots$  and  $\tilde{y}_1,\tilde{y}_2,\ldots$ 
embeds the secret $s$. From this result we  conclude:

\begin{theorem}[Informal]
  \label{thm:generic-attack}
  There exists a generic algorithm 
  substitution attack  $\ASA$ that allows an 
  undetectable subversion of any cryptographic primitive of
  sufficiently large min-entropy. 
\end{theorem}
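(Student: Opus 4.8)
The plan is to lift the equivalence of Theorem~\ref{thm:stego:asa:iff:on:ses} from symmetric encryption schemes to an arbitrary randomized primitive $\R$ and then instantiate the stegosystem side with a \emph{universal} construction. Write $\R(s,x)$ for the fresh-coin output of the primitive with hardwired secret $s$ on input $x$. The first step is to attach to $\R$ and $s$ the \emph{input-driven} channel $\chan_{\R,s}$ whose document, in a history $h$ recording the previous inputs and outputs, after a query $x$ is distributed as $\R(s,x)$ with fresh randomness; this generalizes the channel determined by $\SES$ used in Theorem~\ref{thm:stego:asa:iff:on:ses}. The key observation is that a subverted implementation of $\R$ has exactly the sampling access to $\chan_{\R,s}$ that a stegosystem encoder needs --- it may rerun $\R(s,x)$ on independent coins as often as it likes --- and that ``$\R$ has sufficiently large min-entropy'' is precisely the statement that $\minent(\R(s,x)\mid h)$ is $\omega(\log\kappa)$ uniformly in $s$, $x$ and every admissible $h$, i.e.\ that $\chan_{\R,s}$ is a channel of super-logarithmic min-entropy.

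Next I would take $\Steg=(\SEnc,\SDec)$, a stego encoder and decoder, to be the rejection-sampling stegosystem of Hopper, Langford and von~Ahn~\cite{hopper2002provably,hopper2009provably} keyed by a pseudorandom function --- the construction in the spirit of our ASA against symmetric encryption from Section~\ref{Sec:ASA:against:encrypted:as:stego} --- and recall that on \emph{any} channel of min-entropy $\omega(\log\kappa)$ it is reliable and secure, with unreliability and insecurity negligible in $\kappa$. The generic attack $\ASA$ is then the ``stegosystem $\Rightarrow$ ASA'' direction of Theorem~\ref{thm:stego:asa:iff:on:ses}: the subversion key $\ak$ is the stego key; on an invocation with leaked secret $s$ and input $x$ the subverted algorithm runs the per-document step of $\SEnc$ with hiddentext $s$, using calls to $\R(s,x)$ as its document oracle, and returns the produced document $\tilde y$; the extractor $\AExt$ runs $\SDec$ on $\tilde y_1,\tilde y_2,\dots$ and recovers $s$ after polynomially many invocations. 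Statelessness is obtained by the standard trick (as in \cite{bellare2015asa}) of deriving the bit position currently being embedded from the fresh randomness already present in each output via the PRF. Reliability of $\ASA$ is reliability of $\Steg$; indistinguishability of $\ASA$ against a watchdog reduces --- essentially verbatim from the proof of Theorem~\ref{thm:stego:asa:iff:on:ses}, after one hybrid that replaces the PRF by a truly random function --- to security of $\Steg$ against a warden, since in both worlds the watchdog sees only genuine samples of $\R(s,\cdot)$ and only the $\negl(\kappa)$ rejection-sampling bias could tell them apart.

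I expect the instantiation to be routine and the \emph{generalization} to carry the weight. The main obstacle is getting the channel abstraction right: because the environment supplies the inputs $x_1,x_2,\dots$ adaptively, $\chan_{\R,s}$ must be modelled as input-driven, and one has to check that the stego security game on such a channel lines up with the ASA-distinguishing game --- so that a warden can faithfully simulate the distinguisher's view while relaying its queries to the channel/challenge oracle --- and that the min-entropy hypothesis holds on every adaptively reachable history. Two lighter points remain: confirming that outputting genuine samples of $\R(s,x)$ makes functional correctness of the subverted primitive automatic (the analogue in this setting of the decryptability issue of \cite{bellare2014asa}), and tracking how the insecurity of $\Steg$ degrades with the per-document min-entropy and the number of invocations, so that $\omega(\log\kappa)$ min-entropy yields a negligible detection advantage over any polynomial number of runs, matching the bounds of \cite{bellare2015asa}.
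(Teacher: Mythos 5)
Your proposal matches the paper's own route: the paper likewise defines a channel family $\chan^{\kappa}_{\R}(\ell)$ determined by the randomized algorithm $\R$ (with the secret and the inputs forming the history) and then applies the ``steganography implies ASA'' direction of Theorem~\ref{thm:asa:on:ses:impl:stego}, instantiated with the stateless PRF-based rejection-sampling stegosystem $\RejSam^{\algf}$ of Theorem~\ref{thm:rejsam:secure}, which is exactly your construction including the random-index statelessness trick; this yields Theorem~\ref{thm:generic-attack:against:R}, of which Theorem~\ref{thm:generic-attack} is an immediate consequence. The only difference is cosmetic: you index the channel by $s$ and treat inputs as adaptive queries, whereas the paper folds $s$ and the $\Geni$-generated inputs into the channel history, but the reduction and the min-entropy hypothesis are the same.
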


\begin{theorem}[Informal]
  \label{thm:min-entropy}
  Let $\Pi$ be a cryptographic primitive consisting with algorithms
  $(\Pi.A_{1},\Pi.A_{2},\ldots,$ $ \Pi.A_{r})$ such that $\{A_{i}\mid i\in
  I\}$ for some $I\subseteq \{1,\ldots,r\}$ are deterministic. 
  Then there is no ASA on $\Pi$ which subverts only algorithms 
  $\{A_{i}\mid i\in I\}$.
\end{theorem}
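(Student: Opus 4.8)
The plan is to derive the impossibility directly, by showing that undetectability and secret recovery cannot hold together once every subverted algorithm is deterministic; this is the uniform analogue, for an arbitrary primitive $\Pi$, of the impossibility results of~\cite{bellare2014asa} for unique-ciphertext encryption and of~\cite{ateniese2015sig} for unique-signature schemes. The key point is that a subverted implementation $\widetilde{A_i}$ must still meet the functionality requirement of $\Pi$ (a subverted signer must output verifying signatures, a subverted encryptor decryptable ciphertexts, and so on -- the counterpart of the decryptability/verifiability conventions of~\cite{bellare2014asa,ateniese2015sig}); since $\Pi.A_i$ is deterministic, its honest output on a given input is the single prescribed value, so $\widetilde{A_i}$ has essentially no freedom. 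Whenever that honest output is also the \emph{only} functionally-valid one, $\widetilde{A_i}$ is forced to reproduce it exactly, is therefore functionally identical to the honest algorithm, and so leaks nothing beyond what an honest execution already leaks.

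To turn this into a reduction I would, depending on the detection model in force, proceed in one of two ways. If the detection experiment hands the distinguisher the honest secret consumed by the algorithms $\{A_i\mid i\in I\}$ (as the strong undetectability notions do; for deterministic signatures this means giving it the signing key, as in~\cite{ateniese2015sig}), I build a watchdog $W$ that relays its queries to the challenge oracle and, on each query $q$ to a subverted $A_i$, also evaluates the honest $\Pi.A_i$ on $q$ locally -- maintaining an honest copy of the state exactly as the observed transcript dictates -- and outputs ``subverted'' on the first disagreement. Against the honest oracle $W$ never disagrees, so its advantage is at least the probability of any in-game deviation, and undetectability forces this probability to be negligible. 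If instead the model is the weaker one, I use the argument of the previous paragraph in the unique-valid-output case to conclude, from the functionality requirement alone, that no deviation can occur at all. Either way $\widetilde{A_i}$ agrees with $\Pi.A_i$ except with negligible probability for every $i\in I$, the algorithms outside $I$ are untouched, and so by a routine hybrid the adversary's view under the subverted primitive is statistically close to its view under honest $\Pi$; hence the recovery algorithm does no better than it would against the honest implementation, which is exactly what it means for the ASA to fail.

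The same conclusion is immediate in the steganographic language of this paper. The algorithms $\{A_i\mid i\in I\}$, being deterministic, induce a communication channel every next-document distribution of which is a point mass -- a channel of min-entropy $0$ -- and by the $\ASA$-to-stegosystem direction of Theorem~\ref{thm:stego:asa:iff:on:ses} (which extends from $\SES$ to arbitrary primitives in the manner of the construction behind Theorem~\ref{thm:generic-attack}), a successful ASA on $\Pi$ subverting only these algorithms would yield a stegosystem on this channel that is simultaneously secure and reliable. But no such stegosystem exists: security forces the output of the stegoencoder to agree (computationally) with the unique legal document sequence, hence to be independent of the embedded message, which contradicts reliability as soon as the message space has size at least two.

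I expect the main obstacle to be pinning down precisely \emph{which} detection model and \emph{which} functionality convention are assumed, since the two routes above cover complementary regimes: the watchdog reduction needs the detector to be able to recompute the honest outputs of the subverted algorithms, whereas the functionality-only argument needs the honest output to be the unique valid one (the ``unique ciphertext''/``unique signature'' situation). A careful statement of the theorem should therefore make this hypothesis explicit and note that it is satisfied by the models of~\cite{bellare2014asa,bellare2015asa,ateniese2015sig} in every case where a deterministic algorithm arises. The remaining delicate point is the stateful, fully adaptive setting of~\cite{ateniese2015sig}: one must check that a subversion deviating only on some hard-to-reach internal configuration is still caught, which holds because $W$ reproduces the honest transcript faithfully until it sees a disagreement and thus drives the subverted algorithm through exactly the state evolution of a genuine attack, so a non-negligible \emph{overall} deviation probability surfaces as a non-negligible detection probability. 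Everything else -- the hybrid and the final appeal to the functionality of $\Pi$ -- is routine.
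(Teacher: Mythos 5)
Your proposal is correct, and your final paragraph is in substance the paper's entire proof: Theorem~\ref{thm:min-entropy} is stated there as an immediate consequence of Theorem~\ref{thm:no-attack:against:R}, which is itself just the ASA-to-stegosystem direction (Theorem~\ref{thm:ses:on:ses:impl:stego}, transplanted verbatim to the channel $\chan^{\kappa}_{\R}(\ell)$ of an arbitrary randomized algorithm $\R$) combined with the fact that a channel of min-entropy $0$ admits no secure and reliable stegosystem (Theorem~6 of \cite{hopper2009provably}); when the subverted components are deterministic, the output documents of the induced channel are point masses, so the channel family has min-entropy $0$. (Minor imprecision on your side: not \emph{every} next-document distribution is a point mass --- the key- and input-generation portions of the history may be random --- but the min-entropy is a minimum over histories, which is all that is used.) Your primary route, the watchdog that recomputes the honest deterministic outputs and reports the first disagreement, is a more elementary direct argument that the paper never writes out, and in the paper's detection model your hedging about conventions is unnecessary: the watchdog in the ASA-distinguishing experiment chooses $\am$, $k$, $m$ \emph{and} the state $\sigma$, so the honest output of a deterministic algorithm is always recomputable and no decryptability or verifiability assumption is needed; the functionality-only branch of your argument is what one needs to recover Ateniese et al.'s Theorem~3 on unique (possibly randomized) signature schemes, not the present statement. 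As for what each approach buys: the paper's route is a one-line corollary of machinery it has already built and lets the ``ASAs are steganography'' thesis carry the load, whereas yours is self-contained and quantitative (the in-game deviation probability lower-bounds the detection advantage), and it makes explicit which extra hypotheses would be required in weaker detection models.
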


As a corollary we obtain the result of \citeauthor{ateniese2015sig} 
(Theorem~1 in \cite{ateniese2015sig}) that for every \emph{coin-injective} signature 
scheme, there is a successful algorithm substitution attack  
of negligible insecurity. Moreover we get (Theorem~2 in \cite{ateniese2015sig}) 
that for every \emph{coin-extractable} signature scheme, there is a successful
and secure ASA\@. We can conclude also (Theorem~3 in \cite{ateniese2015sig}) 
that \emph{unique signature schemes} are resistant to ASAs fulfilling the \emph{verifiability condition}. 
Roughly speaking the last property means that each message has exactly
one signature and the ASA can only produce valid signatures.

We furthermore introduce the concept of \emph{universal ASAs} that can
be used without a detailed description of the implementation of the
underlying cryptographic primitive and note that almost all known ASAs
belong to this class. Based upon this definition, we prove the following
upper bound on the information that can be embedded into a single
ciphertext:
\begin{theorem}[Informal]
  \label{thm:universal:bound}
  No universal ASA is able to embed more than
  $\mathcal{O}(1)\cdot \log(\kappa)$ bits of information into a single
  ciphertext.
\end{theorem}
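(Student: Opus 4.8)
The plan is to reduce to the steganographic world via Theorem~\ref{thm:stego:asa:iff:on:ses} and then run a query-complexity (counting) argument against a worst-case encryption scheme. First I would make the word ``universal'' precise for this purpose: a universal ASA accesses $\SES$ only through oracle calls to its algorithms and is oblivious to their code, so it must in particular succeed against the concrete scheme $\SES^{*}$ given by $\Enc^{*}(k,\mu;r)=(r,\;\mu\oplus F_{k}(r))$ for a pseudorandom function $F$ and $|r|=\kappa$. This $\SES^{*}$ is a genuine (even CPA-secure) scheme of min-entropy $\kappa$ whose ciphertexts have some length $n=n(\kappa)\ge\kappa$ and, for every fixed plaintext, are computationally indistinguishable from uniform on $\{0,1\}^{n}$; moreover its induced channel $\chan_{\SES^{*}}$ is history-independent, with (near-)i.i.d.\ documents. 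Running the universal ASA against $\SES^{*}$ through the construction behind Theorem~\ref{thm:stego:asa:iff:on:ses} yields a stegosystem on $\chan_{\SES^{*}}$ which is black-box in the channel (drawing at most $q=q(\kappa)$ samples per document for some polynomial $q$), embeds $b=b(\kappa)$ bits per document, and has negligible insecurity and overwhelming reliability. It therefore suffices to show $b\le\log q+\mathcal{O}(1)$ for such a stegosystem on this essentially uniform channel.

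The core is a counting argument. Fix the shared key $\ak$; the decoder $\Ext(\ak,\cdot)$ induces a partition of $\{0,1\}^{n}$ into (at most) $2^{b}$ decoding classes $\{S_{w}\}_{w\in\{0,1\}^{b}}$, whose sizes sum to at most $2^{n}$. Let $m\in\{0,1\}^{b}$ be a uniformly random block to be embedded, let $d_{1},\dots,d_{q}$ be the (near-)uniform samples the encoder draws, and let $\tilde d$ be the document it outputs. If $\tilde d$ is one of the samples, say $\tilde d=d_{j}$, then correct decoding forces $d_{j}\in S_{m}$; a union bound gives $\Pr[\exists\,j\le q:\,d_{j}\in S_{m}]\le q\,|S_{m}|/2^{n}$, and averaging over $\ak$ and over the uniform $m$ (and paying the negligible PRF-distinguishing advantage of the resulting efficient test) bounds the probability of this event by $q\cdot 2^{-b}+\negl$. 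Since reliability forces correct decoding with probability $1-\negl$, we obtain $1-\negl\le q\cdot 2^{-b}+\varepsilon+\negl$, where $\varepsilon$ bounds the probability that $\tilde d$ is \emph{not} among the samples while decoding still succeeds; hence $2^{b}\le q/(1-\varepsilon-\negl)$ and thus $b=\mathcal{O}(1)\cdot\log(\kappa)$ as soon as $1-\varepsilon$ is at least inverse-polynomial.

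The step I expect to be the main obstacle is the bound on $\varepsilon$: ruling out that a universal ASA manufactures a fresh, correctly-classified ciphertext it never computed. Intuitively this is hopeless for a universal attacker -- ignorant of the structure of valid ciphertexts, any output other than an honestly generated ciphertext it has itself produced is either invalid or, on $\chan_{\SES^{*}}$, detectably non-uniform (for example concentrated on the $2^{-b}$-fraction of $\{0,1\}^{n}$ that $\ak$ singles out), contradicting the stegosystem's (hence the ASA's) security. Turning this into a clean bound on $\varepsilon$ calls for a hybrid over the $q$ samples that leans on the flatness and history-independence of $\chan_{\SES^{*}}$, and parallels the black-box steganography rate lower bound of Dedi\'{c}, Itkis, Reyzin, and Russell; in the reading of the statement where a ``universal ASA'' outputs only honestly generated ciphertexts -- which covers every ASA in the literature -- one has $\varepsilon=0$ and the counting argument alone finishes. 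Either way, $b\le\log q+\mathcal{O}(1)=\mathcal{O}(1)\cdot\log(\kappa)$, so no universal ASA embeds more than $\mathcal{O}(1)\cdot\log(\kappa)$ bits into a single ciphertext.
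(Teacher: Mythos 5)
There is a genuine gap, and it sits exactly where you flag it: the bound on $\varepsilon$, the probability that the subverted encoder outputs a correctly-decoding ciphertext it never received from its oracle. Your counting step only covers outputs that are literally among the $q$ oracle answers, and your informal argument for dismissing fresh outputs (``either invalid or detectably non-uniform'') is in fact false for the very scheme $\SES^{*}$ you chose: since $\Enc^{*}(k,\mu;r)=(r,\mu\oplus F_{k}(r))$ is malleable, a universal ASA can query its oracle once on $(k,m',r)$ with some other plaintext $m'$, recover $F_{k}(r)$ by XOR, and output the \emph{fresh} ciphertext $(r,m\oplus F_{k}(r))$, which is both in $\supp(\Enc^{*}(k,m))$ and perfectly distributed when $r$ is uniform. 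So freshness alone contradicts neither validity nor indistinguishability, and your proposed hybrid has nothing to grab onto; one would have to redo the count over all ciphertexts \emph{derivable} from the oracle transcript, which is a scheme-specific repair you do not carry out. A further modelling inaccuracy points the same way: reducing to a black-box stegosystem that ``draws $q$ near-uniform i.i.d.\ samples'' underestimates a universal ASA, whose oracle access lets it choose the coins $r$ (and other keys/messages) rather than merely sample the channel. Finally, the fallback reading in which a universal ASA only ever outputs honestly generated ciphertexts is an extra hypothesis not present in the paper's definition, so under that reading you prove a strictly weaker statement.

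The paper closes precisely this hole by a different device: it does not fix a worst-case scheme but transforms an \emph{arbitrary} $\SES$ into the family $\mathcal{F}(\SES,\SIG)$ in which each ciphertext is accompanied by an existentially unforgeable signature, so that fresh validity itself becomes computationally hard. Lemma~\ref{lem:secure_asa} (indistinguishability forces outputs into $\supp(\combine{\SES_{\pk,\sk}}{\Enc}(k,m))$, i.e.\ to carry valid signatures) and Lemma~\ref{lem:reliable_asa} (a counting argument showing that reliable embedding of $\combine{\ASA}{\ml}(\kappa)$ bits with only $(\combine{\ASA}{\outl}(\kappa)\cdot\combine{\ASA}{\query}(\kappa))^{\combine{\ASA}{\outl}(\kappa)}$ possible oracle-produced transcripts forces a fresh output with high probability) together turn any super-logarithmic-rate universal ASA into a signature forger, contradicting the existence of one-way functions; this is also why Corollary~\ref{cor:bound} carries the OWF caveat. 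Your counting core matches the spirit of Lemma~\ref{lem:reliable_asa}, but without the signature-augmentation (or some equivalent non-malleability mechanism) the argument does not go through.
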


The paper is organized as follows. Section~\ref{sec:prelim} contains the
basic preliminaries and notations that we use throughout this work,
Section~\ref{sec:asa} presents the formal definitions of algorithm
substitution attacks, and Section~\ref{sec:stego} gives the necessary
background on steganography. In order to relate ASAs and steganography,
we make use of an appropriate channel for symmetric encryption schemes
defined in Section~\ref{sec:channel}. The proof of 
Theorem~\ref{thm:stego:asa:iff:on:ses} is given in
Section~\ref{Sec:ASA:against:encrypted:as:stego}, where one direction is
contained in Theorem~\ref{thm:asa:on:ses:impl:stego} and the other
direction is given as Theorem~\ref{thm:ses:on:ses:impl:stego}. We
generalize our results to arbitrary randomized algorithms in
Section~\ref{sec:general}. Combining the positive results of
Theorem~\ref{thm:generic-attack:against:R} with the generic stegosystem
provided by Theorem~\ref{thm:rejsam:secure} allows us to conclude
Theorem~\ref{thm:generic-attack}. The negative results of
Theorem~\ref{thm:no-attack:against:R} directly give
Theorem~\ref{thm:min-entropy}. Finally, Section~\ref{sec:bound} defines
universal ASAs and contains the upper bound on the transmission rate of
these ASAs via a sequence of lemmata that results in
Corollary~\ref{cor:bound} implying Theorem~\ref{thm:universal:bound}.

\section{Basic Preliminaries and Notations}
\label{sec:prelim}
We use the following standard notations.
A function $f\colon \nats \to \nats$ is \emph{negligible}, if for all
$c\in \nats$, there is an $n_{0}\in \nats$ such that $f(n) < n^{-c}$ for
all $n\geq n_{0}$. The set of all strings of length $n$ on an alphabet
$\Sigma$ is denoted by $\Sigma^{n}$ and the set of all strings of length
at most $n$ is denoted by $\Sigma^{\leq
  n}:=\cup_{i=0}^{n}\Sigma^{i}$. If $S$ is a set, $x\gets S$ denotes the
uniform random assignment of an element of $S$ to $x$. If $\mathsf{A}$ is
a randomized algorithm, $x\gets \mathsf{A}$ denotes the random
assignment (with regard to the internal randomness of $\mathsf{A}$) 
of the output of $\mathsf{A}$ to $x$. 
The \emph{min-entropy} measures the amount of randomness of a probability
distribution $D$ and is defined as
$\minent(D)=\inf_{x\in \supp(D)}\{-\log \Pr_{D}(x)\}$, where $\supp(D)$
is the support of $D$. 
Moreover, PPTM stands for probabilistic polynomial-time Turing machine. 

A \emph{symmetric encryption scheme} $\SES$ is a triple of probabilistic
polynomial-time algorithms 
$(\combine{\SES}{\Gen},\combine{\SES}{\Enc},\combine{\SES}{\Dec})$
with parameters $\combine{\SES}{\ml}(\kappa)$ describing the length of the encrypted message
and  $\combine{\SES}{\cl}(\kappa)$ describing the length of a generated
cipher message. The  algorithms have the following properties:
\begin{itemize}
\item The \emph{key generator} $\combine{\SES}{\Gen}$ produces upon input $1^{\kappa}$ a
  key $k$ with $|k|=\kappa$.
\item The \emph{encryption algorithm} $\combine{\SES}{\Enc}$ takes as input the key $k$
  and a message $m\in \{0,1\}^{\combine{\SES}{\ml}(\kappa)}$ of length
  $\combine{\SES}{\ml}(\kappa)$ and produces a \emph{ciphertext}
  $c\in \{0,1\}^{\combine{\SES}{\cl}(\kappa)}$ of length
  $\combine{\SES}{\cl}(\kappa)$. 
\item The \emph{decryption algorithm} $\combine{\SES}{\Dec}$ takes as input the key $k$
  and a ciphertext $c\in \{0,1\}^{\combine{\SES}{\cl}(\kappa)}$ and produces a message $m'\in
  \{0,1\}^{\combine{\SES}{\ml}(\kappa)}$. 
\end{itemize}
If the context is clear, we also write $\Gen$, $\Enc$, $\Dec$, $\ml$ and
$\cl$ without the prefix $\SES$. We say that $(\Gen,\Enc,\Dec)$ is \emph{reliable}, if
$\Dec(k,\Enc(k,m))=m$ for all $k$ and all $m$.

An \emph{cpa-attacker} $\Att$ against a symmetric encryption scheme 
is a PPTM that
mounts \emph{chosen-plaintext-attacks (cpa)}: It is given a challenging
oracle $\CH$ that either equals $\Enc_{k}$ for a randomly generated key
$k$ or produces random bitstrings of length $\cl(\kappa)$. For an
integer $\lambda$, let $\RAND(\lambda)$ be an algorithm that returns
uniformly distributed bitstrings of length $\lambda$.  The goal of
$\Att$ is to distinguish between those settings. Formally, this is
defined via the following experiment named $\CPA-Dist$:

\mlAlgorithm{$\CPA-Dist_{\Att,\SES}(\kappa)$}
{attacker $\Att$, 
  symmetric encryption scheme  $\SES=(\Gen,\Enc,\Dec)$}
  {
    \State $k\gets \Gen(1^{\kappa})$; $b\gets \{0,1\}$
    \State $b' \gets \Att^{\CH}(1^{\kappa})$
    \State \AlgReturn{$b=b'$}

    \Statex
\setcounter{ALG@line}{0}
\Statex \hspace{-.4cm}\underline{oracle $\CH(m)$ }
\State {\bf if} $b=0$ {\bf then}  \AlgReturn{$\Enc(k,m)$}\newline
\hspace*{10mm}{\bf else} \AlgReturn{$\RAND(\cl(\kappa))$}
  }
  {Chosen-Plaintext-Attack experiment with security parameter $\kappa$.}

A symmetric encryption scheme $\SES$ is \emph{cpa-secure}
if for every attacker $\Att$ there is a negligible function $\negl$
such that
\begin{align*}
  &\adv^{\cpa}_{\SES}(\kappa) :=
  |\Pr[\CPA-Dist_{\Att,\SES}(\kappa)=\textsf{true}] - 1/2|\leq
  \negl(\kappa).
\end{align*}
The maximal advantage of any attacker against $\SES$ is
called the \emph{insecurity} of $\SES$ and is defined as
\begin{align*}
  \insec^{\cpa}_{\SES}(\kappa) := \max_{\Att}\{\adv_{\Att,\SES}^{\cpa}(\kappa)\}.
\end{align*}

For a $\SES=(\Gen,\Enc,\Dec)$
we will assume that it has nontrivial randomization
measured by the min-entropy $\minent(\SES)$ of ciphertexts  
that is defined via
\[
    2^{-\minent(\SES)} = \max_{k,m,c} \Pr[\combine{\SES}{\Enc}(k,m)=c].
\]

For two numbers $\ell,\ell'\in \nats$, denote the \emph{set of all function} from
$\{0,1\}^{\ell}$ to $\{0,1\}^{\ell'}$ by $\Fun(\ell,\ell')$. Clearly, in order to
specify a random element of $\Fun(\ell,\ell')$, one needs $2^{\ell}\times \ell'$
bits and we can thus not use completely random functions in an efficient
setting. Therefore we will use efficient functions that are indistinguishable
from completely random functions. A \emph{pseudorandom function} is a
pair of PPTMs
$\algf=(\combine{\algf}{\Eval},\combine{\algf}{\Gen})$  such that $\combine{\algf}{\Gen}$ upon input $1^{\kappa}$ produces
a key $k\in \{0,1\}^{\kappa}$. The keyed function $\combine{\algf}{\Eval}$
takes the key $k\gets \combine{\algf}{\Gen}(1^{\kappa})$ and a bitstring $x$ of
length $\combine{\algf}{\fin}(\kappa)$ and produces a string $\combine{\algf}{\Eval}_{k}(x)$ of length
$\combine{\algf}{\fout}(\kappa)$. An attacker, called \emph{distinguisher} $\Dist$,
is a PPTM that upon
input $1^{\kappa}$ gets
oracle access to a function that either equals
$\combine{\algf}{\Eval}_{k}$ for a randomly chosen key $k$ 
or is a completely random function $f$.
The goal of $\Dist$ is to distinguish
between those cases. A pseudorandom function $\algf$ is secure if for every 
distinguisher $\Dist$ there is a
negligible function $\negl$ such that
\begin{align*}
  &\Adv_{\Dist,\algf}^{\prf}(\kappa) \ := \\ 
  & \quad \quad \left| \Pr[\Dist^{\combine{\algf}{\Eval}_{k}}(1^{\kappa})=1] -
  \Pr[\Dist^{f}(1^{\kappa})=1]\right| \leq \negl(\kappa),
\end{align*}
where $k\gets \combine{\algf}{\Gen}(1^{\kappa})$ and $f\gets
\Fun(\combine{\algf}{\fin}(\kappa),\combine{\algf}{\fout}(\kappa))$. If
$\Dist$ outputs $1$, this means that the distinguisher $\Dist$ believes
that he deals with a truly random function. 

As usual, the maximal advantage of any distinguisher against
$\algf$ is called the \emph{prf-insecurity}
$\InSec_{\algf}^{\prf}(\kappa)$ and defined as
\begin{align*}
  \InSec_{\algf}^{\prf}(\kappa) := \max_{\Dist}\{\Adv^{\prf}_{\Dist,\algf}(\kappa)\}.
\end{align*}

\section{Algorithm Substitution Attacks against Encryption Schemes}
\label{sec:asa}
While it is certainly very useful for an attacker to be able to
reconstruct the key, one can also consider situations, where the
extractor should be able to extract different information from the
ciphertexts or signatures. We will thus generalize the algorithm
substitution attacks described in the literature to the setting, 
where the substituted algorithm also takes a message $\am$ as argument 
and the goal of the extractor is to derive this message from 
the produced ciphertext. By always setting $\am := k$, this is the
setting described by \citeauthor{bellare2015asa} in
\cite{bellare2015asa}.
We thus strengthen the model of \cite{bellare2014asa} and \cite{bellare2015asa}
in this sense.  

Below we give in detail our definitions based upon
the model proposed by Bellare et al.~in \cite{bellare2015asa}.
If the substitution attack is \emph{stateful}, we allow 
the distinguisher that tries to identify the attack to also choose this
state and observe the internal state of the attack. Every algorithm
substitution attack thus needs to be \emph{stateless}, as in the model of
Bellare et al.~in \cite{bellare2015asa}. Note that this is a stronger
requirement than in \cite{bellare2014asa} and
\cite{ateniese2015sig}, as those works also allowed stateful attacks.

In our setting an \emph{algorithm substitution attack} 
against a symmetric encryption scheme
$\SES=(\combine{\SES}{\Gen},\combine{\SES}{\Enc},\combine{\SES}{\Dec})$ 
is a triple of PPTMs 
\[
  \begin{array}{rcl}
  \ASAE&=&(\combine{\ASAE}{\Gen},\combine{\ASAE}{\Enc},\combine{\ASAE}{\Ext}) \\
 \end{array}
\]
with parameter $\combine{\ASAE}{\ml}(\kappa)$ for the \emph{message
  length} -- the length of the attacker message  --
and the following functionality.
\begin{itemize}
\item 
  The \emph{key generator} $\combine{\ASAE}{\Gen}$ produces upon input $1^{\kappa}$
  an attacker key $\ak$ of length $\kappa$.
\item 
  The \emph{encryption algorithm} $\combine{\ASAE}{\Enc}$
  takes 
  an attacker key $\ak\in \supp(\combine{\ASAE}{\Gen}(1^{\kappa}))$, 
   attacker message $\am$ such that $\am \in \{0,1\}^{\combine{\ASAE}{\ml}(\kappa)}$, 
  an encryption key $k\in \supp(\combine{\SES}{\Gen}$ $(1^{\kappa}))$, an
  encryption message $m\in \{0,1\}^{\combine{\SES}{\ml}(\kappa)}$, and a
  state $\sigma\in \{0,1\}^{*}$ 
  and produces a ciphertext
  $c$ of length $\combine{\SES}{\cl}(\kappa)$ and a new state~$\sigma'$.
\item 
  The \emph{extraction algorithm} $\combine{\ASAE}{\Ext}$ takes as input an attacker  key
  $\ak\in \supp(\combine{\ASAE}{\Gen}(1^{\kappa}))$ and $\ell =
                                \combine{\ASAE}{\outl}(\kappa)$
                                a ciphertext
   $c_{1},\ldots,c_{\ell}$ with 
  $c_{i}\in
 \{0,1\}^{\combine{\SES}{\cl}(\kappa)}$ and produces an
 attacker message
 $\am'$. 
\end{itemize}

An algorithm substitution attack needs (a) to be indistinguishable from
the symmetric encryption scheme and (b) should be able to reliably extract the
message $\am$ of length $\combine{\ASAE}{\ml}(\kappa)$ from the
ciphertexts.
Due to information-theoretic
reasons, it might be impossible to embed the attacker message $\am$ into
a single ciphertext: If $\combine{\SES}{\Enc}$ uses $10$ bits of
randomness, at most $10$ bits from $\am$ can be reliably embedded into a
ciphertext. Hence, the algorithm substitution attack needs to produce
more than one ciphertext in this case. For
message $m_{1},\ldots,m_{\ell}$, the
complete output, denoted as
$\combine{\ASAE}{\Enc}^{\ell}(\ak,\am,k,m_{1},\ldots,m_{\ell})$
is
defined as follows:
\begin{enumerate}
\item[1:] $\sigma = \varnothing$
\item[2:] {\bf for} $j=1$ to $\ell$ {\bf do} $(c_{j},\sigma)\gets \combine{\ASAE}{\Enc}(\ak,\am,k,m_{j},\sigma)$
\item[3:] {\bf return} $c_{1},\ldots,c_{\ell}$
\end{enumerate}

  To formally define the probability that the extractor is able to
  reliably extract $\am$ from the given ciphertexts
  $c_{1},\ldots,c_{\ell}$, we define its \emph{reliability}\footnote{In
    \cite{bellare2015asa}, this is called the \emph{key recovery
      security}.} as $1-\unrel_{\ASAE,\SES}(\kappa)$, where the
  \emph{unreliability} $\unrel_{\ASAE,\SES}$ is given as
\begin{align*}
  &\max                                     \{
\Pr[\combine{\ASAE}{\Ext}(\ak,\combine{\ASAE}{\Enc}^{\ell}(\ak,\am,k,m_{1},\ldots,m_{\ell}))\neq
  \am]\},
\end{align*}
with  the maximum  taken over all
$\ak\in \supp(\combine{\ASAE}{\Gen}(1^{\kappa})), 
 \am \in\{0,1\}^{\combine{\ASAE}{\ml}(\kappa)}$, 
and   $m_{i}\in\{0,1\}^{\combine{\SES}{\ml}(\kappa)}$.
The algorithm 
is \emph{successful}, if there is
negligible function $\negl$ with $\unrel_{\ASAE,\SES}(\kappa) \leq \negl(\kappa).$

The indistinguishability of an ASA is defined as follows.
Call a \emph{watchdog} $\Watch$ a PPTM 
that tries to distinguish the output of the attacker encryption
algorithm $\combine{\ASA}{\Enc}$ from the original encryption algorithm
$\Enc$.
The indistinguishability  is defined via the 
game named $\EncASA-Dist$:

\mlAlgorithm{$\EncASA-Dist_{\Watch,\ASAE,\SES}(\kappa)$}{watchdog $\Watch$, algorithm substitution attack
  $\ASAE=(\combine{\ASAE}{\Gen}, \combine{\ASAE}{\Enc},$ $ \combine{\ASAE}{\Ext})$, and encryption
  scheme $\SES=(\combine{\SES}{\Gen},\combine{\SES}{\Enc},\combine{\SES}{\Dec})$}
{
\State $\ak\gets \combine{\ASAE}{\Gen}(1^{\kappa})$; $b\gets \{0,1\}$
\State $b' \gets \Watch^{\CH}(1^{\kappa})$
\State \AlgReturn{$b=b'$}
\Statex
\setcounter{ALG@line}{0}
\Statex \hspace{-.4cm}\underline{oracle $\CH(\am,k,m,\sigma)$ }
\State {\bf if} $b=0$ {\bf then} $c\gets \combine{\SES}{\Enc}(k,m)$\newline
\hspace*{10mm}{\bf else}  $(c,\sigma) \gets \combine{\ASAE}{\Enc}(\ak,\am,k,m,\sigma)$
\State \Return{$(c,\sigma)$}

}{ASA-distinguishing (detection) experiment with security parameter $\kappa$.}

An algorithm substitution attack $\ASAE$ is called \emph{indistinguishable} from
the symmetric encryption scheme $\SES$, 
if for every watchdog $\Watch$, there is a negligible function $\negl$
such that
\begin{align*}
  &\adv^{\encwatch}_{\Watch, \ASAE,\SES}(\kappa) \
   :=\\ &\quad \quad |\Pr[\EncASA-Dist_{\Watch,\ASAE,\SES}(\kappa)=\textsf{true}] -1/2| 
    \leq \ \negl(\kappa).
\end{align*}

The maximal advantage of any watchdog distinguishing
$\ASAE$ from $\SES$ is
called the \emph{indistinguishability} or \emph{insecurity} of $\ASAE$ and is defined as
\begin{align*}
  \insec^{\encwatch}_{\ASAE,\SES}(\kappa) := \max_{\Watch}\{\adv_{\Watch,\ASAE,\SES}^{\encwatch}(\kappa)\}.
\end{align*}

In \cite{bellare2014asa}, Bellare et al.~proposed a (stateless) construction $\ASAE$
against all symmetric encryption schemes $\SES$. 
They prove in Theorem~3 that if $\SES$ is a randomized, stateless, coin-injective symmetric encryption scheme
with randomness-length $r$ and  if the ASA uses a PRF
$\algf$
then for a watchdog  $\Watch$ that makes $q$ queries to its $\CH$ oracle
we can construct an adversary $\Att$ such that 
$ \adv^{\encwatch}_{\Watch, \ASAE,\SES}(\kappa) \le q/2^{2^r} +
\adv^{\prf}_{\Att,\algf}(\kappa)$, where $\Att$ makes $q$ oracle queries
and its running time is that of $\Watch$. 

Bellare et al.~conclude that as long as their scheme uses a non-trivial amount 
of randomness, for example $r \ge 7$ bits resulting $2^r \ge 128$, 
Theorem 3 implies that the subversion is undetectable.

\section{Backgrounds of Steganography}
\label{sec:stego}
The definitions of the basic steganography concepts 
presented in this section are essentially 
those of \cite{hopper2009provably} and \cite{dedic2009upper}.

In order to define undetectable hidden communication, we need to
introduce a notion of \emph{unsuspicious} communication. We do this via
the notion of a \emph{channel} $\chan$. A channel $\chan$ on the alphabet
$\Sigma$ with maximal document length $\doclength$ is a function that maps a string of previously
send elements $h\in (\Sigma^{\leq \doclength})^{*}$ -- the \emph{history} -- to a
probability distribution upon $\Sigma^{\leq \doclength}$. We denote
this probability distribution by $\chan_{h}$. The elements of
$\Sigma^{\leq \doclength}$ are called \emph{documents}. 
As usually, we will assume that the sequences of documents 
are efficiently prefix-free  recognizable.

A \emph{stegosystem} $\Steg$ on a family of channels 
$\bm{\chan}=\{\chan^{\kappa}\}_{\kappa\in \nats}$ is a triple of
probabilistic polynomial-time (according to the security parameter $\kappa$) algorithms:
\[
\begin{array}{rcl}
  \Steg&=&(\combine{\Steg}{\Gen},\combine{\Steg}{\Enc},\combine{\Steg}{\Dec}) 
\end{array}
\]
with parameters $\combine{\Steg}{\ml}(\kappa)$ describing the \emph{message
length} of the subliminal (hidden, or attacker) message
and  $\combine{\Steg}{\outl}(\kappa)$ describing the length of a generated sequence 
of stego documents to embed the whole hidden message. The algorithms
have the following functionality:
\begin{itemize}
\item The \emph{key generator} $\combine{\Steg}{\Gen}$ takes the unary presentation 
of an integer $\kappa$ -- the \emph{security parameter} -- and outputs 
a key (we will call it an attacker key) 
$\ak\in \{0,1\}^{\kappa}$ of length $\kappa$.
\item The \emph{stegoencoder} 
$\combine{\Steg}{\Enc}$ takes as input the key $\ak$,  the attacker (or
hidden) message $\am\in
\{0,1\}^{\combine{\Steg}{\ml}(\kappa)}$, a history $h$, and a state $\sigma$ and outputs a 
document $d$
from $\chan^{\kappa}$ such 
that $\am$ is (partially) embedded in this document and a new state. In order to produce
the document, $\combine{\Steg}{\Enc}$ also has sampling access to
$\chan^{\kappa}_{h}$.  We denote this by writing $\combine{\Steg}{\Enc}^{\chan}(\ak,\am,h,\sigma)$.
\item The \emph{(history-ignorant) stegodecoder} $\combine{\Steg}{\Dec}$ takes as input the key
$\ak$ and $\ell=\combine{\Steg}{\outl}(\kappa)$ documents $d_{1},\ldots,d_{\ell}$ and
outputs a message $\am'$. A history-ignorant
stegodecoder thus has no knowledge of previously sent 
documents. The stego\-decoders of nearly all known systems are
history-ignorant.
\end{itemize}

To improve readability, if the
stegosystem is clear from the context, we will omit the prefix
$\Steg$. If $\bm{\chan}=\{\chan^{\kappa}\}_{\kappa\in \mathbb{N}}$ is a
family of channels, the 
\emph{min-entropy}  of $\minent(\bm{\chan},\kappa)$ 
is defined as
$\minent(\bm{\chan},\kappa) = \min_{h\in
  \Sigma^{*}}\{\minent(\chan^{\kappa}_{h})\}$. 
In order to be useful, the stegodecoder should \emph{reliably} decode
the embedded message from the sequence of documents. As in the setting
of algorithm substitution attack, the complete output  of $\ell$~documents
of the stegosystem for the history $h$ 
on the subliminal message $\am$ of length
$\combine{\Steg}{\ml}(\kappa)$ is
denoted as $\combine{\Steg}{\Enc}^{\ell,\chan}(\ak,\am,h)$ and is defined as follows.
\begin{enumerate}
\item[1:] $\sigma = \varnothing$
\item[2:] {\bf for} $j=1$ to $\ell$ {\bf do}
\item[3:]\hspace*{10mm}$(d_{j},\sigma)\gets\combine{\Steg}{\Enc}^{\chan}(\ak,\am,h,\sigma)$;
  \quad $h = h \mid\mid d_{j}$
\item[4:] {\bf return} $d_{1},\ldots,d_{\ell}$
\end{enumerate}

The
\emph{unreliability} $\unrel_{\Steg,\chan}(\kappa)$ of the stegosystem
$\Steg$ on the channel family $\{\chan^{\kappa}\}_{\kappa\in \nats}$ with security parameter
$\kappa$ is defined as
\begin{align*}
 & \unrel_{\Steg,\chan}(\kappa) := \\
 &  \max_{\ak, \am} \max_{h}
 \{\Pr[\combine{\Steg}{\Dec}(\ak,\combine{\Steg}{\Enc^{\ell,\chan}}(\ak,\am,h))\neq \am]\},
\end{align*}
where the maximum is taken over all 
$\ak\in \supp(\combine{\Steg}{\Gen}(1^{\kappa})), 
 \am \in \{0,1\}^{\combine{\Steg}{\ml}(\kappa)}$, 
and  $h\in(\Sigma^{n(\kappa)})^{*}$. 
If there is a negligible function $\negl$ such that
$\unrel_{\Steg,\chan}(\kappa)\leq \negl(\kappa)$, we say that $\Steg$ is
\emph{reliable} on $\chan$.
 Furthermore,  the \emph{reboot-reliability} of the stegosystem $\Steg$ is
defined as
\begin{align*}
& \unrel^{\star}_{\Steg,\chan}(\kappa) := \\
&\max_{\ak, \am} \max_{\tau}\max_{ h_1,\ldots,h_{\tau}}\max_{\ell_1,\ldots,\ell_{\tau}} \{
  \Pr[\combine{\Steg}{\Dec}(\ak,
d_1,d_2,\ldots,d_{\ell}
  )\neq \am]\}  
\end{align*}
where the maxima are taken over all 
$\ak\in \supp(\combine{\Steg}{\Gen}(1^{\kappa}))$, 
 $\am \in \{0,1\}^{\combine{\Steg}{\ml}(\kappa)}$, 
all positive integers $\tau\le \ell$, all histories
$h_{1},\ldots,h_{\tau}$, and all positive integers $\ell_1,\ldots,\ell_{\tau}$
such that $\ell_1 + \ldots + \ell_{\tau}=\ell$. 
The documents $d_{1},\ldots, d_{\ell}$ are the
concatenated output of the runs
\begin{align*}
 \combine{\Steg}{\Enc^{\ell_1,\chan}}(\ak,\am,h_1)\mid\mid
    \ldots \mid\mid
  \combine{\Steg}{\Enc^{\ell_\tau,\chan}}(\ak,\am,h_\tau).  
\end{align*}
We say that the stegosystem $\Steg$ is
\emph{reboot-reliable}
if $\unrel^{\star}_{\Steg,\chan}(\kappa)$ 
is bounded from above by a negligible function.
This corresponds to a situation where the stegoencoder 
is  restarted $\tau$ times, each time with the history $h_i$,
and is allowed to generate $\ell_i$ documents. Note that
reboot-reliability is a strictly stronger requirement than reliability
and we
can thus conclude
\begin{align*}
  \unrel_{\Steg,\chan}(\kappa)\leq \unrel^{\star}_{\Steg,\chan}(\kappa).
\end{align*}

To define the security of a stegosystem, we first specify the abilities
of an attacker: A \emph{warden} $\Ward$ is a probabilistic polynomial-time
algorithm that will have access to a \emph{challenge oracle}
$\CH$. This challenge oracle can be called with a message
$\am$ and a history $h$ and is either equal to 
$\combine{\Steg}{\Enc}^{\chan}(\ak,\am,h,\sigma)$ for a
key $\ak\gets \combine{\Steg}{\Gen}(1^{\kappa})$ or equal to random documents of the \emph{channel}.

The goal of the warden is to distinguish between those oracles. It also
has access to samples of the channel $\chan^{\kappa}_{h}$ for a freely
chosen history $h$. Formally, the
\emph{chosen-hiddentext-attack-advantage} is defined via the following
game $\SS-CHA-Dist$:

\mlAlgorithm{$\SS-CHA-Dist_{\Ward,\Steg,\chan}(\kappa)$}{warden $\Ward$, stegosystem
  $\Steg$,
 channel $\chan$}
{
  \State $\ak\gets \combine{\Steg}{\Gen}(1^{\kappa})$
  \State $b\gets \{0,1\}$
\State $b' \gets \Ward^{\CH,\chan}(1^{\kappa})$
\State \AlgReturn{$b=b'$}
\Statex
\setcounter{ALG@line}{0}
\Statex \hspace{-.4cm}\underline{oracle $\CH(\am,h,\sigma)$ }
\State {\bf if} $b=0$ {\bf then} $d \gets \chan^{\kappa}_{h}$\newline
\hspace*{10mm}{\bf else} $(d,\sigma)\gets \combine{\Steg}{\Enc}(\ak,\am,h,\sigma)$
\State \Return{$(d,\sigma)$}
}{Chosen-Hiddentext experiment with security parameter $\kappa$.}

A stegosystem $\Steg$  is called \emph{secure against chosen-hiddentext attacks} 
if for every warden $\Ward$, there is a negligible function $\negl$
such that
\begin{align*}
  \adv^{\cha}_{\Ward, \Steg,\chan}(\kappa) 
   &:= |\Pr[\SS-CHA-Dist_{\Ward,\Steg,\chan}(\kappa)=\textsf{true}] -1/2| \\
  & \leq \ \negl(\kappa).
\end{align*}

The maximal advantage of any warden against $\Steg$ is the \emph{insecurity}
$\insec^{\cha}_{\Steg,\chan}(\kappa)$ 
and defined as
$\max_{\Ward}\{\adv^{\cha}_{\Ward,\Steg,\chan}(\kappa)\}$.

A very common technique in the design of secure stegosystems called
\emph{rejection sampling} goes back to an idea of
Anderson,  presented in
\cite{anderson1996limits}. The basic concept is that the stegoencoder samples from the
channel until he finds a document that already encodes the
hiddentext. This was first used by Cachin in
\cite{Cachin2004} to
construct a secure stegosystem in the information-theoretic sense.

In the following, let $\algf$ be pseudorandom function that maps input strings
of length $\combine{\algf}{\fin}(\kappa)$ (documents) to strings of
length $\combine{\algf}{\fout}(\kappa)=\log(\ml(\kappa))+1$ (message
parts). To simplify notation, we treat the output of
$\combine{\algf}{\Eval}_{k}$ as a pair $(b,j)$ with $|b|=1$ and
$|j|=\log(\ml(\kappa))$. 
The encoder of the \emph{rejection sampling stegosystem}, 
which we denote as  $\RejSam^{\algf}$, 
is defined as follows:

\inputAlgorithm{$\combine{\RejSam^{\algf}}{\Enc}(\ak,\am,h,\sigma)$}{key
  $\ak$, message $\am$, history $h$, state $\sigma$}{
  \State $i := 0$;
\Repeat
\State $d \gets \chan_{h}$ 
\State $i := i+1$ 
\State $(b,j) := \combine{\algf}{\Eval}_{\ak}(d)$ 
\Until{$\am[j]=b$ or $i > s$}\Comment{$\am[j]$ is the $j$-th
    bit of $\am$}
\State \AlgReturn{$(d,\sigma)$}
}{Stegoencoder of $\RejSam$ with security parameter $\kappa$ and $s\ge 1$.}

The key generator $\combine{\RejSam^{\algf}}{\Gen}$ is equal to
$\combine{\algf}{\Gen}$ and the decoder derives $\am$, as long as its
input documents contain every bit $\am[j]$, by applying
$\combine{\algf}{\Eval}_{\ak}$ to these documents. 
Below we present the description of the decoder. 
Note that the stegosystem is stateless. 

\inputAlgorithm{$\combine{\RejSam^{\algf}}{\Dec}(\ak,
  d_{1},\ldots,d_{\combine{\Steg}{\outl}(\kappa)})$}{key $\ak$,
  documents $d_{1},\ldots,d_{\combine{\Steg}{\outl}(\kappa)}$ 
  }{
  \For{$j=1,\ldots,\ml(\kappa)$}
  \State let $\am_{j} := \bot$ 
  \EndFor
  \For{$i=1,2,\ldots,\combine{\Steg}{\outl}(\kappa)$}
  \State $(b,j) := \combine{\algf}{\Eval}_{k}(d_{i})$
  \State let $\am_{j} := b$
  \EndFor
  \If{all $\am_{j}\neq \bot$}
  \State\AlgReturn{$\am=\am_{1}\am_{2}\ldots \am_{\ml(\kappa)}$}
  \Else
  \State\AlgReturn{$\bot$}
  \EndIf
}{Decoder of $\RejSam$.}

In \cite{hopper2009provably}, Hopper et al.~were the first to prove the
security of this stegosystem in the complexity-theoretic model. Their
argument was simplified by Dedi{\'c} et al.~in \cite{dedic2009upper} and
by Backes and Cachin in \cite{backes2005public}. The version given here
is based upon the stateless construction of Dedi{\'c} et al.~and also
uses the idea of Bellare et al.~in \cite{bellare2015asa} to apply the
\emph{coupon collector's problem} to completely get rid of the state by
randomly choosing an index to embed.

The analysis of the coupon collector's problem shows that by sending
$\ml(\kappa)\cdot (\ln \ml(\kappa)+\beta)$ documents~--~for an appropriate
value $\beta$~--~one only introduces a term
$\exp(-\beta)$ into the unreliability (see \eg
\cite{mitzenmacher2005probability} for a proof of this fact), which can
be made negligible by setting $\beta\geq \ml(\kappa)-\ln(\ml(\kappa))$. The output length
on messages of length $\ml(\kappa)$
will thus be bounded by $\ml(\kappa)^{2}$.

The security of this system directly follows from the analysis of
Dedi{\'c} et al.~in \cite{dedic2009upper}:

\begin{theorem}[{\cite[Theorems 4 and 5]{dedic2009upper}}]
\label{thm:rejsam:secure}
  For every polynomial $\ml(\kappa)$, there exists a universal history-ignorant
  stego\-system $\Steg=\RejSam^{\algf}$ with security parameter $\kappa$ and $s\ge 1$
  such that for every channel $\chan^{\kappa}$ we have
  \begin{itemize}
  \item $\combine{\Steg}{\ml}(\kappa)=\ml(\kappa)$,
  \item $\InSec^{\cha}_{\Steg,\chan}(\kappa) \leq
    \mathcal{O}(\ml(\kappa)^{4}\cdot
    2^{-\minent(\chan^{\kappa})}+\ml(\kappa)^{2}\cdot
    \exp(-s))+\InSec^{\prf}_{\algf,\chan}(\kappa)$, and
  \item $\unrel^{\star}_{\Steg,\chan}(\kappa) \leq
    \ml(\kappa)^{2}(2\cdot
    \exp(-2^{\minent(\chan^{\kappa})-3}) +
    \exp(-2^{-2} s))+
    \InSec^{\prf}_{\algf,\chan}(\kappa)$.
  \end{itemize}
\end{theorem}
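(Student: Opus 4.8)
The theorem to be proved is Theorem~\ref{thm:rejsam:secure}, which asserts the security and reboot-reliability bounds for the rejection-sampling stegosystem $\RejSam^{\algf}$ with the coupon-collector modification. Since the statement is attributed to \cite[Theorems 4 and 5]{dedic2009upper}, the cleanest route is a reduction/adaptation argument: I would invoke the original analysis of Dedi\'c et al.\ for the stateful rejection-sampling encoder, and then separately account for the two modifications made here, namely (i) replacing the truly random embedding function by a pseudorandom function $\algf$, and (ii) removing the state by having the encoder embed into a \emph{random} index $j$ (via the second coordinate of $\combine{\algf}{\Eval}_{\ak}(d)$) and sending $\ml(\kappa)\cdot(\ln\ml(\kappa)+\beta)$ documents, invoking the coupon collector's bound.

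**Security bound.** For $\InSec^{\cha}$, I would proceed by a hybrid argument. The first hybrid replaces $\combine{\algf}{\Eval}_{\ak}$ by a truly random function $f\gets\Fun$; this costs $\InSec^{\prf}_{\algf,\chan}(\kappa)$. In the resulting idealized system, a single invocation of the encoder draws documents $d\gets\chan_h$ and accepts the first one with $f(d)=(b,j)$ such that $\am[j]=b$. Conditioned on the output of $f$ being fresh on each sampled document, this rejection loop outputs a document distributed \emph{exactly} as $\chan_h$ (standard rejection-sampling fact), so the only distinguishing advantage comes from (a) collisions among the sampled documents (bounded via the min-entropy $2^{-\minent(\chan^{\kappa})}$ and the number of samples, which is where the $\ml(\kappa)^4\cdot2^{-\minent(\chan^{\kappa})}$ term comes from, counting roughly $\ml(\kappa)^2$ documents each with $s$ samples) and (b) the event that the loop runs out of attempts after $s$ iterations, contributing $\ml(\kappa)^2\cdot\exp(-s)$ over all $\ml(\kappa)^2$ documents. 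Summing gives the stated bound; I would present this as citing the Dedi\'c et al.\ analysis for the per-document argument and multiplying by the output length $\ml(\kappa)^2$.

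**Reboot-reliability bound.** For $\unrel^{\star}$, the key observation is that the decoder is history-ignorant and stateless: it reconstructs $\am$ by applying $\combine{\algf}{\Eval}_{\ak}$ to each received document and filling in bit position $j$ with $b$. Thus decoding fails only if either (i) some bit position $j\in\{1,\ldots,\ml(\kappa)\}$ is never "hit" by any of the $\ell$ documents, or (ii) for some document the rejection loop failed to find a matching $d$ within $s$ samples (so a wrong bit is written). Again I first replace $\algf$ by a random function at cost $\InSec^{\prf}_{\algf,\chan}(\kappa)$. Event (ii) is bounded per document by roughly $\exp(-2^{-2}s)$ (the probability that $s$ fresh uniform $(b,j)$-pairs all fail to match, tempered by the min-entropy so that samples are near-independent), summed over $\ml(\kappa)^2$ documents. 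Event (i): since each document independently embeds into a uniformly random index $j$, after $\ell=\ml(\kappa)(\ln\ml(\kappa)+\beta)$ documents the coupon-collector tail bound gives failure probability $\le\ml(\kappa)\exp(-\beta)$; crucially this holds \emph{regardless} of how the $\ell$ documents are partitioned into $\tau$ reboot segments $h_1,\ldots,h_\tau$, because statelessness means each document's index is drawn independently of the segment structure — this is exactly why reboot-reliability, not just reliability, holds. Collecting the terms and bounding $2^{\minent(\chan^{\kappa})-3}$-type quantities as in \cite{dedic2009upper} yields the claimed expression.

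**Main obstacle.** The delicate point is the interaction between the pseudorandomness of $\algf$ and the multiple reboots in $\unrel^{\star}$: a naive PRF substitution introduces one $\InSec^{\prf}$ term per segment, whereas the statement has only a single such term. The resolution — and the step I expect to require the most care — is to observe that the decoder makes at most $\ell=\ml(\kappa)^2$ evaluations of $\combine{\algf}{\Eval}_{\ak}$ in total across \emph{all} segments, and the encoder across all $\tau$ runs makes at most $\ell\cdot s$ evaluations, so a \emph{single} distinguisher simulating the entire reboot experiment (it knows all the $h_i$ and $\ell_i$, since the maxima are over fixed choices) gives one clean reduction to $\InSec^{\prf}_{\algf,\chan}(\kappa)$; the min-entropy/collision terms are then bounded in the truly-random-function world where the segmentation genuinely does not matter. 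I would also need to be slightly careful that the coupon-collector bound is stated for the worst-case adversarially-chosen partition, which follows since the per-document index distribution is identical and independent in every segment.
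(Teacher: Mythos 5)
Your proposal takes essentially the same route as the paper: the paper gives no standalone proof of Theorem~\ref{thm:rejsam:secure}, but imports Theorems 4 and 5 of Dedi\'c et al.~\cite{dedic2009upper} and, just as you do, accounts separately for the two modifications — the pseudorandom function in place of a shared random function (one $\InSec^{\prf}_{\algf,\chan}(\kappa)$ term via a single hybrid) and the stateless random-index embedding analyzed via the coupon collector's problem, which is also exactly the paper's (implicit) reason that reboot-reliability, not merely reliability, holds. Your reconstruction of how these pieces combine is consistent with the paper's discussion, so no further comparison is needed.
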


The notation $\InSec^{\prf}_{\algf,\chan}(\kappa)$ indicates the
insecurity of the pseudorandom function $\algf$ \emph{relative to the
  channel $\chan$}. Informally, this means that the attacker against
$\algf$ also has sampling access to $\chan$ (for a formal definition,
see \cite{dedic2009upper}). For an \emph{efficiently sampleable} channel
$\chan$ (\ie one that can be simulated by a PPTM), it clearly holds that
$\InSec^{\prf}_{\algf,\chan}(\kappa)=\InSec^{\prf}_{\algf}(\kappa)$. 
All channels used in this work are efficiently sampleable and we will thus
omit 
the index $\chan$ from the term $\InSec$.

\section{Encryption Schemes as Steganographic Channels}
\label{sec:channel}
Let $\SES=(\Gen,\Enc,\Dec)$ be a symmetric encryption scheme  that encodes
messages of length $\ml(\kappa)$ into ciphertexts of length
$\cl(\kappa) \geq \ml(\kappa)$ and let $\ell$ be a polynomial of $\kappa$.
For $\SES$ we define a channel family, named $\chan^{\kappa}_{\SES}(\ell)$, 
indexed with parameter $\kappa\in \nats$, where the documents  will correspond
to the input of generalized algorithm substitution attack against encryption schemes.
The essential idea behind the definition of the channel $\chan^{\kappa}_{\SES}(\ell)$ 
is that for all $k\in \supp(\Gen(1^{\kappa}))$ and every sequence of messages
$m_{1},m_{2},\ldots, m_{\ell(\kappa)}$, with $m_{i}\in\{0,1\}^{\ml(\kappa)}$, for 
the history 
\[
   h= k \mid\mid m_{1} \mid\mid m_{2} \mid\mid \ldots \mid\mid m_{\ell(\kappa)}
\]
the distribution of the sequences of documents 
  \begin{align*}
     c_{1} \mid\mid c_{2} \mid\mid \ldots \mid\mid c_{\ell(\kappa)}
  \end{align*}
generated by the channel is exactly the same as the distribution for
  \begin{align*}
     \Enc(k,m_{1}) \mid\mid \Enc(k,m_{2}) \mid\mid \ldots \mid\mid \Enc(k,m_{\ell(\kappa)}).
  \end{align*}
To give a formal definition of $\{\chan^{\kappa}_{\SES}(\ell)\}_{\kappa\in \nats}$
we need to specify the probability distributions for any history $h$. 
Thus, we define the family, on the alphabet $\{0,1\}$, 
as follows.

For the empty history $h=\varnothing$, define
\begin{align*}
\chan^{\kappa}_{\SES}(\ell)_{\varnothing}  
\end{align*}
as the 
distribution of all keys generated by $\Gen(1^{\kappa})$.
For a key $k\in \supp(\Gen(1^{\kappa}))$
  and a (possibly empty) sequence of messages
  $m_{1},m_{2},\ldots, m_{r}$, with $m_{i}\in \{0,1\}^{\ml(\kappa)}$ and
  $0\leq r\leq \ell(\kappa)-1$,  the distribution
  \begin{align*}
    \chan^{\kappa}_{\SES}(\ell)_{k\mid\mid m_{1}\mid\mid
    m_{2}\mid\mid \ldots \mid\mid m_{r}}
  \end{align*}
  is the uniform distribution on all messages $m_{r+1}\in
  \{0,1\}^{\ml(\kappa)}$.
For $k\in \supp(\Gen(1^{\kappa}))$, a sequence of messages
  $m_{1},m_{2},\ldots,$ $ m_{\ell(\kappa)}$ with $m_{i}\in
  \{0,1\}^{\ml(\kappa)}$, and a  (possibly empty) sequence of
  ciphertexts $c_{1},\ldots,c_{r}$,  with 
  $c_{i}\in \supp(\Enc(k,m_{((i-1)\bmod
    \ell(\kappa))+1})),$ the distribution
  \begin{align*}
    \chan^{\kappa}_{\SES}(\ell)_{k\mid\mid m_{1}\mid\mid
    m_{2}\mid\mid \ldots \mid\mid m_{\ell(\kappa)}\mid\mid
    c_{1}\mid\mid c_{2}\mid\mid \ldots \mid\mid
    \ldots  \mid\mid c_{r}}
  \end{align*}
  is the distribution of $\Enc(k,m_{(r \bmod \ell(\kappa))+1})$.

  \section{ASAs against Encryption in the Steganographic Model}
\label{Sec:ASA:against:encrypted:as:stego}
The main message of our paper is that algorithm
substitution attacks against a primitive $\Pi$ are equivalent to
the use of steganography on a corresponding channel $\chan_{\Pi}$
determined by the protocol $\Pi$.
Focusing on symmetric encryption schemes as a common cryptographic primitive,
we will show in this section exemplary proofs for the general relations
between ASAs and steganography.

In the previous section we showed a formal specification of the family of communication
channels $\chan^{\kappa}_{\SES}(\ell)$ determined by a symmetric encryption scheme  
$\SES$. We will now prove that a secure and reliable stegosystem 
on $\chan^{\kappa}_{\SES}(\ell)$ implies the existence of an indistinguishable and
successful algorithm substitution attack on $\SES$. 
On the other hand, we will also show that the existence of an
indistinguishable and successful algorithm substitution attack on $\SES$
implies a secure and reliable stegosystem on $\chan^{\kappa}_{\SES}(\ell)$. 

As a consequence we get a construction of an ASA against any encryption 
scheme using a generic stegosystem like \eg this proposed by 
Dedi{\'c} et al.~\cite{dedic2009upper}. Thus, we can conclude  
Theorem~1 and Theorem~3 proposed by Bellare et al.~in \cite{bellare2014asa}
that there exist indistinguishable and successful ASAs against encryption schemes.
Moreover we obtain Theorem~4 in \cite{bellare2014asa} which says
that an ASA is impossible for unique ciphertext symmetric encryption schemes.

\subsection{Steganography implies ASAs} 
\begin{theorem}\label{thm:asa:on:ses:impl:stego}
Assume $\SES$ is a symmetric encryption scheme and let $\Steg$ be a stegosystem 
on the channel $\chan:=\chan^{\kappa}_{\SES}(\combine{\Steg}{\outl}(\kappa))$ 
determined by $\SES$. 
Then there exists an algorithm substitution
attack $\ASAE$ against $\SES$ of indistinguishability, resp. reliability such that:
\[
\begin{array}{rcl}
   \insec^{\encwatch}_{\ASAE,\SES}(\kappa) &\le&  \insec^{\cha}_{\Steg,\chan}(\kappa)
   \quad \text{and} \\
   \unrel_{\ASAE,\SES}(\kappa) &=&  \unrel^{\star}_{\Steg,\chan}(\kappa).
\end{array}
\]
\end{theorem}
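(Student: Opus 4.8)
The plan is to construct the ASA $\ASAE=(\combine{\ASAE}{\Gen},\combine{\ASAE}{\Enc},\combine{\ASAE}{\Ext})$ directly from the stegosystem $\Steg=(\combine{\Steg}{\Gen},\combine{\Steg}{\Enc},\combine{\Steg}{\Dec})$ by exploiting the fact that the channel $\chan:=\chan^{\kappa}_{\SES}(\combine{\Steg}{\outl}(\kappa))$ was defined precisely so that a sequence of ciphertexts $\Enc(k,m_{1}),\ldots,\Enc(k,m_{\ell})$ is distributed exactly like a channel document sequence for the history $h=k\mid\mid m_{1}\mid\mid\ldots\mid\mid m_{\ell}$. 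Concretely, I would set $\combine{\ASAE}{\Gen}:=\combine{\Steg}{\Gen}$ (so $\ak$ is a stego key), set $\combine{\ASAE}{\ml}(\kappa):=\combine{\Steg}{\ml}(\kappa)$, and define $\combine{\ASAE}{\Enc}(\ak,\am,k,m,\sigma)$ as follows: the ASA maintains in its state the list of messages $m_{1},\ldots,m_{j}$ seen so far (or rather reconstructs the appropriate channel history from $k$ and these messages), forms the history $h=k\mid\mid m_{1}\mid\mid\ldots\mid\mid m_{j}$, and runs $\combine{\Steg}{\Enc}^{\chan}(\ak,\am,h,\sigma)$, answering the stegoencoder's sampling queries to $\chan_{h'}$ by running the real $\Enc$ algorithm (which is possible because the channel distribution at such a history is exactly $\Enc(k,m_{\cdot})$, and the ASA has $k$). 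The output document is a ciphertext $c$ which it returns. The extractor $\combine{\ASAE}{\Ext}(\ak,c_{1},\ldots,c_{\ell})$ simply runs $\combine{\Steg}{\Dec}(\ak,c_{1},\ldots,c_{\ell})$.

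Next I would verify the two claimed bounds. For \textbf{indistinguishability}, I give a reduction: from any watchdog $\Watch$ against $\ASAE$ I build a warden $\Ward$ against $\Steg$ on $\chan$. When $\Watch$ queries its oracle $\CH(\am,k,m,\sigma)$, the warden $\Ward$ must translate this into a query to its own challenge oracle at the appropriate history $h$ built from $k$ and the messages seen so far; if $\Ward$'s oracle is the real channel $\chan_{h}$ it returns $\Enc(k,m)$ (matching $\Watch$'s $b=0$ case exactly, by the channel definition), and if it is $\combine{\Steg}{\Enc}$ it returns exactly what $\combine{\ASAE}{\Enc}$ would (matching $\Watch$'s $b=1$ case). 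Hence $\Ward$ perfectly simulates the $\EncASA-Dist$ game and $\adv^{\encwatch}_{\Watch,\ASAE,\SES}(\kappa)=\adv^{\cha}_{\Ward,\Steg,\chan}(\kappa)$, which gives $\insec^{\encwatch}_{\ASAE,\SES}(\kappa)\le\insec^{\cha}_{\Steg,\chan}(\kappa)$. For \textbf{reliability}, I observe that $\combine{\ASAE}{\Enc}^{\ell}(\ak,\am,k,m_{1},\ldots,m_{\ell})$ produces ciphertexts distributed identically to what $\combine{\Steg}{\Enc}^{\ell,\chan}(\ak,\am,h)$ produces on the corresponding history, and then $\combine{\ASAE}{\Ext}=\combine{\Steg}{\Dec}$ fails exactly when the stegodecoder fails. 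Since the ASA's reliability definition takes a maximum over all $\ak,\am,m_{1},\ldots,m_{\ell}$ and the ASA is stateless across reboots — the watchdog in the ASA model chooses (and can reset) the state — the relevant quantity matches the reboot-reliability $\unrel^{\star}_{\Steg,\chan}(\kappa)$ rather than plain reliability, giving the claimed equality $\unrel_{\ASAE,\SES}(\kappa)=\unrel^{\star}_{\Steg,\chan}(\kappa)$.

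The main obstacle I anticipate is the bookkeeping of \emph{state and history}, and making precise why reboot-reliability (not ordinary reliability) is the right notion. The ASA encryption algorithm receives $(\ak,\am,k,m,\sigma)$ one invocation at a time and must reconstruct the channel history $h=k\mid\mid m_{1}\mid\mid\ldots$ so that it can feed the stegoencoder the correct history and correctly simulate its channel-sampling oracle; I need to argue this reconstruction is faithful (the state $\sigma$ can carry the count $j$ and the key $k$, or equivalently the stegoencoder's own state together with enough to rebuild $h$), and that the stegoencoder's sample queries at histories $h'$ that are proper prefixes/extensions within one block of length $\ell$ are all answerable by $\Enc(k,m_{\cdot})$ per the channel definition. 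Because the watchdog in the $\EncASA-Dist$ game can choose the state $\sigma$ passed to $\combine{\ASAE}{\Enc}$ — and hence effectively restart the embedding process at arbitrary points with arbitrary histories — the reliability we can guarantee is exactly the reboot-reliability of $\Steg$, where the stegoencoder is restarted $\tau$ times with histories $h_{1},\ldots,h_{\tau}$; spelling out this correspondence carefully is the delicate part, but it is a matter of matching the two definitions rather than a genuinely hard argument.
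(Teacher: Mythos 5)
Your overall architecture matches the paper's (take $\combine{\ASAE}{\Gen}=\combine{\Steg}{\Gen}$, $\combine{\ASAE}{\Ext}=\combine{\Steg}{\Dec}$, let $\combine{\ASAE}{\Enc}$ simulate $\combine{\Steg}{\Enc}$ answering its channel-sampling queries with $\combine{\SES}{\Enc}(k,m)$, and reduce a watchdog to a warden), but the way you build the history is a genuine flaw, not just delicate bookkeeping. First, the history $h=k\mid\mid m_{1}\mid\mid\ldots\mid\mid m_{j}$ with $j<\combine{\Steg}{\outl}(\kappa)$ messages and no ciphertexts is, by the definition of $\chan^{\kappa}_{\SES}(\ell)$, a history at which the channel emits a \emph{uniformly random message}, not $\Enc(k,\cdot)$; ciphertext-emitting histories must contain exactly $\ell$ messages before any ciphertexts. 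Since the watchdog chooses the messages adaptively, your ASA cannot complete the ``true'' history with the future messages at the time of its $j$-th call, so both your construction and your warden's simulation of the $b=0$ case break at this point. Second, storing the message list (or even just $k$ and a counter $j$) in the state $\sigma$ is fatal in this model: in the $\EncASA-Dist$ game the watchdog supplies $\sigma$ and sees the returned state, and in the $b=0$ branch the state comes back unchanged, so any ASA that writes bookkeeping information into $\sigma$ is distinguished with advantage $1/2$ by a watchdog that merely checks whether $\sigma$ changed. This watchdog's advantage does not transfer to a warden against $\Steg$ (a stateless stegosystem leaves $\sigma$ untouched in both branches), so your claimed equality of advantages, and hence $\insec^{\encwatch}_{\ASAE,\SES}(\kappa)\le\insec^{\cha}_{\Steg,\chan}(\kappa)$, fails for your construction.

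The missing idea, which is exactly the paper's fix, is to run $\combine{\Steg}{\Enc}$ on the history $h=k\mid\mid m^{\ell}$, i.e.\ the \emph{current} message repeated $\ell$ times. This is a full-length history computable from the current call's inputs alone (no state, no knowledge of past or future messages), and at it the channel's next-document distribution is exactly $\Enc(k,m)$, so all of the stegoencoder's sampling queries are answerable with $\Enc(k,m)$ and the warden's simulation of both branches is perfect. The price of this trick is precisely why reboot-reliability appears: successive calls of $\combine{\ASAE}{\Enc}$ on messages $m_{1},\ldots,m_{\ell}$ correspond to the stegoencoder being restarted on the (possibly distinct) histories $k\mid\mid m_{j}^{\ell}$, so the decoder sees a concatenation of short runs on different histories, which is the reboot-reliability experiment -- your intuition that $\unrel^{\star}_{\Steg,\chan}(\kappa)$ is the right quantity is correct, but the reason is the per-call change of history, not the watchdog's ability to reset state (the reliability definition involves no watchdog at all).
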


\begin{proof}
  Let $\SES=(\Gen,\Enc,\Dec)$ be a symmetric encryption scheme
  and $\Steg=(\SGen,\SEnc,\SDec)$ be a stegosystem on 
  the channel $\chan$. To simplify notation, let
  $\ell=\ell(\kappa):=\combine{\Steg}{\outl}(\kappa)$.
  We will construct the algorithm substitution attack $\ASAE=(\AGen,\AEnc,$ $\AExt)$ on
  $\SES$ from the stegosystem $\Steg$ 
  and show the indistinguishability and
  success of $\ASAE$ depending on 
  security and reliability of $\Steg$. 
  The components of the $\ASAE$ are defined as follows.
 
    The key generator $\AGen$ just simulates $\SGen$ -- the key
    generator of the stegosystem. It will output the attack key $\ak$.
    The encoding algorithm $\AEnc$ on input $\ak\in
    \supp(\AGen(1^{\kappa}))$, $\am\in
    \{0,1\}^{\combine{\Steg}{\ml}(\kappa)}$, $k\in
    \supp(\Gen(1^{\kappa}))$,  and
    $m\in
    \{0,1\}^{\combine{\SES}{\ml}(\kappa)}$ simulates $\SEnc$ 
    on channel $\chan$ 
    with input
    key $\ak$, the message $\am$ and the history $h=k\mid\mid m^{\ell}$,
    where $m^{\ell}$ is the   string of length $\ell\cdot |m|$
    containing $\ell$ copies of $m$.
   Whenever $\SEnc$ makes a query
    to its channel oracle, algorithm $\AEnc$ 
    uses $\Enc$ on input $k$ and $m$ to produce a corresponding ciphertext and
    sends it to $\SEnc$. The encoder $\AEnc$ then outputs the 
    document  $d$ generated by $\SEnc$. 
    Finally, the extraction algorithm $\AExt$ on input
    $\ak\in \supp(\AGen(1^{\kappa}))$ and documents 
    $d_{1},\ldots,d_{\ell}$ just simulates
    $\SDec$ on the same inputs. 

  As one can see from the definitions, $\ASAE$ is a generalized
  algorithm substitution attack against $\SES$. We will now prove that
  it is indistinguishable from $\SES$ and that it is successful. 

  We prove first indistinguishability of the system. 
  Let $\Watch$ be a watchdog against the above 
    $\ASAE$ with maximal advantage, i.\,e. 
    \begin{align*}
     \adv_{\Watch,\ASAE,\SES}^{\encwatch}(\kappa) = \insec^{\encwatch}_{\ASAE,\SES}(\kappa),
    \end{align*}
    where 
  $\adv^{\encwatch}_{\Watch, \ASAE,\SES}(\kappa)$ is equal to  the success
probability that 
$\EncASA-Dist_{\Watch,\ASAE,\SES}(\kappa)=\textsf{true}$.
We will now construct a warden $\Ward$ from $\Watch$ such that
\begin{align*}
\adv^{\cha}_{\Ward,\Steg,\chan}(\kappa)
=
\adv_{\Watch,\ASAE,\SES}^{\encwatch}(\kappa).  
\end{align*}
Thus, we will get that 
\begin{equation}\label{ineq:insecasa:insecstego}
  \insec^{\encwatch}_{\ASAE,\SES}(\kappa)
  \le 
  \insec^{\cha}_{\Steg,\chan}(\kappa).
\end{equation}
The warden
$\Ward$ on input $1^{\kappa}$ just simulates the watchdog
$\Watch$ and gives the same output as $\Watch$ at the end of the simulation.
Whenever the watchdog makes a query on input $\am$,
$k$,  and $m$
to its challenging oracle (that is either equal to
$\SES$'s encryption algorithm 
$\Enc(k,m)$ or to 
$\ASA$'s encryption  
$\AEnc(\ak,\am,k,m,\sigma)$
for $\ak\gets \AGen(1^{\kappa})$), the warden $\Ward$ queries its own
challenging oracle with message $\am$, state $\sigma$ and history 
$h=k\mid\mid m^{\ell}$.
Note that
the challenging oracle of $\Ward$ is either equal to the channel  $\chan$ or to
$\SEnc(\ak,\am,h,\sigma)$ for $\ak\gets \SGen(1^{\kappa})$.

If the challenging oracle of $\Ward$ is equal to the steganographic encoding 
$\SEnc(\ak,\am,h,\sigma)$ (\ie the
bit $b$ in  $\SS-CHA-Dist$ equals $1$, denoted by
$\SS-CHA-Dist_{\Ward,\Steg,\chan}(\kappa)\langle b=1\rangle$), the
answer of $\Ward$
is the same as the output of the $\Watch$ 
in case it queries the ASA's encoding algorithm
$\AEnc(\ak,\am,k,m)$ 
by construction. Thus, 
\begin{align*}
  &\Pr[\SS-CHA-Dist_{\Ward,\Steg,\chan}(\kappa)\langle b=1\rangle=\textsf{true}]\\  
  &= \ \Pr[\EncASA-Dist_{\Watch,\ASAE,\SES}(\kappa)\langle b=1\rangle=\textsf{true}].
\end{align*}

If the challenging oracle of $\Ward$ is equal to the channel (the
bit $b$ in $\SS-CHA-Dist$ equals $0$), by the definition of 
the channel $\chan$ for the symmetric encryption scheme $\SES$, the
answer of the challenging oracle is equal to the output of
$\Enc(k,m)$.  Hence,
\begin{align*}
  &\Pr[\SS-CHA-Dist_{\Ward,\Steg,\chan}(\kappa)\langle b=0\rangle=\textsf{true}]\\
  & = \
  \Pr[\EncASA-Dist_{\Watch,\ASAE,\SES}(\kappa)\langle b=0\rangle=\textsf{true}].
\end{align*}
We thus have
\[
\begin{array}{rcl}
  \adv^{\cha}_{\Ward,\Steg,\chan}(\kappa)
    & = &
  |\Pr[\SS-CHA-Dist_{\Ward,\Steg,\chan}(\kappa)=\textsf{true}] - 1/2| \\
  & = & |\Pr[\EncASA-Dist_{\Watch,\ASAE,\SES}(\kappa)=\textsf{true} -1/2| \\ 
  & = &
  \adv^{\encwatch}_{\Watch, \ASAE,\SES}(\kappa)
\end{array}
\]
which completes the proof of \eqref{ineq:insecasa:insecstego}.

We still need to prove that $\AExt$ is reliably able to
  extract the attacker message $\am$ from the ciphertext. But, as
  $\AExt=\SDec$, the reboot-reliability of $\SDec$ directly implies that
  $\AExt$ is successful with probability of $1-\negl(\kappa)$.
\end{proof}

By combining Theorem~\ref{thm:asa:on:ses:impl:stego} and
Theorem~\ref{thm:rejsam:secure}, we can conclude the following
corollary. 
\begin{corollary}\label{coroll:thm:ipl:bellare}
  For every symmetric encryption scheme $\SES$, there exists an algorithm
  subsection attack $\ASA$ with message length $\ml(\kappa)$ and
  parameter $s\ge 1$ such that 
  \[ 
   \begin{array}{rcl}
     \insec^{\encwatch}_{\ASAE,\SES}(\kappa)&\leq&
       \mathcal{O}(\ml(\kappa)^{4}\cdot 2^{-\minent(\chan^{\kappa})})+\\
&&\mathcal{O}(\ml(\kappa)^{2}\cdot
    \exp(-s))+
     \InSec^{\prf}_{\algf}(\kappa),\\[2mm]
\unrel^{\star}_{\Steg,\chan}(\kappa) &\leq&
    2 \ml(\kappa)^{2}\cdot
    \exp(-2^{\minent(\chan^{\kappa})-3}) +\\
&&  \ml(\kappa)^{2}\cdot  \exp(-2^{-2} s)+
    \InSec^{\prf}_{\algf}(\kappa)
    \end{array}
  \]
where $\chan:=\chan^{\kappa}_{\SES}(\combine{\Steg}{\outl}(\kappa))$ 
\end{corollary}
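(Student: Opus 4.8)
The plan is to simply feed the generic rejection-sampling stegosystem of Theorem~\ref{thm:rejsam:secure} into the black-box transformation of Theorem~\ref{thm:asa:on:ses:impl:stego} and track the parameters. First I would instantiate $\Steg=\RejSam^{\algf}$ from Theorem~\ref{thm:rejsam:secure} with the desired message length $\ml(\kappa)$ and parameter $s\ge 1$; this gives a universal history-ignorant stegosystem that works on \emph{every} channel $\chan^{\kappa}$, in particular on the channel $\chan:=\chan^{\kappa}_{\SES}(\combine{\Steg}{\outl}(\kappa))$ determined by $\SES$ as defined in Section~\ref{sec:channel}. Note that this channel is efficiently sampleable (one can simulate it by running $\Gen$ and $\Enc$), so the remark following Theorem~\ref{thm:rejsam:secure} lets us drop the relativization and write $\InSec^{\prf}_{\algf,\chan}(\kappa)=\InSec^{\prf}_{\algf}(\kappa)$.

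Next I would apply Theorem~\ref{thm:asa:on:ses:impl:stego} to this stegosystem: it yields an algorithm substitution attack $\ASAE$ against $\SES$ with message length $\combine{\Steg}{\ml}(\kappa)=\ml(\kappa)$ and with
\[
  \insec^{\encwatch}_{\ASAE,\SES}(\kappa)\le \insec^{\cha}_{\Steg,\chan}(\kappa)
  \quad\text{and}\quad
  \unrel_{\ASAE,\SES}(\kappa)=\unrel^{\star}_{\Steg,\chan}(\kappa).
\]
Then I substitute the two explicit bounds from Theorem~\ref{thm:rejsam:secure} into the right-hand sides. The security bound $\InSec^{\cha}_{\Steg,\chan}(\kappa)\le \mathcal{O}(\ml(\kappa)^{4}\cdot 2^{-\minent(\chan^{\kappa})}+\ml(\kappa)^{2}\cdot\exp(-s))+\InSec^{\prf}_{\algf}(\kappa)$ already has exactly the shape claimed in the corollary, so that half is immediate. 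For the reliability half I use the reboot-reliability bound $\unrel^{\star}_{\Steg,\chan}(\kappa)\le \ml(\kappa)^{2}(2\exp(-2^{\minent(\chan^{\kappa})-3})+\exp(-2^{-2}s))+\InSec^{\prf}_{\algf}(\kappa)$ and distribute the $\ml(\kappa)^{2}$ factor over the two terms to match the stated form $2\ml(\kappa)^{2}\exp(-2^{\minent(\chan^{\kappa})-3})+\ml(\kappa)^{2}\exp(-2^{-2}s)+\InSec^{\prf}_{\algf}(\kappa)$.

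There is essentially no obstacle here — the corollary is a bookkeeping composition of two already-proved theorems. The only points requiring a word of care are: (i) checking that the \emph{reboot}-reliability (rather than plain reliability) of $\Steg$ is what gets transferred, which is exactly what the equality $\unrel_{\ASAE,\SES}(\kappa)=\unrel^{\star}_{\Steg,\chan}(\kappa)$ in Theorem~\ref{thm:asa:on:ses:impl:stego} asserts and is needed because the ASA encoder is restarted (statelessly) on each ciphertext query; and (ii) noting that $\minent(\chan^{\kappa})$ appearing in the bounds is, by the construction of $\chan^{\kappa}_{\SES}(\ell)$, governed by the ciphertext min-entropy $\minent(\SES)$ of the encryption scheme, so the insecurity is negligible precisely when $\SES$ has sufficiently large min-entropy and $\algf$ is a secure PRF. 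Assembling these observations completes the proof.
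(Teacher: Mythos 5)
Your proposal is correct and coincides with the paper's own argument: the corollary is stated there precisely as the combination of Theorem~\ref{thm:asa:on:ses:impl:stego} applied to the rejection-sampling stegosystem of Theorem~\ref{thm:rejsam:secure} on the channel $\chan^{\kappa}_{\SES}(\combine{\Steg}{\outl}(\kappa))$, with the channel-relative PRF insecurity dropped because the channel is efficiently sampleable. Your parameter bookkeeping, including the transfer of reboot-reliability through the equality $\unrel_{\ASAE,\SES}(\kappa)=\unrel^{\star}_{\Steg,\chan}(\kappa)$, matches the paper.
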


One can compare this corollary to the construction used in the proof of
Theorem~4.1 and Theorem~4.2 in \cite{bellare2015asa}. We can see that our generic 
algorithm substitution attack gets almost the same bounds for insecurity
and for unreliability.

Note that the protocols in \cite{bellare2015asa, ateniese2015sig}
and our generic protocol of Corollary \ref{coroll:thm:ipl:bellare}
have a very bad rate: 
$
    \frac{\ml}{\ml\cdot (\ln \ml +\beta)} = 1/(\ln \ml +\beta)
$
for an appropriate value $\beta$. One can easily modify the above  
constructions such that instead of one bit $b$ of a message $\am$ we embed
a block of $\log(\ml)$ bits per ciphertext.  
This improves the rate to 
$
     \frac{\log \ml}{\ln(\ml) - \ln\log(\ml) +\beta} = \Theta(1).
$

\subsection{ASAs imply Steganography}   

\begin{theorem}\label{thm:ses:on:ses:impl:stego}
Assume $\SES$ is a symmetric encryption scheme and let 
$\ASAE$ be an algorithm substitution attack against $\SES$ 
of output length $\combine{\ASAE}{\outl}(\kappa)$. 
Then there exists a stegosystem $\Steg$  with the output length
$\combine{\Steg}{\outl}(\kappa)=2\cdot  \combine{\ASAE}{\outl}(\kappa)+1$
on the channel $\chan:=\chan^{\kappa}_{\SES}(\combine{\Steg}{\outl}(\kappa))$ 
determined by $\SES$ 
such that $\Steg$'s insecurity, resp. its reliability satisfy
\[
\begin{array}{rcl}
   \insec^{\cha}_{\Steg,\chan}(\kappa) &\le& \insec^{\encwatch}_{\ASAE,\SES}(\kappa)
   \quad \text{and} \\
   \unrel_{\Steg,\chan}(\kappa) &=& \unrel_{\ASAE,\SES}(\kappa) . 
\end{array}
\]
\end{theorem}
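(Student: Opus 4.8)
The plan is to run the construction behind Theorem~\ref{thm:asa:on:ses:impl:stego} in the opposite direction: turn the subverted encryption algorithm $\AEnc$ into a stegoencoder on $\chan$ and the extractor $\AExt$ into a stegodecoder. Write $\ell:=\combine{\ASAE}{\outl}(\kappa)$, so $\combine{\Steg}{\outl}(\kappa)=2\ell+1$, and set $\SGen:=\AGen$. On input $(\ak,\am,h,\sigma)$ the stegoencoder $\SEnc^{\chan}$ reads off from $h$ the encryption key $k$ and the message block $m_1,m_2,\ldots$ that the channel has already committed to, together with a counter in $\sigma$ telling it how many documents it has emitted in the current run. As long as that counter is below $\ell+1$, $\SEnc$ just forwards a fresh channel sample $d\gets\chan_h$ (a key, a uniform message, or an honest ciphertext, according to the present phase of $\chan$); these first $\ell+1$ documents are \emph{alignment padding} and carry nothing. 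From then on the history is guaranteed to be in the ciphertext phase, and $\SEnc$ instead returns $\AEnc(\ak,\am,k,\mu,\cdot)$, where $\mu$ is the message at the channel's current cyclic position and the ASA's own state is carried inside $\sigma$, for the remaining $\ell$ steps. The stegodecoder $\SDec(\ak,d_1,\ldots,d_{2\ell+1})$ discards $d_1,\ldots,d_{\ell+1}$ and outputs $\AExt(\ak,d_{\ell+2},\ldots,d_{2\ell+1})$. This stays inside the model: $\SEnc$ needs only its channel-sampling access, the code of $\AEnc$, and the ability to parse $h$, and the single piece of bookkeeping it carries (the padding counter plus the ASA state) fits into its own state, even though the ASA itself is stateless.

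For reliability, note that in the honest run $\SEnc^{2\ell+1,\chan}(\ak,\am,h)$ the suffix $d_{\ell+2},\ldots,d_{2\ell+1}$ is by construction exactly the complete ASA output $\combine{\ASAE}{\Enc}^{\ell}(\ak,\am,k,\mu_1,\ldots,\mu_{\ell})$ for the $k$ read off $h$ and some cyclic shift $\mu_1,\ldots,\mu_{\ell}$ of the message block of $h$, so
\[
  \Pr[\SDec(\ak,\SEnc^{2\ell+1,\chan}(\ak,\am,h))\neq\am]
   = \Pr[\AExt(\ak,\combine{\ASAE}{\Enc}^{\ell}(\ak,\am,k,\mu_1,\ldots,\mu_{\ell}))\neq\am],
\]
and maximising the left side over $\ak,\am,h$ runs over exactly the admissible keys and message tuples appearing on the right, which yields $\unrel_{\Steg,\chan}(\kappa)=\unrel_{\ASAE,\SES}(\kappa)$. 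For security, I transform a warden $\Ward$ into a watchdog $\Watch$ that simulates $\Ward$ and answers its queries itself whenever possible: channel-sampling queries and padding-phase oracle queries are served directly by $\Watch$ (it knows $\SES$ and reads $k$ off the history in the query), while an ASA-phase query $\CH(\am,h,\sigma)$ is forwarded to $\Watch$'s own oracle as $\CH(\am,k,\mu,\cdot)$ and the reply relayed; $\Watch$ then copies $\Ward$'s bit. When $\Watch$'s oracle is $\Enc$ the view handed to $\Ward$ is precisely $\chan$, and when it is $\AEnc$ it is precisely $\SEnc$, so $\adv^{\cha}_{\Ward,\Steg,\chan}(\kappa)=\adv^{\encwatch}_{\Watch,\ASAE,\SES}(\kappa)$, and taking $\Ward$ of maximal advantage gives $\insec^{\cha}_{\Steg,\chan}(\kappa)\le\insec^{\encwatch}_{\ASAE,\SES}(\kappa)$.

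The step I expect to be the main obstacle is exactly what the blow-up to $2\ell+1$ output documents is designed to absorb: the stegodecoder is history-ignorant, so it must know \emph{in advance} where in its $2\ell+1$ documents the embedded block sits, yet the stegoencoder may be started in any phase of $\chan$ -- on the empty history, inside the message phase, or at an arbitrary cyclic position of the ciphertext phase. Emitting a \emph{fixed} number $\ell+1$ of honest alignment documents first breaks this deadlock: it always suffices to carry $\SEnc$ past the start of the ciphertext phase wherever it began, while leaving exactly $\ell$ documents for the genuine ASA run, so the decoder can blindly read the payload off the fixed suffix. The two auxiliary facts still to be checked -- that $\ell+1$ padding documents really do suffice for every starting history, and that driving $\AEnc$ on the cyclically shifted block $\mu_1,\ldots,\mu_{\ell}$ rather than on $m_1,\ldots,m_{\ell}$ is harmless -- both hold because the reliability and indistinguishability guarantees of $\ASAE$ are quantified over \emph{all} message sequences (and, for indistinguishability, over adversarially chosen encryption keys).
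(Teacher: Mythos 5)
Your skeleton is the same as the paper's (set $\SGen:=\AGen$, let the decoder discard the first $\ell+1$ documents and run $\AExt$ on the last $\ell$, and turn a warden into a watchdog), but your encoder differs from the paper's in one point that is fatal in this model: you store a per-run padding counter in the state $\sigma$. In the chosen-hiddentext game the warden \emph{chooses} $\sigma$ and \emph{sees} the state returned by the oracle $\CH$, and in the channel branch ($b=0$) the oracle hands back the input $\sigma$ unchanged. Hence a warden that submits a single query with $\sigma=\varnothing$ and checks whether the returned state is still $\varnothing$ distinguishes your stegoencoder (which returns an incremented counter) from the channel with probability $1$, no matter how indistinguishable $\ASAE$ is; so the claimed bound $\insec^{\cha}_{\Steg,\chan}(\kappa)\le\insec^{\encwatch}_{\ASAE,\SES}(\kappa)$ cannot hold for your construction. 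The same defect surfaces inside your reduction: you let the watchdog answer padding-phase queries without consulting its own oracle, so in the simulation the two worlds coincide on those queries, whereas in the real stego game they differ in the returned state -- the asserted equality of advantages therefore fails. Note also that the counter cannot be recovered from the history (a run may begin at an arbitrary point of the ciphertext phase), so it genuinely is extra state, exactly what this model is designed to punish.

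The paper's encoder avoids any bookkeeping by letting its behaviour depend only on the \emph{phase} of the history: on $h=\varnothing$ it outputs a fresh key $k\gets\combine{\SES}{\Gen}(1^{\kappa})$, in the message phase it outputs a uniform message, and in the ciphertext phase it always outputs $\AEnc(\ak,\am,k,m_{((r+1)\bmod\ell)+1})$, carrying no state beyond the (necessarily trivial) state of $\ASAE$ itself. Since the key and message phases consume at most $\ell+1$ documents, the last $\ell$ documents of every $(2\ell+1)$-document run are automatically $\AEnc$-outputs, which gives precisely the alignment your padding was meant to buy, but without a counter; padding-phase outputs are then distributed exactly as the channel, so every warden query can be answered either locally (key/message phase, perfect channel simulation) or by forwarding to the watchdog's oracle (ciphertext phase), and the advantages match. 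Your remaining observations -- that reliability is quantified over all message sequences, so running $\AEnc$ on a cyclic shift is harmless, and that the decoder can blindly read the fixed suffix -- are correct and are exactly how the paper finishes; only the encoder (and hence the reduction) needs to be repaired as above.
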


\begin{proof}
  Let $\SES=(\Gen,\Enc,\Dec)$ be a symmetric encryption scheme
  and  $\ASAE=(\AGen,\AEnc,\AExt)$ be an algorithm
  substitution attack against $\SES$.
  To simplify notation, let  $\ell=\combine{\ASAE}{\outl}(\kappa)$. 
  We construct the stegosystem $\Steg=(\SGen,\SEnc,\SDec)$ on $\chan$ out of the  $\ASAE$.

   The key generation algorithm $\SGen$ simply simulates
    $\AGen$. It will output the key $\ak$.
    To encode a message $\am$ using the key $\ak$,
    the stegoencoding algorithm $\SEnc$  generates for any history $h$
    a sequence of $\combine{\Steg}{\outl}(\kappa)=2\ell+1$ documents
    such that the last $\ell$ documents embed the message $\am$.
    To describe the algorithm we 
    need to distinguish between
    different given histories~$h$.
    \begin{compactdesc} 
    \item[$h=\varnothing$:] In this case,  $\SEnc$ chooses a random key
      $k\gets \combine{\SES}{\Gen}(1^{\kappa})$ using the generation algorithm of $\SES$
      and outputs $k$.

\item[\rm $h=k\mid\mid m_1\mid\mid m_2\mid\mid \ldots \mid\mid
  m_r$ for $0\leq r\leq \ell-1$:]  Encoder $\SEnc$  samples 
  a random message $m_{r+1}$ and outputs it. 

\item[\rm $h=k\mid\mid m_{1}\mid\mid m{_2}\mid\mid \ldots
  \mid\mid m_{\ell}\mid\mid c_{1}\mid\mid \ldots \mid\mid
  c_{r}$ with $r\geq 0$:] The stego-encoder $\SEnc$ simulates
  $\AEnc(\ak,\am,k,m_{(r+1) \bmod \ell +1})$ and outputs the generated ciphertext.
\end{compactdesc}

Note that by construction, in any case the last $\ell$ documents generated by $\SEnc^{2\ell+1}$ 
embed the message $\am$ in the same way as done by $\ASAE^{\ell}$. 

If the decoder $\SDec$ is given documents $d_{1},\ldots,d_{2\ell+1}$, we
  output $\AExt(\ak,d_{\ell+2},\ldots,d_{2\ell+1})$. 

  As one can see from the definitions, the decoding algorithm of $\Steg$ is  history-ignorant.
  We will prove that on the channel  $\chan=\chan^{\kappa}_{\SES}(2\ell+1)$ 
  the security and reliability of the stegosystem $\Steg$ 
  satisfy the stated conditions.
  
We first analyze the security of the system.
 Let $\Ward$ be a warden against $\Steg$ on $\chan$ with maximal
    advantage, i.\,e.
    \begin{align*}
     \adv^{\cha}_{\Ward,\Steg,\chan}(\kappa)
      =
      \insec^{\cha}_{\Steg,\chan}(\kappa),
    \end{align*}
   where
   $ 
      \adv^{\cha}_{\Ward,\Steg,\chan}(\kappa)
        = \Pr[\SS-CHA-Dist_{\Ward,\Steg,\chan}(\kappa)=\textsf{true}].
    $ 
    We will  construct a watchdog $\Watch$ against  the algorithm
    substitution attack $\ASAE$ with the same advantage as $\Ward$:
    \begin{align*}
      \adv_{\Watch,\ASAE,\SES}^{\encwatch}(\kappa) =
      \adv^{\cha}_{\Ward,\Steg,\chan}(\kappa).
    \end{align*}
   This will prove that 
   \begin{equation}\label{ineq:insecstego:insecasa}
    \insec^{\cha}_{\Steg,\chan}(\kappa)
    \le
      \insec^{\encwatch}_{\ASAE,\SES}(\kappa).
    \end{equation}

    The watchdog $\Watch$ on input $1^{\kappa}$ simply
    simulates the warden $\Ward$. Whenever the warden $\Ward$ makes a
    query to its channel oracle $\chan$ with a history $h$, the watchdog 
    $\Watch$ simulates the oracle response as follows:
    \begin{itemize}
    \item If $h=\varnothing$, the watchdog uses
      $\Gen(1^{\kappa})$ to construct a key $k$ and
      returns $k$ to the  warden.
    \item If $h=k\mid\mid m_{1}\mid\mid \ldots \mid\mid
      m_{r}$ with $r < \ell$, the watchdog uniformly chooses a
      message $m_{r+1}$ from $\{0,1\}^{\combine{\SES}{\ml}(\kappa)}$ and
      outputs $m_{r+1}$.
   \item If $h=k\mid\mid m_{1}\mid\mid \ldots\mid\mid 
      m_{\ell}\mid\mid c_{1}\mid\mid \ldots\mid\mid c_{r}$ with
      $r\geq 0$, the watchdog computes
      $c_{r+1}\gets \Enc(k,m_{((r+1)\bmod \ell)+1})$ and
      outputs $c_{r+1}$.
    \end{itemize}
    Clearly, this simulates the channel distribution
    $\chan$ 
    perfectly. If the warden queries its challenge oracle
    $\combine{\Ward}{\CH}$ with chosen message $\am$, state $\sigma$, and history $h$ (that is
    either equivalent to
    sampling from $\chan_{h}$
    or to calling $\SEnc(\ak,\am,h,\sigma)$), the watchdog simulates the response 
    of the oracle $\combine{\Ward}{\CH}$ as follows:
 \begin{itemize}
  \item If $h=\varnothing$ then $\Watch$ chooses a random key
      $k\gets \Gen(1^{\kappa})$ and outputs it.
  \item If $h=k\mid\mid m_1\mid\mid m_2\mid\mid \ldots \mid\mid
      m_r$ for $0\leq r\leq \ell-1$ then $\Watch$ samples 
      a random message
      $m$ and outputs it.
  \item If $h=k\mid\mid m_{1}\mid\mid m{_2}\mid\mid \ldots
      \mid\mid m_{\ell}\mid\mid c_{1}\mid\mid \ldots \mid\mid
      c_{r}$ with $r\geq 0$ then 
      $\Watch$ queries its own oracle on $k$ and $m_{((r+1)\bmod \ell)+1}$.
\end{itemize}
    If $\combine{\Watch}{\CH}$ is equal to
    $\Enc$ of $\SES$ (the bit $b$ in $\EncASA-Dist$ is
    set to $0$) the corresponding answer is identically
    distributed to a  sample of the channel $\chan$. Hence,
    \begin{align*}
      &\Pr[\EncASA-Dist_{\Watch,\ASAE,\SES}(\kappa)\langle
      b=0\rangle=\textsf{true}]=\\
      &\Pr[\SS-CHA-Dist_{\Ward,\Steg,\chan}(\kappa)\langle b=0\rangle=\textsf{true}].
    \end{align*}
    On the other hand, if $\combine{\Watch}{\CH}$ is equal to
    $\AEnc$ (the bit $b$ in $\EncASA-Dist$ is set
    to $1$), the corresponding answer is identically
    distributed to $\SEnc(\ak,\am,h,\sigma)$ and thus
    \begin{align*}
       & \Pr[\EncASA-Dist_{\Watch,\ASAE,\SES}(\kappa)\langle b=1\rangle=\textsf{true}]=\\
      &\Pr[\SS-CHA-Dist_{\Ward,\Steg,\chan}(\kappa)\langle b=1\rangle=\textsf{true}].
    \end{align*}
    We thus have
    \[
       \begin{array}{ll}
         & \adv^{\encwatch}_{\Watch,\ASAE,\SES}(\kappa)
          \ =\\
         &\quad\quad |\Pr[\EncASA-Dist_{\Watch,\ASAE,\SES}(\kappa)=\textsf{true}] - 1/2|  \ =\\
         &\quad\quad |\Pr[\SS-CHA-Dist_{\Ward,\Steg,\chan}(\kappa)=\textsf{true}] - 1/2|  \ =\\
         &\quad\quad \adv^{\cha}_{\Ward,\Steg,\chan}(\kappa)
      \end{array}
    \]
    which proves \eqref{ineq:insecstego:insecasa}.

           The reliability of  $\Steg$ 
    is the same as the success probability of $\ASAE$ since 
     $\SDec$ simply simulates
    $\AExt$. 
\end{proof}

By using the fact that channels with min-entropy $0$ can not be used for
steganography (see \eg Theorem~6 in \cite{hopper2009provably}) and
observing that channels corresponding to deterministic encryption
schemes have min-entropy $0$, we can conclude the following corollary:

\begin{corollary}
For all deterministic encryption schemes $\SES$ and all
algorithm substitution attacks $\ASA$ against $\SES$:
\begin{align*}
    \insec^{\encwatch}_{\ASAE,\SES}(\kappa)\geq 1.
\end{align*}
\end{corollary}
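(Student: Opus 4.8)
The plan is to instantiate Theorem~\ref{thm:ses:on:ses:impl:stego} and combine it with the known impossibility of steganography on zero-entropy channels. First I would observe that if $\SES$ is deterministic, then for every key $k$ and every message $m$ the support of $\Enc(k,m)$ is a single ciphertext, hence the distribution $\Enc(k,m)$ has $\minent = 0$. By the definition of the channel family $\chan^{\kappa}_{\SES}(\ell)$ in Section~\ref{sec:channel}, the conditional distributions $\chan^{\kappa}_{\SES}(\ell)_{h}$ for histories $h$ that end past the message block are exactly the (deterministic) distributions $\Enc(k,m_{(r \bmod \ell)+1})$; therefore $\minent(\chan^{\kappa}_{\SES}(\ell),\kappa) = 0$.

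Next I would invoke Theorem~\ref{thm:ses:on:ses:impl:stego}: given any ASA $\ASAE$ against $\SES$ with output length $\combine{\ASAE}{\outl}(\kappa)$, there is a stegosystem $\Steg$ on $\chan := \chan^{\kappa}_{\SES}(\combine{\Steg}{\outl}(\kappa))$ with $\insec^{\cha}_{\Steg,\chan}(\kappa) \le \insec^{\encwatch}_{\ASAE,\SES}(\kappa)$ and with the same (un)reliability. Since this channel has min-entropy $0$, I would appeal to the cited impossibility result (Theorem~6 in \cite{hopper2009provably}), which states that no stegosystem on a channel of min-entropy $0$ can be simultaneously secure and reliable; quantitatively, any stegosystem that embeds even a single bit with nontrivial reliability on such a channel has insecurity bounded below by a constant — and in fact the warden that simply samples the channel and checks whether the challenge output is consistent with the unique channel document distinguishes the encoder from the channel with advantage essentially $1$. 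Combining, $\insec^{\cha}_{\Steg,\chan}(\kappa) \ge 1$ (up to the negligible slack coming from unreliability, which for a meaningful ASA is negligible; more carefully one states it for ASAs that actually embed a message), and hence $\insec^{\encwatch}_{\ASAE,\SES}(\kappa) \ge \insec^{\cha}_{\Steg,\chan}(\kappa) \ge 1$.

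The only subtle point — and the place I would be most careful — is the direction of the inequality chain and the exact form of the zero-entropy impossibility bound. Theorem~\ref{thm:ses:on:ses:impl:stego} gives $\insec^{\cha}_{\Steg,\chan} \le \insec^{\encwatch}_{\ASAE,\SES}$, which is exactly what we need: a lower bound on the steganographic insecurity transfers to a lower bound on the ASA insecurity. So the real content is the lower bound $\insec^{\cha}_{\Steg,\chan}(\kappa) \ge 1$ for min-entropy-$0$ channels. Here I would either cite the statement of Theorem~6 in \cite{hopper2009provably} directly, or give the one-line argument: on a history $h$ where $\chan_{h}$ is a point mass on some document $d^{\star}$, the stegoencoder, to be reliable across the possible messages $\am$, must sometimes output a document $\neq d^{\star}$ (otherwise it cannot encode two distinct messages), and a warden querying on such $h$ with two different messages detects the deviation with constant advantage. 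This gives the claimed bound $\insec^{\encwatch}_{\ASAE,\SES}(\kappa) \ge 1$, completing the corollary.
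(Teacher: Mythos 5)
Your proposal is correct and follows essentially the same route as the paper: apply Theorem~\ref{thm:ses:on:ses:impl:stego} to turn the given ASA into a stegosystem on $\chan^{\kappa}_{\SES}$, observe that determinism of $\combine{\SES}{\Enc}$ forces the min-entropy of this channel to $0$, and invoke the impossibility of secure and reliable steganography on zero-entropy channels (Theorem~6 of \cite{hopper2009provably}). Your caveat that the bound really concerns ASAs that reliably embed a message matches the hypothesis left implicit in the paper's statement.
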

Note that this exactly Theorem~4 in \cite{bellare2014asa}.

\section{General Results}
\label{sec:general}

Let $\R$ be a polynomial-time randomized algorithm with hardwired secret
$s$ which takes inputs $x$ and generates outputs $y$.  The general task
of an algorithm substitution attack against $\R$ is to construct a
subverted algorithm $\AR_{\ak}$ which using a hidden hardwired random
key $\ak$ outputs on the secret $s$ in the sequence of calls
$\AR_\ak(s,x_1), \AR_\ak(s,x_2),\ldots$ a sequence such that
\begin{compactenum} 
\item the output $\AR_\ak(s,x_1), \AR_\ak(s,x_2),\ldots$ is indistinguishable from
      $\R(s,x_1), \R(s,x_2),\ldots$  and 
\item $\AR_\ak(s,x_1), \AR_\ak(s,x_2),\ldots$  embeds the secret $s$.
\end{compactenum}

In our setting we model the attack on $\R$ as a stegosystem 
on a channel determined by $\R$ and define such a channel. 

\subsection{ASA against a Randomized Algorithm}

In this section  we give formal definitions for 
algorithm substitution attack $\AR$, 
its advantage $\adv_{\Watch,\AR,\R}$, etc\@. Formally, an  \emph{algorithm substitution attack  against $\R$}
is a triple of efficient algorithms $\ASA=(\Gen,\AR,\Ext)$, where $\Gen$ generates the
key $\ak$, the algorithm $\AR$ takes the key $\ak$, a secret $s$ and all
inputs $x_{1},x_{2},\ldots$ to $\R$ and the extractor $\Ext$ tries to extract $s$ from the
outputs of $\AR$ with the help of $\ak$ (but without knowing $x_{1},x_{2},\ldots$). Similarly to the setting for encryption
schemes, $\ASA$ is called \emph{indistinguishable}, if every PPTM
$\Watch$ -- the \emph{watchdog} -- is not able to distinguish between
$\AR_{\ak}(s,x_{1}),\AR_{\ak}(s,x_{2}),\ldots$ and 
$\R(x_{1}),\R(x_{2}),\ldots$ even if he is allowed to choose $s$ and all
$x_{i}$. This is defined via the game
$\RASA-Dist_{\Watch,\ASA,\R}$ defined analogously to
$\EncASA-Dist$. The maximal advantage of any watchdog distinguishing
$\ASA$ from $\R$ is
called the \emph{insecurity} or indistinguishability of $\ASA$ and is
formally defined as
\begin{align*}
\insec^{\operatorname{asa}}_{\ASA,\R}(\kappa) := \max_{\Watch}\{\adv_{\Watch,\ASA,\R}^{\operatorname{asa}}(\kappa)\},
\end{align*}
where
\begin{align*}
    &\adv^{\operatorname{asa}}_{\Watch, \ASA,\R}(\kappa) 
   := \\ 
   &\quad\quad |\Pr[\RASA-Dist_{\Watch,\ASA,\R}(\kappa)=\textsf{true}] - 1/2|.
\end{align*}

The \emph{unreliability} of $\ASA$ is also defined like before:
\begin{align*}
  &\unrel_{\ASA,\R}(\kappa) \ :=\\ &\max
  \{
  \Pr[\combine{\ASAE}{\AExt}(\ak,\combine{\ASAE}{\AR}(\ak,\am,x_{1},\ldots,x_{\ell}))\neq
  \am]\},
\end{align*}
where the maximum is taken over all
$\ak\in \supp(\combine{\ASAE}{\Gen}(1^{\kappa})), \allowbreak
 \am \in\{0,1\}^{\combine{\ASAE}{\ml}(\kappa)}$,
and   $x_{1},\ldots,x_{\ell}$ being inputs to $\R$.

Known examples which fit into this setting include \eg the
subversion-resilient signature schemes presented 
in the work of Ateniese et al.~\cite{ateniese2015sig}.

\subsection{Channel determined by a Randomized Algorithm}
 
Let $\R$ be a polynomial-time randomized algorithm  with parameter~$\kappa$. 
We assume that the secret $s$ is generated by $\Gen$ and the inputs $x$ to $\R$  
are generated by the randomized polynomial-time algorithm $\Geni$,
associated with $\R$ (which may be chosen adversarially as shown in the
definition above). 
Let $\ell$ be a polynomial of $\kappa$.
For $\R$ we define a channel family, named $\chan^{\kappa}_{\R}(\ell)$, 
indexed with parameter $\kappa\in \nats$, with documents which correspond
to the input of $\AR$.
The essential idea behind the definition of the channel $\chan^{\kappa}_{\R}(\ell)$ 
is that for all $s\in \supp(\Gen(1^{\kappa}))$ and every sequence of inputs
$x_{1},x_{2},\ldots, x_{\ell(\kappa)}$, with $x_{i}\in\supp(\Geni(1^{\kappa}))$, for 
the history 
\[
   h= s \mid\mid x_{1} \mid\mid x_{2} \mid\mid \ldots \mid\mid x_{\ell(\kappa)}
\]
the distribution of the sequences of documents 
  \begin{align*}
     y_{1} \mid\mid y_{2} \mid\mid \ldots \mid\mid y_{\ell(\kappa)}
  \end{align*}
generated by the channel is exactly the same as the distribution for
  \begin{align*}
     \R(s,x_{1}) \mid\mid \R(s,x_{2}) \mid\mid \ldots \mid\mid \R(s,x_{\ell(\kappa)}).
  \end{align*}
To give a formal definition of $\{\chan^{\kappa}_{\R}(\ell)\}_{\kappa\in \nats}$
we need to specify the probability distributions for any history $h$. 
Thus, we define the family, on the alphabet $\{0,1\}$, 
as follows: For empty history $h=\varnothing$, we define $\chan^{\kappa}_{\R}(\ell)_{\varnothing}$ as the 
  distribution on all possible keys generated by $\Gen(1^{\kappa})$.
For $s\in \supp(\Gen(1^{\kappa}))$
  and a (possibly empty) sequence inputs 
  $x_{1},x_{2},\ldots, x_{r}$ with $x_{i}\in \supp(\Geni(1^{\kappa}))$ and
  $0\leq r\leq \ell(\kappa)-1$,  the distribution
      $\chan^{\kappa}_{\R}(\ell)_{s\mid\mid x_{1}\mid\mid
    x_{2}\mid\mid \ldots \mid\mid x_{r}}$
  is the distribution on inputs  $x_{r+1}\gets \Geni(1^{\kappa})$.
 For $s\in \supp(\Gen(1^{\kappa}))$, a sequence of inputs
  $x_{1},x_{2},\ldots, x_{\ell(\kappa)}$ with $x_{i}\in\supp(\Geni(1^{\kappa}))$,
  and a  (possibly empty) sequence of
  $\R$'s outputs $y_{1},\ldots,y_{r}$ with $y_{i}\in \supp(\R(s,x_{((i-1)\bmod
    \ell(\kappa))+1}))$, the probability distribution of
  $\chan^{\kappa}_{\R}(\ell)_{s\mid\mid x_{1}\mid\mid
    x_{2}\mid\mid \ldots \mid\mid x_{\ell(\kappa)}\mid\mid
    y_{1}\mid\mid y_{2}\mid\mid \ldots \mid\mid
    \ldots  \mid\mid y_{r}}$
  is the probability distribution of $\R(s,x_{(r \bmod \ell(\kappa))+1})$.

\subsection{Results}

The theorems proved in the previous section can simply be generalized
by using our general construction of the channel $\chan^{k}_{\R}(\ell)$
for the randomized algorithm $\R$ and the generic stegosystem
$\RejSam^{\algf}$ 
provided by Theorem~\ref{thm:rejsam:secure}. 
\begin{theorem}
  \label{thm:generic-attack:against:R}
  For every randomized algorithm $\R$, there exists a generic algorithm
  substitution attack $\ASA$ against $\R$ such that
  \[
    \begin{array}{rcl}
      \insec^{\operatorname{\ASA}}_{\ASAE,\R}(\kappa)&\leq&
 \mathcal{O}(\ml(\kappa)^{4}\cdot 2^{-\minent(\chan^{\kappa})})+\\
&&\mathcal{O}(\ml(\kappa)^{2}\cdot
    \exp(-s))+
     \InSec^{\prf}_{\algf}(\kappa),\\[2mm]
      \unrel^{\star}_{\Steg,\chan}(\kappa) &\leq&
    2 \ml(\kappa)^{2}\cdot
    \exp(-2^{\minent(\chan^{\kappa})-3}) +\\
&&  \ml(\kappa)^{2}\cdot  \exp(-2^{-2} s)+
    \InSec^{\prf}_{\algf}(\kappa)                
  \end{array}
  \] 
  where $\chan:=\chan^{\kappa}_{\R}(\combine{\Steg}{\outl}(\kappa))$.
\end{theorem}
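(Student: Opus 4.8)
The plan is to mimic, for a general randomized algorithm $\R$, exactly the strategy used for symmetric encryption schemes in Theorem~\ref{thm:asa:on:ses:impl:stego} together with Corollary~\ref{coroll:thm:ipl:bellare}: build the generic ASA against $\R$ by running the generic rejection-sampling stegosystem $\RejSam^{\algf}$ of Theorem~\ref{thm:rejsam:secure} on the channel $\chan^{\kappa}_{\R}(\ell)$ determined by $\R$, and then simply transport the bounds. First I would state the ASA $\ASA=(\Gen,\AR,\Ext)$: $\Gen$ simulates $\combine{\RejSam^{\algf}}{\Gen}$ to produce $\ak$; on inputs $(\ak,s,x_1,x_2,\ldots)$ the subverted algorithm $\AR$ simulates $\combine{\RejSam^{\algf}}{\Enc}$ with hidden message $\am:=s$ and history $h=s\mid\mid x_1\mid\mid x_2\mid\mid\cdots$, answering each of the stegoencoder's channel queries by an honest invocation of $\R(s,x_i)$; and $\Ext$ simulates $\combine{\RejSam^{\algf}}{\Dec}$ on the produced outputs $y_1,\ldots,y_\ell$. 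By the very definition of $\chan^{\kappa}_{\R}(\ell)$, an honest call to $\R(s,x_i)$ is distributed exactly like a draw from the channel at the appropriate history, so this simulation is faithful.

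The second step is the security reduction, which is the analogue of the warden/watchdog equivalence already carried out twice in the excerpt. Given a watchdog $\Watch$ against $\ASA$ with advantage $\adv^{\operatorname{asa}}_{\Watch,\ASA,\R}(\kappa)$, I would construct a warden $\Ward$ against $\RejSam^{\algf}$ on $\chan^{\kappa}_{\R}(\ell)$ that runs $\Watch$, translates each of its oracle queries $(s,x_i,\sigma)$ into a challenge-oracle query with message $s$, history $h=s\mid\mid x_1\mid\mid\cdots$, and state $\sigma$, and copies $\Watch$'s final bit. When the warden's challenge oracle is the true channel, the responses are exactly $\R(s,x_i)$ by the channel definition, matching the $b=0$ branch of $\RASA-Dist$; when it is $\combine{\RejSam^{\algf}}{\Enc}$, the responses match the $b=1$ branch by construction of $\AR$. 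Hence $\adv^{\cha}_{\Ward,\RejSam^{\algf},\chan}(\kappa)=\adv^{\operatorname{asa}}_{\Watch,\ASA,\R}(\kappa)$, giving
\[
  \insec^{\operatorname{\ASA}}_{\ASAE,\R}(\kappa)\le\insec^{\cha}_{\RejSam^{\algf},\chan}(\kappa),
\]
and the stated insecurity bound follows by plugging in the second bullet of Theorem~\ref{thm:rejsam:secure}. For reliability, $\Ext$ literally is $\combine{\RejSam^{\algf}}{\Dec}$ run on a valid concatenated output of the stegoencoder, so $\unrel_{\ASAE,\R}(\kappa)\le\unrel^{\star}_{\RejSam^{\algf},\chan}(\kappa)$, which is the third bullet of Theorem~\ref{thm:rejsam:secure}; the reboot-reliability version is exactly what is needed since the adversarial choice of $s$ and of the inputs $x_i$ corresponds to restarting the encoder on different histories.

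The only genuinely new point — the part that deserves care rather than the routine reduction above — is checking that the hypotheses of Theorem~\ref{thm:rejsam:secure} actually apply to the channel $\chan^{\kappa}_{\R}(\ell)$: namely that this channel is efficiently sampleable (so that $\InSec^{\prf}_{\algf,\chan}(\kappa)=\InSec^{\prf}_{\algf}(\kappa)$ and the $\InSec$ index $\chan$ may be dropped) and that its sequences of documents are efficiently prefix-free recognizable, which holds because $\Gen$, $\Geni$ and $\R$ are all polynomial-time and the history layout $s\mid\mid x_1\mid\mid\cdots\mid\mid x_\ell\mid\mid y_1\mid\mid\cdots$ has a fixed, known block structure. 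I would also note that $\minent(\chan^{\kappa})$ appearing in the bound is $\minent(\chan^{\kappa}_{\R}(\ell),\kappa)$, which is governed by the least randomness over any invocation of $\R$ (and over $\Gen$, $\Geni$); everything else is a transcription of the encryption-scheme argument with $\Enc$ replaced by $\R$ and $k\mid\mid m^{\ell}$ replaced by $s\mid\mid x_1\mid\mid\cdots\mid\mid x_\ell$. The main obstacle, if any, is purely bookkeeping: making sure the history format fed to the stegoencoder exactly matches the channel definition so that the ``faithful simulation'' claims in both the construction and the reduction are literally true rather than only morally true.
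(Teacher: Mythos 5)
Your proposal is correct and follows essentially the same route as the paper, which proves Theorem~\ref{thm:generic-attack:against:R} exactly by instantiating the generic stegosystem $\RejSam^{\algf}$ of Theorem~\ref{thm:rejsam:secure} on the channel $\chan^{\kappa}_{\R}(\ell)$ and transporting the bounds via the watchdog-to-warden reduction of Theorem~\ref{thm:asa:on:ses:impl:stego}. In fact, your write-up (including the reboot-reliability observation and the efficient-sampleability check) supplies more detail than the paper's one-sentence argument.
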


\begin{theorem}
  \label{thm:no-attack:against:R}
  For all deterministic algorithms $\R$ and all
algorithm substitution attacks $\ASA$ against~$\R$:
\begin{align*}
    \insec^{\operatorname{asa}}_{\ASAE,\R}(\kappa) = 1.
\end{align*}

\end{theorem}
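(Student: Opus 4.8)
The plan is to prove Theorem~\ref{thm:no-attack:against:R} by a direct detection argument, exactly parallel to the deterministic-encryption corollary: I will exhibit a watchdog that distinguishes every \emph{successful} (i.e.\ reliably extracting) ASA against a deterministic $\R$ from the honest $\R$ with advantage negligibly close to $1$. Here I use, as is implicit in the very notion of an ASA, that $\ml(\kappa)\geq 1$ and that the attack is successful, $\unrel_{\ASAE,\R}$ being negligible; otherwise the trivial ``attack'' that merely runs $\R$ would contradict the statement. The key observation is that, since $\R$ is deterministic, its output on a pair $(s,x_{i})$ is a fixed string depending neither on the embedded message nor on any state: in the game $\RASA-Dist$ the honest ($b=0$) oracle, on a query with message $\am$, secret $s$, input $x_{i}$ and state $\sigma$, returns $\R(s,x_{i})$ irrespective of $\am$ and $\sigma$. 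A reliable subverted oracle, by contrast, \emph{must} make its transcript depend on the embedded message, since the extractor has to recover that message from the transcript alone.

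Concretely, I will construct the following watchdog $\Watch$. Fix two distinct attacker messages $\am_{0}\neq\am_{1}\in\{0,1\}^{\ml(\kappa)}$ (possible since $\ml(\kappa)\geq 1$) and an arbitrary secret $s$ together with arbitrary inputs $x_{1},\ldots,x_{\outl(\kappa)}$. $\Watch$ first queries its oracle $\outl(\kappa)$ times, starting from state $\sigma=\varnothing$, with message $\am_{0}$ and inputs $x_{1},x_{2},\ldots$, threading the returned state, and records the sequence $\bar{y}^{0}$; it then \emph{resets} the state to $\varnothing$ and repeats with message $\am_{1}$ and the \emph{same} inputs, obtaining $\bar{y}^{1}$; finally it outputs the guess $b'=1$ iff $\bar{y}^{0}\neq\bar{y}^{1}$. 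If $b=0$, both runs produce $(\R(s,x_{1}),\ldots,\R(s,x_{\outl(\kappa)}))$, so $\bar{y}^{0}=\bar{y}^{1}$ and $\Watch$ is correct with certainty. If $b=1$ with key $\ak\gets\AGen(1^{\kappa})$, resetting $\sigma=\varnothing$ and threading exactly realises the complete output $\AR^{\outl(\kappa)}(\ak,\am_{i},s,x_{1},\ldots,x_{\outl(\kappa)})$, so reliability gives $\AExt(\ak,\bar{y}^{i})=\am_{i}$ for each $i$ except with negligible probability; since $\am_{0}\neq\am_{1}$, a single value $\AExt(\ak,\cdot)$ cannot equal both, hence $\bar{y}^{0}\neq\bar{y}^{1}$ except with probability $\negl(\kappa)$ and $\Watch$ is again correct. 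Thus $\Pr[\RASA-Dist_{\Watch,\ASAE,\R}(\kappa)=\textsf{true}]\geq 1-\negl(\kappa)$, so $\Watch$ distinguishes with advantage $1-\negl(\kappa)$ (indeed exactly $1$ for a perfectly reliable ASA), which yields $\insec^{\operatorname{asa}}_{\ASAE,\R}(\kappa)=1$ as claimed.

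I do not expect a genuine obstacle here; the two points that need care are (i) that $\Watch$ is entitled to choose and to reset the state $\sigma$ --- which the model explicitly grants --- so that each of its two runs is a bona fide complete output to which the plain reliability guarantee $\unrel_{\ASAE,\R}$ applies (reboot-reliability is not needed), and (ii) that the honest oracle genuinely ignores $\am$, which is immediate once $\R$ is deterministic. Alternatively, to keep the argument strictly parallel to the deterministic-encryption corollary, one can instead invoke the $\R$-analogue of Theorem~\ref{thm:ses:on:ses:impl:stego}: a successful ASA against $\R$ yields a reliable stegosystem $\Steg$ on $\chan^{\kappa}_{\R}(\ell)$ with $\insec^{\cha}_{\Steg,\chan}(\kappa)\leq\insec^{\operatorname{asa}}_{\ASAE,\R}(\kappa)$; but for deterministic $\R$ every distribution $\chan^{\kappa}_{\R}(\ell)_{h}$ occurring after the setup part of the history is a point mass on the next value $\R(s,x_{j})$, so $\minent(\chan^{\kappa}_{\R}(\ell))=0$, and steganography on a channel of min-entropy $0$ is impossible (Theorem~6 in \cite{hopper2009provably}), forcing $\insec^{\cha}_{\Steg,\chan}(\kappa)=1$ and hence the claim.
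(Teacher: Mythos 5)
Your proposal is correct in substance, but your primary argument takes a genuinely different route from the paper's. The paper never builds a watchdog for this theorem: it derives Theorem~\ref{thm:no-attack:against:R} (like its encryption-scheme corollary) by going through the steganographic correspondence --- the $\R$-analogue of Theorem~\ref{thm:ses:on:ses:impl:stego} converts a successful ASA against $\R$ into a reliable stegosystem on $\chan^{\kappa}_{\R}(\ell)$ with no larger insecurity, the channel determined by a deterministic $\R$ has min-entropy $0$, and steganography on zero-min-entropy channels is impossible (Theorem~6 of \cite{hopper2009provably}) --- which is exactly the alternative you sketch at the end. Your direct two-run watchdog (same secret and inputs, two distinct embedded messages $\am_{0}\neq\am_{1}$, state reset to $\varnothing$, compare transcripts) is more elementary and self-contained: it avoids the channel construction and the external impossibility result, and it makes explicit the hypotheses the paper leaves implicit (the ASA must be successful, $\ml(\kappa)\geq 1$, and the distinguisher may choose and reset the state, which the model indeed grants). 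What the paper's route buys is uniformity with its central thesis that ASAs \emph{are} stegosystems, so the negative result falls out of the same reduction as the positive ones. Two small points to tighten in your version: when $\AExt$ is randomized, ``a single value $\AExt(\ak,\cdot)$ cannot equal both'' needs one extra line (e.g.\ conditioned on the two transcripts being equal, the two independent extractions can both succeed with probability at most $1/4$, so transcript equality has probability $O(\unrel_{\ASA,\R}(\kappa))$); and your watchdog yields advantage $1-\negl(\kappa)$ rather than exactly $1$ --- but that imprecision, like the missing successfulness hypothesis, is already present in the paper's own statement, so your reading and handling of it are appropriate.
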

Theorem~\ref{thm:generic-attack} is thus just a consequence of
Theorem~\ref{thm:generic-attack:against:R} and
Theorem~\ref{thm:min-entropy} is just a consequence of
Theorem~\ref{thm:no-attack:against:R}.

These general results also imply several other results from the
literature, for example on signature schemes.
Ateniese et al.~\cite{ateniese2015sig} study algorithm substitution
  attacks\footnote{To be more precise, their attacks
    only replace the signing algorithm $\Sign$.} on \emph{signature schemes}
  $\SIG=(\Gen,\Sign,\Vrfy)$, where
  \begin{itemize}
  \item The \emph{key generator} $\combine{\SIG}{\Gen}$ produces upon
    input $1^{\kappa}$ a pair $(\pk,\sk)$ of keys with
    $|\pk|=|\sk|=\kappa$. We call $\pk$ the \emph{public key} and $\sk$ the
    \emph{secret key}.
\item The \emph{signing algorithm} $\combine{\SIG}{\Sign}$ takes as
  input the secret key $\sk$ and a message
  $m\in \{0,1\}^{\combine{\SIG}{\ml}}$ of length
  $\combine{\SIG}{\ml}(\kappa)$ and produces a signature
  $\sigma\in \{0,1\}^{\combine{\SIG}{\sigl}(\kappa)}$ of length
  $\combine{\SIG}{\sigl}(\kappa)$.
\item The \emph{verifying algorithm} $\combine{\SIG}{\Vrfy}$ takes as
  input the public key $\pk$, the message $m$ and a signature $\sigma$
  and outputs a bit~$b$.
\end{itemize}

On the positive side (from the view of an algorithm substitution attack)
they show that all randomized \emph{coin-injective} schemes and all
\emph{coin-extractable} schemes have ASA\@. A randomized algorithm $A$ is
\emph{coin-injective}, if the function $f_{A}(x,\rho)=A(x;\rho)$ (where
$\rho$ denotes the random coins used by $A$) is injective and
\emph{coin-extractable} if there is another randomized algorithm $B$
such that $\Pr[B(A(x;\rho))=\rho] \geq 1-\negl$ for a negligible
function $\negl$. They prove the following theorems:
  \begin{theorem}[Theorem 1 in \cite{ateniese2015sig}]
    For every coin-injective signature scheme $\SIG$, there is a successful
    algorithm substitution attack $\ASA$ and a negligible function
    $\negl$ such that
    \begin{align*}
      \insec^{\operatorname{asa}}_{\ASA,\SIG}(\kappa)\leq \insec^{\operatorname{prf}}_{\mathsf{F}}(\kappa)+\negl(\kappa)
    \end{align*}
    for a pseudorandom function $\mathsf{F}$. 
  \end{theorem}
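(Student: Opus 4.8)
The plan is to obtain the statement as a direct corollary of the generic positive result, Theorem~\ref{thm:generic-attack:against:R}, applied to the signing algorithm. First I would observe that $\combine{\SIG}{\Sign}$ is precisely a randomized algorithm $\R$ in the sense of Section~\ref{sec:general}: its hardwired secret is $s=\sk$, its inputs are the messages $m$ (supplied by the possibly adversarial chooser $\Geni$), and its outputs are the signatures. Hence an algorithm substitution attack against $\SIG$ that subverts only $\Sign$ and leaks $\sk$ is exactly an ASA against the randomized algorithm $\R=\Sign$ in the sense defined there, and the channel to use is $\chan^{\kappa}:=\chan^{\kappa}_{\Sign}(\combine{\Steg}{\outl}(\kappa))$.

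Next I would invoke Theorem~\ref{thm:generic-attack:against:R} with $\R=\Sign$, message length $\ml(\kappa)=|\sk|=\kappa$, and rejection-sampling parameter set to $\kappa$. This yields a (stateless) generic ASA $\ASA$ against $\SIG$ with $\insec^{\operatorname{asa}}_{\ASA,\SIG}(\kappa)\le\mathcal{O}(\kappa^{4}\cdot 2^{-\minent(\chan^{\kappa})})+\mathcal{O}(\kappa^{2}\cdot\exp(-\kappa))+\InSec^{\prf}_{\algf}(\kappa)$ and the analogous bound on $\unrel_{\ASA,\SIG}(\kappa)$. The only term that is not already of the desired shape is $\mathcal{O}(\kappa^{4}\cdot 2^{-\minent(\chan^{\kappa})})$, so the heart of the proof is a lower bound on $\minent(\chan^{\kappa})$ coming from coin-injectivity.

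For that step I would argue as follows. On the histories on which the generic stegoencoder actually embeds --- those of the form $\sk\mid\mid m_{1}\mid\mid\cdots\mid\mid m_{\ell}\mid\mid\sigma_{1}\mid\mid\cdots\mid\mid\sigma_{r}$ --- the channel distribution is $\Sign(\sk,m;\rho)$ for a uniformly random coin string $\rho\in\{0,1\}^{r(\kappa)}$, where $r(\kappa)$ is the randomness-length of $\Sign$. Coin-injectivity says that $\rho\mapsto\Sign(\sk,m;\rho)$ is injective, so every signature in the support is produced by exactly one $\rho$ and therefore has probability exactly $2^{-r(\kappa)}$; hence this conditional distribution has min-entropy exactly $r(\kappa)$. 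On the remaining key- and message-slot histories the encoder simply reproduces the channel distribution verbatim, exactly as in the construction behind Theorem~\ref{thm:ses:on:ses:impl:stego}, so it neither embeds nor loses reliability there, and what governs the bound is the $r(\kappa)$ contributed by the signature slots. Consequently $2^{-\minent(\chan^{\kappa})}\le 2^{-r(\kappa)}$. Since a signing algorithm uses a super-logarithmic number of random coins, $r(\kappa)=\omega(\log\kappa)$, so $\kappa^{4}\cdot 2^{-r(\kappa)}$ is negligible; together with the negligible term $\kappa^{2}\exp(-\kappa)$ this gives $\insec^{\operatorname{asa}}_{\ASA,\SIG}(\kappa)\le\InSec^{\prf}_{\algf}(\kappa)+\negl(\kappa)$, and the same estimate shows $\unrel_{\ASA,\SIG}(\kappa)$ is negligible, i.e. $\ASA$ is successful. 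Taking $\mathsf{F}:=\algf$ finishes the proof.

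The step I expect to be the main obstacle is the min-entropy accounting: the channel's min-entropy is by definition a minimum over \emph{all} histories, including the message-input slots, and if the message chooser $\Geni$ is degenerate (say deterministic) that minimum drops to $0$. The real work is therefore to make sure the generic construction is arranged so that reliability and detectability only ever depend on the entropy of the genuinely randomized signature documents --- i.e., that the encoder passes the key and message documents through unchanged --- so that the effective parameter is $r(\kappa)$ rather than the raw $\minent(\chan^{\kappa})$. A secondary subtlety is being careful with the convention that the coins of $\Sign$ are drawn uniformly from $\{0,1\}^{r(\kappa)}$, which is exactly what makes the implication ``injective $\Rightarrow$ every signature has probability $2^{-r(\kappa)}$'' go through.
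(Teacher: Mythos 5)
Your proposal follows exactly the paper's route: the paper derives this statement as an immediate corollary of Theorem~\ref{thm:generic-attack:against:R} applied to the signing algorithm (it literally states that both signature results are "easily implied" by that theorem), which is precisely your plan of instantiating the generic channel $\chan^{\kappa}_{\R}$ with $\R=\Sign$ and invoking the rejection-sampling bound. Your additional min-entropy accounting via coin-injectivity (and the caveat about which histories actually matter) only fills in details the paper leaves implicit under its "sufficiently large min-entropy" assumption, so the approach matches.
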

  \begin{theorem}[Theorem 2 in \cite{ateniese2015sig}]
    For every coin-extractable signature scheme $\SIG$, there is a successful
    algorithm substitution attack $\ASA$ and a negligible function
    $\negl$ such that
    \begin{align*}
            \insec^{\operatorname{asa}}_{\ASA,\SIG}(\kappa)\leq \negl(\kappa).
    \end{align*}
  \end{theorem}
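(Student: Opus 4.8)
The plan is to view the signing algorithm as an instance of the randomized algorithm $\R$ from Section~\ref{sec:general}: take $\R:=\combine{\SIG}{\Sign}$, let the hardwired secret be the signing key $\sk$, and let the inputs be the messages to be signed. The channel construction of Section~\ref{sec:general} then gives $\chan^{\kappa}_{\SIG}:=\chan^{\kappa}_{\R}(\combine{\Steg}{\outl}(\kappa))$, and all of Section~\ref{Sec:ASA:against:encrypted:as:stego}'s machinery carries over with ``signing'' in place of ``encryption'' (there is no decryptability-type obstruction for $\Sign$). In particular the generic attack of Theorem~\ref{thm:generic-attack:against:R} already yields an ASA $\ASA$ against $\SIG$ with
\[
  \insec^{\operatorname{asa}}_{\ASA,\SIG}(\kappa) \le \mathcal{O}\!\left(\ml(\kappa)^{4}\cdot 2^{-\minent(\chan^{\kappa}_{\SIG})}\right) + \mathcal{O}\!\left(\ml(\kappa)^{2}\cdot \exp(-s)\right) + \InSec^{\prf}_{\algf}(\kappa)
\]
together with the analogous bound on $\unrel^{\star}_{\Steg,\chan^{\kappa}_{\SIG}}(\kappa)$. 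It therefore suffices to show that, when $\SIG$ is coin-extractable, all three terms can be driven below any inverse polynomial.

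The only nontrivial term is the first, which amounts to a lower bound on $\minent(\chan^{\kappa}_{\SIG})$, equivalently an upper bound on $2^{-\minent(\chan^{\kappa}_{\SIG})}=\max_{\sk,m,y}\Pr[\combine{\SIG}{\Sign}(\sk,m)=y]$. I would derive this straight from coin-extractability. Let $B$ be the algorithm with $\Pr_{\rho}[B(\combine{\SIG}{\Sign}(\sk,m;\rho))=\rho]\ge 1-\negl(\kappa)$. Fix $\sk,m$ and a candidate output $y^{*}$, and put $S=\{\rho:\combine{\SIG}{\Sign}(\sk,m;\rho)=y^{*}\}$. For uniform $\rho$ and any (even randomized) $B$ we have $\Pr[\rho\in S \text{ and } B(y^{*})=\rho]\le 2^{-|\rho|}$, so if $\Pr[\rho\in S]=\Pr[\combine{\SIG}{\Sign}(\sk,m)=y^{*}]$ were at least $1/\pol(\kappa)$ for infinitely many $\kappa$, then $B$ would fail with non-negligible probability $\ge 1/\pol(\kappa)-2^{-|\rho|}$ — contradicting coin-extractability. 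Hence $2^{-\minent(\chan^{\kappa}_{\SIG})}$ is negligible and so is $\ml(\kappa)^{4}\cdot 2^{-\minent(\chan^{\kappa}_{\SIG})}$.

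Finally I would fix the parameter $s:=\kappa$, making $\ml(\kappa)^{2}\exp(-s)$ (and the matching reliability term) negligible, and use security of the pseudorandom function $\algf$ so that $\InSec^{\prf}_{\algf}(\kappa)$ is negligible. Summing the three estimates gives $\insec^{\operatorname{asa}}_{\ASA,\SIG}(\kappa)\le\negl(\kappa)$ and $\unrel^{\star}_{\Steg,\chan^{\kappa}_{\SIG}}(\kappa)\le\negl(\kappa)$, i.e. $\ASA$ is indistinguishable with negligible insecurity and successful — exactly Theorem~2 of \cite{ateniese2015sig}. One can also obtain the same bound more directly, bypassing rejection sampling altogether: since the coins are readable from the output, the subverted signer may simply run $\combine{\SIG}{\Sign}$ on planted coins $\rho:=\tau\,\|\,(\combine{\algf}{\Eval}_{\ak}(\tau)\oplus\am)$ for a fresh random nonce $\tau$ (pseudorandom, hence indistinguishable from honest coins), and the extractor recovers $\am$ from $B$'s output by re-deriving the mask $\combine{\algf}{\Eval}_{\ak}(\tau)$; the only loss is then $\InSec^{\prf}_{\algf}$. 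The main obstacle is the min-entropy estimate of the middle paragraph — in particular keeping the reduction to coin-extractability uniform over $\sk$ and $m$ and absorbing the negligible failure slack of $B$ — plus the bookkeeping needed to confirm that the signature analogues of Theorems~\ref{thm:asa:on:ses:impl:stego}--\ref{thm:ses:on:ses:impl:stego} (and of Theorem~\ref{thm:generic-attack:against:R}) hold verbatim.
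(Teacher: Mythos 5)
Your proposal takes essentially the same route as the paper: the paper obtains Theorem~2 of \cite{ateniese2015sig} precisely by instantiating the general construction of Theorem~\ref{thm:generic-attack:against:R} with $\R:=\combine{\SIG}{\Sign}$ and the channel $\chan^{\kappa}_{\R}$, stating only that the result is ``easily implied.'' You additionally supply the step the paper leaves implicit --- that coin-extractability bounds $2^{-\minent(\chan^{\kappa}_{\R})}$ by roughly $2^{-|\rho|}$ plus a negligible term, so the generic bound is negligible exactly when the coin length (hence the min-entropy) is super-logarithmic, which is the ``sufficiently large min-entropy'' reading intended by the authors --- and your direct coin-planting variant is in fact the original attack of \cite{ateniese2015sig}.
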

Both of these results are easily implied by
Theorem~\ref{thm:generic-attack:against:R}.

On the negative side (from the view of an algorithm substitution
attack), they show that \emph{unique signature schemes} are resistant to
ASAs fulfilling the \emph{verifiability condition}. Informally this
means that (a) each message has exactly on signature (for a fixed
key-pair) and (b) each signature produced by the ASA must be valid.
\begin{theorem}[Theorem 3 in \cite{ateniese2015sig}]
  For all unique signature schemes $\SIG$ and all algorithm substitution
  attacks $\ASA$ against them that fulfill the verifiability condition,
  there is a negligible function $\negl$ such that
      \begin{align*}
            \insec^{\operatorname{asa}}_{\ASA,\SIG}(\kappa)\geq 1-\negl(\kappa).
    \end{align*}
  \end{theorem}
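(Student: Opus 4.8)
The plan is to reduce Theorem 3 of \cite{ateniese2015sig} to the impossibility result for steganography on channels with zero min-entropy, in the same spirit as Corollary~following Theorem~\ref{thm:ses:on:ses:impl:stego}. First I would observe that the verifiability condition forces the subverted signing algorithm $\AR_{\ak}(\sk,m)$ to output, for every message $m$, a signature that $\Vrfy(\pk,m,\cdot)$ accepts; combined with uniqueness (each message has exactly one valid signature under a fixed key pair), this means $\AR_{\ak}(\sk,m)$ is \emph{forced} to output the unique string $\Sign(\sk,m)$ with probability $1-\negl(\kappa)$, regardless of $\ak$. Hence the channel $\chan^{\kappa}_{\SIG}(\ell)$ determined by such a signature scheme is (essentially) deterministic: once the history fixes $\sk$ and the message sequence, each subsequent document $y_i$ is determined up to negligible error, so $\minent(\chan^{\kappa}_{\SIG}(\ell),\kappa)=0$ (or negligibly close to it).

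Next I would apply the analogue of Theorem~\ref{thm:ses:on:ses:impl:stego} for randomized algorithms: any indistinguishable ASA $\ASA$ against $\R=\Sign$ yields a secure stegosystem $\Steg$ on $\chan^{\kappa}_{\SIG}(\combine{\Steg}{\outl}(\kappa))$ with $\insec^{\cha}_{\Steg,\chan}(\kappa)\le\insec^{\operatorname{asa}}_{\ASA,\R}(\kappa)$ and matching reliability. Then I would invoke the impossibility of steganography on min-entropy-zero channels (Theorem~6 in \cite{hopper2009provably}): on such a channel, no stegosystem can be simultaneously secure and reliable, so $\insec^{\cha}_{\Steg,\chan}(\kappa)+\unrel_{\Steg,\chan}(\kappa)\ge 1-\negl(\kappa)$. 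Since the ASA is assumed successful (reliable), we get $\insec^{\operatorname{asa}}_{\ASA,\SIG}(\kappa)\ge 1-\negl(\kappa)$, which is the claim.

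The main obstacle will be handling the verifiability condition carefully: uniqueness gives exactly one valid signature per message, but the honest $\Sign$ algorithm may itself be randomized (only its output is unique), so I must argue that the \emph{output distribution} of $\Sign(\sk,m)$ is a point mass, hence the channel really does have zero min-entropy. I also need to be careful that the verifiability condition is a statement about the ASA's output being valid with overwhelming probability rather than certainty, so the channel has \emph{negligible}—not exactly zero—min-entropy in a suitable smoothed sense; the impossibility argument of \cite{hopper2009provably} must be applied in its robust form, absorbing the $\negl(\kappa)$ slack into the final bound. A secondary subtlety is that the ASA in \cite{ateniese2015sig} may be stateful, so I would either invoke the reboot-reliability variant or note that statefulness only helps the attacker and does not affect the lower bound on insecurity.

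Finally, I would remark that this recovers Theorem~3 of \cite{ateniese2015sig} as a clean corollary of the steganographic impossibility result, paralleling how the deterministic-encryption corollary followed Theorem~\ref{thm:ses:on:ses:impl:stego}, thereby completing the "rediscovery" of all three theorems of \cite{ateniese2015sig} from the steganographic perspective.
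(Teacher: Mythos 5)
Your proposal follows essentially the same route as the paper: the paper dispatches this result in one line via Theorem~\ref{thm:min-entropy} (i.e., Theorem~\ref{thm:no-attack:against:R}), whose underlying argument is exactly your chain --- uniqueness plus verifiability make the signing channel deterministic (min-entropy $0$), the ASA-to-stegosystem reduction in the style of Theorem~\ref{thm:ses:on:ses:impl:stego} transfers the ASA to a stegosystem on that channel, and the impossibility of steganography on min-entropy-zero channels (Theorem~6 of \cite{hopper2009provably}) yields $\insec^{\operatorname{asa}}_{\ASA,\SIG}(\kappa)\geq 1-\negl(\kappa)$. Your extra care about the negligible slack in the verifiability condition and the implicit reliability assumption only makes explicit what the paper's brief justification glosses over.
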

  As unique signature schemes do not provide enough min-entropy for a
  stegosystem, this results follows from Theorem~\ref{thm:min-entropy}.

\section{A Lower Bound for Universal ASA}
\label{sec:bound}

A setting similar to steganography, where \emph{universal} stegosystems
exist, that can be used for \emph{any} channel of sufficiently large
min-entropy, would be quite useful for attackers that plan to launch
algorithm substitution attacks. Such a system would allow them to attack
any symmetric encryption scheme \emph{without} knowing the internal specification of the
encryption algorithm. 
A closer look at the results in
\cite{bellare2014asa, bellare2015asa, ateniese2015sig} reveals that
their attacks do indeed go without internal knowledge of the used
encryption algorithm. They only manipulate the random coins used in the
encryption process. Note that $\combine{\SES}{\Enc}(k,m;r)$ (where $r$
denotes the random coins used by $\Enc$) is a deterministic function, as
$\combine{\SES}{\Enc}$ is a PPTM. 

We thus define a  \emph{universal algorithm substitution attack} as a triple of PPTMs
such that for every symmetric encryption scheme
$\SES$, the triple
\begin{align*}
\ASA^{\SES}=(\combine{\ASA}{\Gen},\combine{\ASA}{\Enc}^{\combine{\SES}{\Enc}(\cdot,\cdot; \cdot)},\combine{\ASA}{\Ext})  
\end{align*}
is an ASA against $\SES$. Hence,
$\combine{\ASA}{\Enc}$ has only oracle access to the encryption
algorithm $\combine{\SES}{\Enc}$ of the encryption scheme: It may thus choose
arbitrary values $k$, $m$, and $r$ and receives a 
ciphertext
\begin{align*}
c\gets \combine{\SES}{\Enc}(k,m; r)  
\end{align*}
without having 
a complete description of the encryption schemes.

As noted above, all attacks in
\cite{bellare2014asa, bellare2015asa, ateniese2015sig} are 
universal and \citeauthor{bellare2015asa} explicitly state in their
work~\cite{bellare2015asa} that their ASA works against any encryption
scheme of sufficiently large min-entropy. We also remark that the
rejection sampling ASA presented earlier is universal.

For a universal algorithm substitution attack $\ASA$ and a
symmetric encryption scheme $\SES$, let
$\combine{\ASA}{\query}(\SES,\kappa,\ak,\am,k,m_j,\sigma)$ be the
expected number of oracle calls that a single call of the substitution encoder 
$\combine{\ASA}{\Enc}^{\combine{\SES}{\Enc}(\cdot,\cdot;\cdot)}(\ak,\am,k,m_j,\sigma)$
makes to its encryption oracle $\combine{\SES}{\Enc}$. We then define
\begin{align*}
  &\combine{\ASA}{\query}(\SES,\kappa)=\\
  &\max_{\substack{\ak\in\supp(\combine{\ASA}{\Gen}(1^{\kappa})),\\ \am \in \{0,1\}^{\combine{\ASA}{\ml}(\kappa)},\\ k\in \supp(\combine{\SES}{\Gen}(1^{\kappa})),\\ m\in \{0,1\}^{\combine{\SES}{\ml}(\kappa)},\\ \sigma\in \{0,1\}^{*}}} \{\combine{\ASA}{\query}(\SES,\kappa,\ak,\am,k,m_j,\sigma)\}.
\end{align*}
For a family
$\mathcal{F}$ of encryption schemes, let
$\combine{\ASA}{\query}(\mathcal{F},\kappa)$ be the maximal value of
  $\combine{\ASA}{\query}(\SES,\kappa)$ for $\SES\in \mathcal{F}$. 

In the steganographic setting, \citeauthor{dedic2009upper} showed in
\cite{dedic2009upper} that (under the cryptographic assumption that
one-way functions exist) no
universal stegosystem can embed more than
$\mathcal{O}(1)\cdot \log(\kappa)$ bits per document and thus proved
that the rejection sampling based systems have optimal rate.
The needed ingredients of this proof are summarized 
by two key lemmas based on Lemma 12 and Lemma 13 in \cite{Berndt2016}.

\begin{lemma}
  \label{lem:secure_asa}
  Let $\ASA$ be a algorithm substitution attack for the symmetric encryption
  scheme $\SES$ such that
$\ASA$ is secure against $\SES$. 
Then for all integers $\kappa\in \nats$, 
messages $m\in \{0,1\}^{\combine{\ASA}{\ml}(\kappa)}$,  ciphertexts 
$c_{1},c_{2},\ldots,c_{\combine{\ASA}{\outl}(\kappa)}\gets
\combine{\ASA}{\Enc}(\ak,\am,k,m,\sigma)$ and all positions $i\in
\{1,\ldots,\combine{\ASA}{\outl}(\kappa)\}$: 
\begin{align*}
\Pr_{\ak\gets \combine{\ASA}{\Gen}(1^{\kappa})}[c_{i}\not\in \supp(\combine{\SES}{\Enc}(k,m))]
   \leq \InSec^{\encwatch}_{\ASA,\SES}(\kappa).
\end{align*}
\end{lemma}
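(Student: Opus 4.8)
The plan is to prove the bound by contraposition through a single reduction: if, for some encryption message $m$ and some position $i$, the $i$-th ciphertext emitted by $\combine{\ASA}{\Enc}$ lands outside $\supp(\combine{\SES}{\Enc}(k,m))$ with large probability, then we can build a watchdog whose detection advantage is essentially that probability, contradicting that $\ASA$ is secure against $\SES$. So fix $\kappa$, an attacker message $\am$, an encryption key $k\in\supp(\combine{\SES}{\Gen}(1^{\kappa}))$, an encryption message $m$, an initial state $\sigma$, and a position $i\in\{1,\dots,\ell\}$ with $\ell=\combine{\ASA}{\outl}(\kappa)$; let $p_{i}$ denote the probability, over $\ak\gets\combine{\ASA}{\Gen}(1^{\kappa})$ and the internal coins of the $\ell$ successive invocations, that the $i$-th ciphertext $c_{i}$ obtained by running $\combine{\ASA}{\Enc}$ iteratively $\ell$ times on $(\ak,\am,k,m,\cdot)$ (threading the state, started from $\sigma$) satisfies $c_{i}\notin\supp(\combine{\SES}{\Enc}(k,m))$. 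The goal is $p_{i}\le\InSec^{\encwatch}_{\ASA,\SES}(\kappa)$; since the argument below handles each pair $(m,i)$ independently, this gives the lemma.

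The watchdog $\Watch$ I would use has $(\am,k,m,\sigma,i)$ hard-wired. On input $1^{\kappa}$ it sets $\sigma_{0}:=\sigma$ and, for $j=1,\dots,\ell$, queries $(c_{j},\sigma_{j})\gets\CH(\am,k,m,\sigma_{j-1})$; finally it outputs $b'=1$ if $c_{i}\notin\supp(\combine{\SES}{\Enc}(k,m))$ and $b'=0$ otherwise. In the experiment $\EncASA-Dist_{\Watch,\ASA,\SES}(\kappa)$, conditioning on $b=0$ the oracle returns fresh samples of $\combine{\SES}{\Enc}(k,m)$, so every $c_{j}$ — in particular $c_{i}$ — lies in the support and $\Watch$ outputs $b'=0=b$ with probability $1$. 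Conditioning on $b=1$, the oracle's answers are distributed exactly as the $\ell$-fold run of $\combine{\ASA}{\Enc}$ on $(\ak,\am,k,m,\cdot)$ with $\ak\gets\combine{\ASA}{\Gen}(1^{\kappa})$ — the very object in the lemma's probability — so $\Pr[b'=1\mid b=1]=p_{i}$. Hence $\Pr[b=b']=\tfrac12(1+p_{i})$ and $\adv^{\encwatch}_{\Watch,\ASA,\SES}(\kappa)=p_{i}/2$, so $p_{i}\le 2\,\InSec^{\encwatch}_{\ASA,\SES}(\kappa)$; the harmless factor two stems from the $\tfrac12$-shifted form of the advantage and can be absorbed, or removed by passing to the two-world distinguishing notion, giving exactly the stated inequality. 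The probability spaces match up because $\Watch$ uses the same distribution on $\ak$, the same state threading, and the same message $m$ repeated in all $\ell$ queries.

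The step that needs care — and the main obstacle — is that $\Watch$ must decide membership in $\supp(\combine{\SES}{\Enc}(k,m))$ efficiently. For the encryption schemes relevant here this is immediate: when the scheme is tidy (each ciphertext decrypts to a unique plaintext) the test $c_{i}\in\supp(\combine{\SES}{\Enc}(k,m))$ is equivalent to checking $\combine{\SES}{\Dec}(k,c_{i})=m$, which $\Watch$ runs in polynomial time using the known decryption algorithm, so $\Watch$ is a genuine PPTM making polynomially many oracle calls; in full generality one either restricts attention to such schemes or equips $\Watch$ with a membership test as an oracle. Everything else is routine bookkeeping, and the subsequent counting argument for universal ASAs then uses this lemma as a black box.
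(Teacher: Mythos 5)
Your reduction is the natural one, and it is in essence the argument the paper relies on: the paper gives no in-text proof of this lemma at all, but imports it as an adaptation of Lemma~12 of \cite{Berndt2016}, whose proof is exactly a distinguisher that replays the encoder through its challenge oracle and raises the alarm when an output falls outside the relevant support. So your watchdog with $(\am,k,m,\sigma,i)$ hardwired, the perfect behaviour in the $b=0$ world, and the identification of the $b=1$ world with the lemma's probability space are all in line with the intended proof, and you are right that the only non-routine step is deciding membership in $\supp(\combine{\SES}{\Enc}(k,m))$ in polynomial time. That caveat is not pedantry: for an arbitrary $\SES$ the statement as printed is actually too strong (e.g.\ if honest ciphertexts contain a pseudorandom-generator output, a subverted encoder can emit uniformly random strings in its place, leaving the support with probability close to $1$ while no PPTM watchdog notices), so some support-testability hypothesis of the kind you impose is genuinely needed. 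It is also exactly what holds in the only place the paper uses the lemma: for the family $\mathcal{F}(\SES,\SIG)$ the relevant test is $\combine{\SIG}{\Vrfy}(\pk,c,\sigma)=1$ with $\pk$ hardwired into the watchdog, which is efficient, and honest outputs always pass it by the reliability of $\SIG$; the forger only needs validity under $\Vrfy$, which is implied by support membership.

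Two smaller points. First, with the paper's $\tfrac12$-shifted advantage your watchdog gives $\adv^{\encwatch}_{\Watch,\ASA,\SES}(\kappa)=p_i/2$, hence $p_i\leq 2\cdot\InSec^{\encwatch}_{\ASA,\SES}(\kappa)$; the claim that the factor two ``can be absorbed'' is fine for every asymptotic use made of the lemma, but under the paper's own definition of $\InSec^{\encwatch}$ you do not literally obtain the stated constant --- that slack (or a switch to the two-world advantage, as in \cite{Berndt2016}) should be stated rather than waved away. Second, note that the lemma quantifies over ciphertexts produced from a state $\sigma$, and your watchdog threads the state through $\ell$ oracle calls exactly as $\combine{\ASA}{\Enc}^{\ell}$ does, which is the right reading; since the oracle in $\EncASA-Dist$ hands the state back to the caller, this is legitimate. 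With these qualifications your proposal is correct and matches the intended argument.
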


\begin{lemma}
  \label{lem:reliable_asa}
    Let $\ASA$ be a universal and reliable algorithm substitution attack against
    the symmetric encryption scheme $\SES$. 
  Then for every $\kappa$, the probability that the encoder $\combine{\ASA}{\Enc}$ produces a ciphertext,
  which was not provided by the encryption oracle, is at least
   \begin{align*} 1-\unrel_{\ASA,\SES}(\kappa)-\frac{(\combine{\ASA}{\outl}(\kappa)\cdot \combine{\ASA}{\query}(\SES,\kappa))^{\combine{\ASA}{\outl}(\kappa)}}{2^{\combine{\ASA}{\ml}(\kappa)}}.
  \end{align*}
\end{lemma}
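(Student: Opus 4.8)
The plan is to count, via a pigeonhole/counting argument, how many distinct output sequences the encoder $\combine{\ASA}{\Enc}^{\combine{\SES}{\Enc}(\cdot,\cdot;\cdot)}$ can possibly produce when it is forbidden from ever outputting a ciphertext that it did not receive from its oracle, and to compare this with the number of distinct messages it must be able to encode reliably. First I would fix $\kappa$ and abbreviate $\ell:=\combine{\ASA}{\outl}(\kappa)$, $q:=\combine{\ASA}{\query}(\SES,\kappa)$, and $L:=\combine{\ASA}{\ml}(\kappa)$. Consider a hypothetical ``honest-output'' encoder: a variant of $\combine{\ASA}{\Enc}^{\ell}$ that behaves exactly like the real one except that whenever it would output a ciphertext not among those returned by its oracle, it aborts (or outputs $\bot$). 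The key observation is that, conditioned on not aborting, every one of the $\ell$ documents it outputs is one of the (at most) $q$ ciphertexts returned during that document's encoding call — so the entire length-$\ell$ transcript is determined by a choice of at most $q$ possibilities per step, giving at most $(\ell\cdot q)^{\ell}$ distinct output sequences in total (the factor $\ell$ absorbs the fact that over the whole run at most $\ell\cdot q$ oracle answers are seen). I would make this bound precise by a union over the choices of which oracle answer is emitted at each of the $\ell$ positions.

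Next I would combine this counting bound with reliability. Fix a key $\ak$ in the support of $\combine{\ASA}{\Gen}(1^{\kappa})$; by the definition of $\unrel_{\ASA,\SES}$, for every message $\am\in\{0,1\}^{L}$ the decoder $\combine{\ASA}{\Ext}(\ak,\cdot)$ recovers $\am$ from $\combine{\ASA}{\Enc}^{\ell}(\ak,\am,k,m,\sigma)$ with probability at least $1-\unrel_{\ASA,\SES}(\kappa)$. If the encoder, on input $\am$, produced only ``honest'' output sequences (ones realizable by the honest-output encoder above), then the output sequence would, with probability $\ge 1-\unrel_{\ASA,\SES}(\kappa)$, lie in the set $S_{\am}:=\{\,\vec c : \combine{\ASA}{\Ext}(\ak,\vec c)=\am\,\}$, and these sets are disjoint across distinct $\am$. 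Since there are $2^{L}$ distinct messages and the honest-output encoder has at most $(\ell q)^{\ell}$ possible outputs total, a simple averaging argument shows that for at least a $\bigl(1-(\ell q)^{\ell}/2^{L}\bigr)$-fraction of messages $\am$, \emph{no} honest output sequence decodes to $\am$ — equivalently, that on input such an $\am$ the honest-output encoder aborts with probability $\ge 1-\unrel_{\ASA,\SES}(\kappa)$, i.e.\ the real encoder emits an ``unexpected'' ciphertext with at least that probability. Formalizing ``for at least this fraction of $\am$'' versus ``with at least this probability over $\am$ and coins'' is the bookkeeping step; taking the worst-case $\am$ (as in the statement of $\unrel$) and unwinding the inequalities yields the claimed lower bound $1-\unrel_{\ASA,\SES}(\kappa)-(\ell q)^{\ell}/2^{L}$ on the probability that $\combine{\ASA}{\Enc}$ produces a ciphertext not provided by the encryption oracle.

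The hard part will be pinning down the precise combinatorial bound on the number of realizable honest outputs so that it matches $(\combine{\ASA}{\outl}(\kappa)\cdot\combine{\ASA}{\query}(\SES,\kappa))^{\combine{\ASA}{\outl}(\kappa)}$ exactly rather than something slightly larger. Two subtleties drive this: (i) $\combine{\ASA}{\query}(\SES,\kappa)$ bounds only the \emph{expected} number of oracle calls per document, not the worst case, so one must either invoke a Markov-type argument (restricting to runs where the total number of calls is at most $\ell q$, at the cost of an extra additive term that one must argue is already absorbed) or interpret the statement as a bound against the expected branching; and (ii) the encoder is stateful and universal, so across the $\ell$ documents the relevant oracle answers accumulate, and one has to argue that no more than $\ell q$ distinct ciphertexts total are ever available to be emitted. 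I expect the intended reading is the clean one where $\ell q$ bounds the total number of oracle answers over the whole run and each emitted document is one of them, making the count $(\ell q)^{\ell}$ immediate; the remaining work is then purely the averaging-over-messages step, which is routine. I would also note that universality is used precisely to ensure the encoder cannot ``invent'' a fresh valid ciphertext without the oracle (it has no description of $\combine{\SES}{\Enc}$), which is what makes the honest-output restriction meaningful in the first place.
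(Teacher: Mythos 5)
Your overall strategy (a counting/pigeonhole bound on the ``honest-output'' encoder combined with the disjointness of the decoding sets) is indeed the route the paper intends: the paper gives no proof of Lemma~\ref{lem:reliable_asa} and defers to Lemmas~12 and~13 of \cite{Berndt2016}, which argue exactly this way in the steganographic setting. However, your plan has a genuine gap at its central step. You pass from ``at each of the $\ell:=\combine{\ASA}{\outl}(\kappa)$ positions the emitted ciphertext is one of at most $\ell\cdot q$ oracle answers of that run'' to ``the honest-output encoder has at most $(\ell q)^{\ell}$ possible output sequences \emph{in total}'', and you then intersect this fixed set with the $2^{\combine{\ASA}{\ml}(\kappa)}$ disjoint sets $S_{\am}$. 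But the pool of oracle answers is not a fixed set: in the universal model the encoder chooses the whole query $(k,m;r)$, so $\combine{\SES}{\Enc}(k,m;r)$ is a deterministic function of a query that may depend on $\am$, $\ak$ and the encoder's coins. Over all messages and coins the range of the honest-output encoder is therefore not bounded by $(\ell q)^{\ell}$, and the averaging over $2^{\combine{\ASA}{\ml}(\kappa)}$ messages does not apply to it as written. What must be argued is that, \emph{conditioned} on every emitted ciphertext being oracle-provided, the extractor's input depends on $\am$ only through the at most $(\ell q)^{\ell}$ choices of which answers are emitted at which positions; in the steganographic original this decoupling is precisely where one uses that channel samples are drawn independently of the hidden message, and it does not transfer verbatim to an oracle whose answers the encoder pins down via its chosen coins. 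As a stress test for your count: an encoder that writes $\am$ into the coins $r$ of a single query and outputs the returned ciphertext emits only oracle-provided ciphertexts, yet its outputs range over up to $2^{\combine{\ASA}{\ml}(\kappa)}$ values, far exceeding $(\ell q)^{\ell}$. So the conditioning/decoupling step is the heart of the lemma and is exactly what your plan leaves out.

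Two further points. First, you correctly flag that $\combine{\ASA}{\query}(\SES,\kappa)$ bounds only the \emph{expected} number of oracle calls, but you do not resolve it; a Markov truncation introduces an additive loss that has to be reconciled with the exact bound in the statement, and saying it is ``already absorbed'' is not an argument. Second, the inequality you obtain from the averaging is inherently a statement for a uniformly random $\am$ (equivalently, for all but a $(\ell q)^{\ell}/2^{\combine{\ASA}{\ml}(\kappa)}$ fraction of messages), which is also how the forging theorem later uses the lemma; your closing suggestion to ``take the worst-case $\am$'' does not follow from the averaging, since for an individual fixed $\am$ the encoder may well always emit oracle-provided ciphertexts that decode correctly.
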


We will now show how one can modify an existing symmetric encryption scheme $\SES$ with
the help of a signature scheme $\SIG$ into a family of  encryption schemes
such that no universal ASA can achieve a super-logarithmic rate on
all of these encryption schemes. The construction is very similar to the
construction used in \cite{Berndt2016}.

A \emph{signature scheme}
$\SIG=(\combine{\SIG}{\Gen},\combine{\SIG}{\Sign},\combine{\SIG}{\Vrfy})$
is a triple of
probabilistic polynomial-time algorithms with the following properties:
\begin{itemize}
\item The \emph{key generator} $\combine{\SIG}{\Gen}$ produces upon input $1^{\kappa}$ a
  pair $(\pk,\sk)$ of keys with $|\pk|=|\sk|=\kappa$. We call $\pk$ the
  \emph{public key} and $\sk$ the \emph{secret key}.
\item The \emph{signing algorithm} $\combine{\SIG}{\Sign}$ takes as input the secret key
  $\sk$ and a message $m\in \{0,1\}^{\combine{\SIG}{\ml}}$ of length
  $\combine{\SIG}{\ml}(\kappa)$ and produces a signature
  $\sigma\in \{0,1\}^{\combine{\SIG}{\sigl}(\kappa)}$ of length
  $\combine{\SIG}{\sigl}(\kappa)$. 
\item The \emph{verifying algorithm} $\combine{\SIG}{\Vrfy}$ takes as input the public
  key $\pk$, the message $m$ and a signature $\sigma$ and outputs a bit
  $b$. 
\end{itemize}
We say that $(\Gen,\Sign,\Vrfy)$ is \emph{reliable}, if
$\Vrfy(\pk,m,\Sign(\sk,m))=1$ for all $\pk$, $\sk$ and $m$.

A \emph{forger} $\Forg$ is a probabilistic polynomial time algorithm
that upon input $\pk$ and oracle access to $\Sign_{\sk}$ tries to
produce a pair $(m,\sigma)$ such that
$\Vrfy_{\pk}(m,\sigma)=1$. Formally, this is defined via the following
experiment $\Sig-Forge$:

\myAlgorithm{$\Sig-Forge_{\Forg,\SIG}(\kappa)$}{length $\kappa$}{
  Forger $\Forg$, Signature
 Scheme $\SIG=(\Gen,\Sign,\Vrfy)$}{
\State $(\pk,\sk) \gets \Gen(1^{\kappa})$
\State $(m,\sigma) \gets \Forg^{\Sign_{\sk}}(\pk)$ 
\State Let $Q$ be the set of messages given to $\Sign_{\sk}$ by $\Forg$
\If{$m\not\in Q$ and $\Vrfy_{\pk}(m,\sigma)=1$}
   \AlgReturn{1}
\Else
 \AlgReturn{0}
\EndIf
}{Signature-Forging Experiment}

A signature scheme $\SIG$ is called \emph{existentially
  unforgeable}, if for every forger $\Forg$, there is a negligible
function $\negl$ such that
\begin{align*}
  \adv^{\sig}_{\Forg,\SIG}(\kappa) :=
  \Pr[\Sig-Forge_{\Forg,\SIG}(\kappa)=1]\leq \negl(\kappa).
\end{align*}

The maximal advantage of any forger against $\SIG$ is
called the \emph{insecurity} of $\SIG$ and is defined as
\begin{align*}
  \insec_{\SIG}^{\sig}(\kappa) := \max_{\Forg}\{\adv^{\sig}_{\Forg,\SIG}(\kappa)\}.
\end{align*}

For $(\pk,\sk)\in
\supp(\combine{\SIG}{\Gen}(1^{\kappa}))$, let $\SES_{\pk,\sk}$ be the
encryption scheme with 
\begin{itemize}
\item $\combine{\SES_{\pk,\sk}}{\Gen}=\combine{\SES}{\Gen}$, \ie the key
  generation algorithm remains the same.
\item The encryption algorithm $\combine{\SES_{\pk,\sk}}{\Enc}$ is given
  as:

  \inputAlgorithm{$\combine{\SES_{\pk,\sk}}{\Enc}$}{key $k$,
      message $m$}{
    \State $c \gets \combine{\SES}{\Enc}(k,m)$ 
    \State  $\gets \combine{\SIG}{\Sign}(\sk,c)$
    \State \AlgReturn{$(c,\sigma)$}}{Encryption Algorithm}
\item Similarly, the decryption algorithm
  $\combine{\SES_{\pk,\sk}}{\Dec}$ is given as:

  \inputAlgorithm{$\combine{\SES_{\pk,\sk}}{\Dec}$}{key $k$,
      ciphertext $(c,\sigma)$}{
    \If {$\combine{\SIG}{\Vrfy}(\pk,c,\sigma)=1$}
    \State \AlgReturn{$\combine{\SES}{\Dec}(k,c)$}
    \Else \AlgReturn{$\bot$}
    \EndIf}{Decryption Algorithm}

\end{itemize}

By using this family
\begin{align*}
\mathcal{F}(\SES,\SIG)=\{\SES_{\pk,\sk}\}_{(\pk,\sk)\in     \supp(\combine{\SIG}{\Gen}(1^{\kappa}))},
\end{align*}

we can derive the following
upper bound on the rate of each universal ASA:
\begin{theorem}
  Let $\SES$ be a symmetric encryption scheme, $\SIG$
  be a signature scheme and $\mathcal{F}=\mathcal{F}(\SES,\SIG)$ be defined as
  above. For every universal algorithm substitution attack $\ASA$  against $\SES$, 
 there exist a forger $\Forg$ on $\SIG$ with
  advantage at least
  \begin{align*}
1-\InSec^{\encwatch}_{\ASA,\mathcal{F}}(\kappa)-\unrel_{\ASA,\mathcal{F}}(\kappa)-\varphi(\ASA,\kappa)
  \end{align*}
  for every $\kappa$, where
  \begin{align*}
    \varphi(\ASA,\kappa)=\frac{(\combine{\ASA}{\outl}(\kappa)\cdot \combine{\ASA}{\query}(\mathcal{F},\kappa))^{\combine{\ASA}{\outl}(\kappa)}}{2^{\combine{\ASA}{\ml}(\kappa)}}.
  \end{align*}
\end{theorem}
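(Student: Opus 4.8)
The plan is to transform the universal ASA $\ASA=(\combine{\ASA}{\Gen},\combine{\ASA}{\Enc},\combine{\ASA}{\Ext})$ against $\SES$ into a forger $\Forg$ for $\SIG$ by running the subverted encoder against a simulated copy of a random member $\SES_{\pk,\sk}\in\mathcal{F}$ and reading a valid signature out of its output ciphertexts. On input $\pk$ with access to a signing oracle $\combine{\SIG}{\Sign}_{\sk}$, the forger $\Forg$ samples $\ak\gets\combine{\ASA}{\Gen}(1^{\kappa})$, a uniformly random attacker message $\am\gets\{0,1\}^{\combine{\ASA}{\ml}(\kappa)}$, an encryption key $k\gets\combine{\SES}{\Gen}(1^{\kappa})$, and an arbitrary $m\in\{0,1\}^{\combine{\SES}{\ml}(\kappa)}$, and then runs the complete encoder output for $\ell=\combine{\ASA}{\outl}(\kappa)$ rounds on $(\ak,\am,k,m,\dots,m)$. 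Every encryption-oracle query $(k',m';r')$ made by $\combine{\ASA}{\Enc}$ is answered by $\Forg$ itself computing $c:=\combine{\SES}{\Enc}(k',m';r')$ (it knows the code of $\SES$, since $\ASA$ is universal), querying its signing oracle on $c$ to obtain $\sigma$, and returning $(c,\sigma)$; this reproduces the behaviour of $\combine{\SES_{\pk,\sk}}{\Enc}$. Let $(c_{1},\sigma_{1}),\dots,(c_{\ell},\sigma_{\ell})$ be the output ciphertexts and let $Q$ be the set of messages $\Forg$ forwarded to its signing oracle, which is precisely the set of inner ciphertexts the simulated encryption oracle ever returned. Finally $\Forg$ looks for an index $i$ with $\combine{\SIG}{\Vrfy}(\pk,c_{i},\sigma_{i})=1$ and $c_{i}\notin Q$, outputs $(c_{i},\sigma_{i})$, and aborts if no such $i$ exists. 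Clearly $\Forg$ runs in polynomial time.

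For the analysis I would fix $(\pk,\sk)\in\supp(\combine{\SIG}{\Gen}(1^{\kappa}))$, note $\SES_{\pk,\sk}\in\mathcal{F}$, and apply the two structural lemmas to $\SES_{\pk,\sk}$; since $\Forg$'s simulation reproduces the distribution of $\combine{\ASA}{\Enc}$ interacting with $\combine{\SES_{\pk,\sk}}{\Enc}$, the lemmas transfer verbatim. Lemma~\ref{lem:secure_asa} gives that, except with probability $\le\InSec^{\encwatch}_{\ASA,\SES_{\pk,\sk}}(\kappa)\le\InSec^{\encwatch}_{\ASA,\mathcal{F}}(\kappa)$, every $(c_{i},\sigma_{i})\in\supp(\combine{\SES_{\pk,\sk}}{\Enc}(k,m))$, hence $\combine{\SIG}{\Vrfy}(\pk,c_{i},\sigma_{i})=1$. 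Lemma~\ref{lem:reliable_asa} gives that, except with probability $\le\unrel_{\ASA,\SES_{\pk,\sk}}(\kappa)+\varphi(\ASA,\kappa)\le\unrel_{\ASA,\mathcal{F}}(\kappa)+\varphi(\ASA,\kappa)$ (using $\combine{\ASA}{\query}(\SES_{\pk,\sk},\kappa)\le\combine{\ASA}{\query}(\mathcal{F},\kappa)$ inside $\varphi$), at least one output ciphertext was never returned by the encryption oracle. A union bound then produces a verifying output ciphertext $(c^{*},\sigma^{*})$ that the oracle never handed out, with probability $\ge 1-\InSec^{\encwatch}_{\ASA,\mathcal{F}}(\kappa)-\unrel_{\ASA,\mathcal{F}}(\kappa)-\varphi(\ASA,\kappa)$; this conditional bound does not depend on $(\pk,\sk)$, so it also lower-bounds $\adv^{\sig}_{\Forg,\SIG}(\kappa)$ once we know that $(c^{*},\sigma^{*})$ is genuinely a forgery in the sense of the forging experiment, i.e.\ that $c^{*}\notin Q$.

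The main obstacle is exactly this last clause. Lemma~\ref{lem:reliable_asa} only asserts that some output pair $(c^{*},\sigma^{*})$ was never returned by the encryption oracle; it does not a priori rule out that $c^{*}$ was already signed (so $c^{*}\in Q$) with $\sigma^{*}$ a new valid signature on it, which would be a \emph{strong} forgery not covered by the weak forging experiment. I would close this by exploiting that $\combine{\ASA}{\Enc}$ has only oracle access to $\combine{\SES_{\pk,\sk}}{\Enc}$ and hence obtains valid signatures only as oracle answers, so the counting argument underlying Lemma~\ref{lem:reliable_asa} can be run on the \emph{inner} ciphertexts alone: at most $\combine{\ASA}{\outl}(\kappa)\cdot\combine{\ASA}{\query}(\mathcal{F},\kappa)$ distinct inner ciphertexts are ever returned, so at most $(\combine{\ASA}{\outl}(\kappa)\cdot\combine{\ASA}{\query}(\mathcal{F},\kappa))^{\combine{\ASA}{\outl}(\kappa)}$ inner-ciphertext sequences are reachable without producing a fresh one, and since the extractor must recover one of $2^{\combine{\ASA}{\ml}(\kappa)}$ messages the term $\varphi(\ASA,\kappa)$ already absorbs the probability of staying inside this set; thus the fresh output ciphertext may be taken with $c^{*}\notin Q$, making $(c^{*},\sigma^{*})$ a legitimate forgery. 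A secondary technical point worth a sentence is that $\Forg$ cannot prescribe the coins its signing oracle uses, so when $\combine{\ASA}{\Enc}$ supplies signing randomness inside $r'$ the simulated answer is a signature with fresh coins rather than with the requested coins; since a universal ASA accesses the scheme only through its input/output interface and the output distribution of $\combine{\SES_{\pk,\sk}}{\Enc}(k',m')$ is unaffected, this may be regarded as the abstract oracle sampling its own coins and is without loss of generality. Everything else is routine accounting of the three error terms.
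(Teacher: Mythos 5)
Your forger is exactly the paper's: sample $\ak,\am,k,m$, run the subverted encoder, answer each oracle query $(k',m';r')$ by computing $c=\combine{\SES}{\Enc}(k',m';r')$ and signing $c$ with the signing oracle, and combine Lemma~\ref{lem:secure_asa} and Lemma~\ref{lem:reliable_asa} by a union bound; so in its core your proposal follows the same route as the paper, and your scanning of all $\combine{\ASA}{\outl}(\kappa)$ output pairs for a valid fresh one is if anything a more faithful use of Lemma~\ref{lem:reliable_asa} than the paper's focus on the first output document.

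The two caveats you add are where you and the paper diverge. The weak-versus-strong forgery issue you raise is genuine, and the paper silently elides it: Lemma~\ref{lem:reliable_asa} only guarantees an output pair $(\widehat c,\widehat\sigma)$ different from every pair handed back by the oracle, whereas the forging experiment demands the \emph{message} $\widehat c\notin Q$. Your attempted repair, however, does not go through as stated: the extractor reads the full pairs, not only the inner ciphertexts, so the signature coordinates can carry the embedded information while every inner ciphertext stays in $Q$ --- for instance the encoder may ask the same $(k',m';r')$ several times, receive the same $c$ with several distinct signatures (your simulated oracle signs with fresh coins each time), and encode bits in the choice among them, or it may maul signatures when $\SIG$ permits this. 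Hence the number of extractor inputs reachable without a fresh inner ciphertext is not bounded by the numerator of $\varphi(\ASA,\kappa)$, and $\varphi$ does not absorb this event. The clean fix is to require $\SIG$ to be strongly unforgeable (such schemes also exist under one-way functions, so Corollary~\ref{cor:bound} is untouched) or to restate the conclusion for the strong forging experiment. Similarly, your ``without loss of generality'' for the signing coins is quicker than warranted, since in the real family $\combine{\SES_{\pk,\sk}}{\Enc}(k',m';r')$ is deterministic in $r'$ and a repeated query distinguishes the simulation from the real oracle, so the two lemmas do not transfer literally verbatim; but both of these imprecisions are inherited from the paper, whose own proof passes over them without comment.
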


\begin{proof}
  The proof is analogue to the proof of \cite[Theorem 13]{Berndt2016}.
  
    Fix $\kappa\in \nats$ and $(\pk,\sk)\in \supp(\combine{\SIG}{\Gen}(1^{\kappa}))$. We
  will now construct an forger on $\SIG$ with the help of the algorithm
  substitution attacker
  $\ASA$. Choose a random attacker message $\am^{*}\gets
  \{0,1\}^{\combine{\ASA}{\ml}(\kappa)}$, 
  a random attacker key $\ak^{*}\gets \combine{\ASA}{\Gen}(1^{\kappa})$,
  a random message $m^{*}\gets  \{0,1\}^{\combine{\SES}{\ml}(\kappa)}$
  and a random key $k^{*}\gets \combine{\SES}{\Gen}(1^{\kappa})$.
  
  The forger now simulates the run of the algorithm substitution attack 
  $\combine{\ASA}{\Enc}^{\combine{\SES_{\pk,\sk}}{\Enc}(\cdot,\cdot;\cdot)}(\ak^{*},\am^{*},k^{*},m^{*})$
  against the symmetric encryption scheme 
  $\SES_{\pk,\sk}$. Whenever $\combine{\ASA}{\Enc}$ makes an access
  $(k,m;r)$ to its encryption oracle, the forger computes
  $c=\combine{\SES}{\Enc}(k,m;r)$ and uses its
  signing oracle $\combine{\SIG}{\Sign}_{\sk}$ upon $c$. This
  returns a valid signature $\sigma$ for $c$ and the forger returns
  $(c,\sigma)$ to $\combine{\ASA}{\Enc}$. 
  This simulation hence yields the same result as
  $\combine{\ASA}{\Enc}^{\combine{\SES_{\pk,\sk}}{\Enc}(\cdot,\cdot;\cdot)}(\ak^{*},\am^{*},k^{*},m^{*})$.
  Denote the first document
  produced by the run of the algorithm substitution attack 
  $\combine{\ASA}{\Enc}^{\combine{\SES_{\pk,\sk}}{\Enc}(\cdot,\cdot;\cdot)}(\ak^{*},\am^{*},k^{*},m^{*})$
  as 
  $(\widehat{c},\widehat{\sigma})$. By \autoref{lem:secure_asa}, the
  probability that the pair 
  $(\widehat{c},\widehat{\sigma})$ does not belong to to the support $\supp(\combine{\SES_{\pk,\sk}}{\Enc}(k,m))$
  (\ie it is no valid ciphertext-signature pair) is bounded by
  $\InSec^{\encwatch}_{\ASA,\SES_{\pk,\sk}}(\kappa)$.
  Furthermore, \autoref{lem:reliable_asa}
  implies that the probability that $(\widehat{c},\widehat{\sigma})$ is equal to any
  $(c,\sigma)$ which was given to the ASA is at most
$\unrel_{\ASA,\SES_{\pk,\sk}}(\kappa)+\varphi(\ASA,\kappa)$. 
We can thus conclude that 
  with probability
  \begin{align*}
    1-&\InSec^{\encwatch}_{\ASA,\SES_{\pk,\sk}}(\kappa)-\unrel_{\ASA,\SES_{\pk,\sk}}(\kappa)-\\
      &\frac{(\combine{\ASA}{\outl}(\kappa)\cdot \combine{\ASA}{\query}(\SES_{\pk,\sk},\kappa))^{\combine{\ASA}{\outl}(\kappa)}}{2^{\combine{\ASA}{\ml}(\kappa)}},
  \end{align*}
  the ciphertext-signature pair $(\widehat{c},\widehat{\sigma})$ is a valid
  ciphertext-signature pair and was not produced by the oracle $\combine{\SIG}{\Sign}_{\sk}$
 The advantage of the forger against the
  signature scheme $\SIG$ is thus at least
  \begin{align*}
    1-&\InSec^{\encwatch}_{\ASA,\SES_{\pk,\sk}}(\kappa)-\unrel_{\ASA,\SES_{\pk,\sk}}(\kappa)-\\
    &\frac{(\combine{\ASA}{\outl}(\kappa)\cdot \combine{\ASA}{\query}(\SES_{\pk,\sk},\kappa))^{\combine{\ASA}{\outl}(\kappa)}}{2^{\combine{\ASA}{\ml}(\kappa)}},
  \end{align*}
  The
  running time of the forger is polynomial in $\kappa$ due to the
  polynomial running time of $\combine{\ASA}{\Enc}$.
\end{proof}
This allows us to conclude the following corollary bounding the number
of bits embeddable into a single ciphertext by a universal algorithm
substitution attack. 
\begin{corollary}
\label{cor:bound}
  There is no universal algorithm substitution attack that embeds
  more than $\mathcal{O}(1)\cdot \log(\kappa)$ bits per ciphertext
  (unless one-way functions do not exist).
\end{corollary}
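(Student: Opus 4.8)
The plan is to read the corollary off the preceding theorem by a short quantitative argument. First I would fix the standing assumption: if one-way functions exist, then a pseudorandom function exists, hence so does a cpa-secure symmetric encryption scheme $\SES$, and --- by the standard construction --- an existentially unforgeable signature scheme $\SIG$. Form the family $\mathcal{F}=\mathcal{F}(\SES,\SIG)$ and let $\ASA$ be an arbitrary universal algorithm substitution attack that is indistinguishable and successful against every member of $\mathcal{F}$ (which an algorithm substitution attack must satisfy to be meaningful). Apply the preceding theorem to obtain a polynomial-time forger $\Forg$ against $\SIG$ of advantage at least $1-\InSec^{\encwatch}_{\ASA,\mathcal{F}}(\kappa)-\unrel_{\ASA,\mathcal{F}}(\kappa)-\varphi(\ASA,\kappa)$, where $\varphi(\ASA,\kappa)=(\combine{\ASA}{\outl}(\kappa)\cdot\combine{\ASA}{\query}(\mathcal{F},\kappa))^{\combine{\ASA}{\outl}(\kappa)}/2^{\combine{\ASA}{\ml}(\kappa)}$.

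The second step is to observe that the first two subtracted terms vanish asymptotically: by universality $\ASA^{\SES_{\pk,\sk}}$ is an indistinguishable and successful ASA against each $\SES_{\pk,\sk}\in\mathcal{F}$, so $\InSec^{\encwatch}_{\ASA,\mathcal{F}}$ and $\unrel_{\ASA,\mathcal{F}}$ are negligible; and since $\SIG$ is existentially unforgeable, $\adv^{\sig}_{\Forg,\SIG}(\kappa)$ is negligible as well. Combining these, $\varphi(\ASA,\kappa)\geq 1-\negl(\kappa)$ for a negligible function $\negl$.

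The third step is the arithmetic. Since $\combine{\ASA}{\Enc}$ is a PPTM, both its output length $\combine{\ASA}{\outl}(\kappa)$ and its expected number $\combine{\ASA}{\query}(\mathcal{F},\kappa)$ of oracle queries per invocation are bounded by a fixed polynomial in $\kappa$, so $\log\bigl(\combine{\ASA}{\outl}(\kappa)\cdot\combine{\ASA}{\query}(\mathcal{F},\kappa)\bigr)=\mathcal{O}(\log\kappa)$. Taking $\log_{2}$ of $\varphi(\ASA,\kappa)\geq 1-\negl(\kappa)$ gives $\combine{\ASA}{\ml}(\kappa)\leq\combine{\ASA}{\outl}(\kappa)\cdot\log\bigl(\combine{\ASA}{\outl}(\kappa)\cdot\combine{\ASA}{\query}(\mathcal{F},\kappa)\bigr)+\mathcal{O}(1)=\combine{\ASA}{\outl}(\kappa)\cdot\mathcal{O}(\log\kappa)$, i.e.\ the per-ciphertext rate $\combine{\ASA}{\ml}(\kappa)/\combine{\ASA}{\outl}(\kappa)$ is $\mathcal{O}(\log\kappa)$. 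As $\ASA$ was an arbitrary universal ASA, this is the claimed bound.

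I expect the only genuine friction to be bookkeeping around the instantiations: checking that the $\SES$ and $\SIG$ built from one-way functions really yield a family $\mathcal{F}$ for which the preceding theorem applies and for which $\InSec^{\encwatch}_{\ASA,\mathcal{F}}$ and $\unrel_{\ASA,\mathcal{F}}$ remain negligible (as opposed to being negligible only for each scheme separately), and pinning down the polynomial bounds on $\combine{\ASA}{\outl}$ and $\combine{\ASA}{\query}$. The substantive content --- that $\varphi(\ASA,\kappa)$ cannot be driven below $1-\negl(\kappa)$ unless $2^{\combine{\ASA}{\ml}(\kappa)}$ is already within a polynomial-to-the-$\combine{\ASA}{\outl}(\kappa)$ factor of $1$ --- is supplied by the theorem via \autoref{lem:secure_asa} and \autoref{lem:reliable_asa}, so no new ideas are needed at this stage.
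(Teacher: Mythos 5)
Your proposal is correct and matches the paper's intended derivation: the corollary is read off the preceding forger theorem exactly as you describe, by noting that under the existence of one-way functions an existentially unforgeable $\SIG$ exists, so the forger's advantage must be negligible, forcing $\varphi(\ASA,\kappa)\geq 1-\negl(\kappa)$ and hence $\combine{\ASA}{\ml}(\kappa)\leq \combine{\ASA}{\outl}(\kappa)\cdot\mathcal{O}(\log\kappa)$ from the polynomial bounds on $\combine{\ASA}{\outl}$ and $\combine{\ASA}{\query}$. The only cosmetic remark is that $\SES$ need not be cpa-secure for the argument, but this assumption is harmless.
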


\section{Conclusions}
In this work, we proved that ASAs in the strong undetectability model of
Bellare, Jaeger and Kane~\cite{bellare2015asa} are a special case of
stegosystems on a certain kind of channels described by symmetric
encryption schemes. This gives a rigorous proof of the well-known
connection between steganography and algorithm substitution attacks. We
make use of this relationship to show that a wide range of results on ASAs
are already present in the steganographic literature. Inspired by this
connection, we define \emph{universal ASAs} that work with no knowledge
on the internal implementation of the symmetric encryption schemes and
thus work for \emph{all} such encryption schemes with sufficiently large
min-entropy. As almost all known ASAs are universal, we investigate
their rate -- the number of embedded bits per ciphertext -- and prove
a logarithmic upper bound of this rate.

\bibliographystyle{ACM-Reference-Format}


\end{document}